\theoremstyle{plain}
\newtheorem{theorem}{Theorem}
\newtheorem*{theorem*}{Theorem}
\newtheorem{corollary}[theorem]{Corollary}
\newtheorem{lemma}[theorem]{Lemma}
\theoremstyle{definition}
\newtheorem{remark}[theorem]{Remark}
\newtheorem{example}[theorem]{Example}
\newtheorem{definition}[theorem]{Definition}
\newcommand{\UU}{\mathcal{U}}
\newcommand{\refl}{\mathsf{refl}}
\newcommand{\ct}{%
  \mathchoice{\mathbin{\raisebox{0.5ex}{$\displaystyle\centerdot$}}}%
             {\mathbin{\raisebox{0.5ex}{$\centerdot$}}}%
             {\mathbin{\raisebox{0.25ex}{$\scriptstyle\,\centerdot\,$}}}%
             {\mathbin{\raisebox{0.1ex}{$\scriptscriptstyle\,\centerdot\,$}}}
}
\newcommand{\istype}[1]{\mathsf{is}\mbox{-}{#1}\mbox{-}\mathsf{type}}
\newcommand{\trunc}[2]{\mathopen{}\left\Vert #2\right\Vert_{#1}\mathclose{}}
\newcommand{\tproj}[3][]{\mathopen{}\left|#3\right|_{#2}^{#1}\mathclose{}}
\newcommand{\defeq}{\vcentcolon\equiv}
\newcommand{\glue}{\ensuremath{\mathsf{glue}}}
\newcommand{\inl}{\ensuremath{\mathsf{inl}}}
\newcommand{\inr}{\ensuremath{\mathsf{inr}}}
\newcommand{\bool}{\mathbf 2}
\newcommand{\N}{\mathbb N}
\newcommand{\List}{\mathsf{List}}
\newcommand{\unit}{\mathbf 1}
\newcommand{\quotXY}[2]{{#1} \! \mathop{\sslash} \! {#2}}
\newcommand{\quotX}[1]{\quotXY{#1}{\scriptstyle \mathop{\leadsto}}}
\newcommand{\quot}{\quotX{A}}
\newcommand{\setquotXY}[2]{{#1} \! \mathop{\slash} \! {#2}}
\newcommand{\setquotX}[1]{\setquotXY{#1}{\scriptstyle \mathop{\leadsto}}}
\newcommand{\setquot}{\setquotX{A}}
\newcommand{\setquotsim}{A \! \mathop{\slash} \! {\scriptstyle \mathop{\sim}}}
\newcommand{\ap}{\ensuremath{\mathsf{ap}}}
\newcommand{\WB}{\mathsf{WB}}
\newcommand{\LC}{\mathsf{LC}}
\newcommand{\CS}{\mathsf{C}}
\newcommand{\CR}{\mathsf{CR}}
\newcommand{\RINV}{\mathsf{RINV}}
\newcommand{\LINV}{\mathsf{LINV}}
\newcommand{\INV}{\mathsf{INV}}
\newcommand{\wb}{\mathsf{wb}}
\newcommand{\listext}[1]{{#1}_L}
\newcommand{\leadstostar}{\leadsto^{*}} %TODO maybe redefine to save space
\newcommand{\leadstofrom}{\leftrightsquigarrow}
\newcommand{\leadsfrom}{\mathrel{\reflectbox{$\leadsto$}}}
\newcommand{\leadsfromstar}{\mathrel{\reflectbox{$\leadsto^{*}$}}}
\newcommand{\leadstofromstar}{\leftrightsquigarrow^{*}} %TODO maybe redefine
\newcommand{\torel}{\mathop{\leadsto}}
\newcommand{\from}{\mathop{\leadsfrom}}
\newcommand{\tofrom}{\mathop{\leadstofrom}}
\newcommand{\tostar}{\mathop{\leadstostar}}
\newcommand{\fromstar}{\mathop{\leadsfromstar}}
\newcommand{\tofromstar}{\mathop{\leadstofromstar}}
\newcommand{\tofromzero}{\tofrom^0}
\newcommand{\tozero}{\torel^0}
\newcommand{\To}{\mathop{\Rightarrow}}
\newcommand{\Tostar}{\mathop{\xRightarrow{*}}}
\newcommand{\Tofrom}{\mathop{\Leftrightarrow}}
\newcommand{\Tofromstar}{\mathop{\xLeftrightarrow{*}}}
\newcommand{\acc}[1]{\mathsf{acc}^{\mathsmaller{#1}}}
\newcommand{\cons}{\cdot}
\newcommand{\consLong}{\mathsf{snoc}}
\newcommand{\length}{\mathsf{length}}
\newcommand{\hcolim}{\mathsf{hcolim}}
\DeclareFontFamily{U}{wasysmall}{}
\DeclareFontShape{U}{wasysmall}{m}{n}{
	<-5.5> s*[1.4] wasy5
	<5.5-6.5> s*[1.4] wasy6
	<6.5-7.5> s*[1.4] wasy7
	<7.5-8.5> s*[1.4] wasy8
	<8.5-9.5> s*[1.4] wasy9 % seems it does not matter which number is in []...
	<9.5-> s*[1.4] wasy10
}{}
\newcommand{\fhg}{\mathsf{F}}
\newcommand{\wasSigmaNull}{A}
\title[]{A Rewriting Coherence Theorem with Applications in Homotopy Type Theory}%
\thanks{Funding Acknowledgment: This work was supported by The Royal Society, grant reference URF\textbackslash R1\textbackslash 191055.}
\author{Nicolai Kraus \and Jakob von Raumer}
\begin{document}

\begin{abstract}
Higher-dimensional rewriting systems are tools to analyse the structure of formally
reducing terms to normal forms, as well as comparing the different reduction paths
that lead to those normal forms. This higher structure can be captured by finding
a \emph{homotopy basis} for the rewriting system.
We show that the basic notions of \emph{confluence} and \emph{wellfoundedness}
are sufficient to recursively build such a homotopy basis, with a construction
reminiscent of an argument by Craig C. Squier. We then go on to translate this
construction to the setting of \emph{homotopy type theory}, where managing
equalities between paths is important in order to construct functions which are
\emph{coherent} with respect to higher dimensions. Eventually, we apply the result
to approximate a series of open questions in homotopy
type theory,
such as the characterisation of the homotopy groups of the
\emph{free group on a set} and the \emph{pushout of 1-types}.

This paper expands on our previous conference contribution \emph{Coherence via Wellfoundedness} by laying out the construction in the language of higher-dimensional rewriting.
\end{abstract}

\maketitle

\tableofcontents

\section{Introduction} \label{sec:intro}

\subsection{Confluence and coherence in higher dimensional rewriting}
\label{subsec:intro-1}

In classical mathematics and computer science, a relation $\leadsto$ on a set $M$ is called \emph{terminating} or \emph{Noetherian}
if there is no infinite sequence $x_0 \leadsto x_1 \leadsto x_2 \leadsto \ldots$.
This is a standard property of term rewriting (or reduction) systems such as the \emph{typed lambda calculus} \cite{barendregt1992handbook} and ensures that any term  can be fully reduced, i.e.\ after finitely many steps an irreducible term (a \emph{normal form}) is reached. 
Similarly, \emph{confluence} is the property that, whenever one has $x \leadsfromstar w \leadstostar y$, there is a $z$ such that $x \leadstostar z \leadsfromstar y$. 
This property expresses that the final result of a sequence of reductions is independent of the order in which reduction steps are performed.
Together, termination and confluence guarantee that every term has a unique normal form.
% (\emph{strong normalisation}). %%% no, this often just means termination, right? (cf Vikraman's email)

We can go up one level:
Rather than asking whether two reduction sequences give the same result, we can ask whether two parallel reductions $u, v \in (x \leadstostar y)$ are equal, or related, in some appropriate sense; in other words, we can ask whether the system of reduction steps is \emph{coherent}, i.e.\ whether different steps ``fit together''.
One way to give meaning to the question is to consider (higher) rewriting steps between reduction sequences of the form $\alpha \in (u \To v)$ and identify a set of ``good'' steps.
If any reduction sequence $u$ can be rewritten into any sequence $v$ parallel to it by concatenating or pasting ``good'' steps or their inverses, 
the set of ``good'' reduction steps is called a \emph{homotopy basis}.

Termination and confluence of the reduction relation $\leadsto$ also help with the construction of homotopy bases.
As an example, Newman's lemma shows that two reduction sequences starting from a single object $x$ can be completed to parallel reduction sequences ending in the same object, and
Newman \cite{newman1942theories} as well as Huet \cite{Huet1980ConfluentRA} essentially demonstrate that the resulting shape can be filled up with smaller shapes which witness local confluence.
Going further, Newman \cite{newman1942theories} also shows that a closed zig-zag of reductions is always deformable into the empty cycle using local confluence diagrams,%
\footnote{We 
	%The current authors 
thank Vincent van Oostrom for pointing this out and for a variety of further helpful remarks; we also refer to the acknowledgements at the end of the paper.}
and a generalisation of this observation will be central to the current paper.

An argument in the same direction was presented by Squier \cite{squierOtto1987word,squier1987word}.
He constructed homotopy bases to show that a monoid with a finite presentation which is terminating and confluent satisfies the homological finiteness condition known as $\mathsf{(FP)_3}$. 
Squier's original application was that monoids can have a decidable word problem without having a finite, terminating, and confluent presentation.
A further improvement was made by Otto and Kobayashi using new ideas by Squier \cite{squier1994finiteness}.
Later, constructions based on Squier's arguments were used to solve various coherence problems, for example 
for Gray- \cite{forest_et_al:LIPIcs:2018:9185} and
monoidal categories \cite{guiraud2012coherence} as well as for
Artin \cite{gaussent2015coherent},
plactic \cite{hage2017knuth},
and general monoids \cite{guiraud_et_al:LIPIcs:2013:4064}.
The aim of the current paper is to show that a rewriting result, reminiscent of Newman's and Squier's, can also be applied in homotopy type theory.

\subsection{Higher-dimensional structures in type theory}
\label{subsec:intro-2}

Higher-dimensional structures also naturally appear in \emph{Martin-L\"of type theory}%
%~\cite{martinlof72,martinlof73predicative,martin1975intuitionistic,martinlof84sambin}, 
~\cite{martinlof84sambin}, 
where types are $\infty$-groupoids (globular~\cite{lumsdaine_weakOmegaCatsFromITT,bg:type-wkom}, simplicial~\cite{kraus_generaluniversalproperty,nicolai:thesis}, internal~\cite{finster_internal,ann-cap-kra:two-level}) and universes are $(\infty,1)$-categories (cf.~\cite{ann-cap-kra:two-level} $\infty$-groupoids).
Given a type $A$ in a type-theoretic universe $\UU$, terms $x,y : A$ are the objects (0-cells) of the higher structure determined by $A$.

What allows us to form the morphisms and higher morphisms is
Martin-L\"of's \emph{identity type}.
Given terms $x$ and $y$ of type $A$, we write this type as $\mathsf{Id}_A(x,y)$ or simply as $x = y$.
The type theory community calls an element (or a term) $p : x = y$
a \emph{(propositional) equality}, an \emph{identification}, or (and this is the terminology we use in this paper) a \emph{path}.
The path type $x = y$ is the type of 1-cells, and for $p, q : x = y$, the type $p = q$ (the iterated path type) is the type of 2-cells, and so on.
This view can be made precise further by interpreting type theory in the setting of abstract homotopy theory~\cite{awodeyWarren_HTmodelsOfIT} and especially in simplicial sets \cite{kap-lum:simplicial-model} or cubical sets~\cite{cchm:cubical}, themselves models of topological spaces.

\emph{Homotopy type theory}~\cite{hott-book} is a variation of Martin-L\"of type theory which embraces this interpretation of types as spaces by adding internal principles that are justified by a range of models.
The central such principle is Voevodsky's \emph{univalence axiom}~\cite{kap-lum:simplicial-model}, presented as an assumption in the original formulation of homotopy type theory (cf.~\cite{hott-book}) but derivable internally in cubical type theories (cf.~\cite{cchm:cubical}).

Most of the results establishing that types are $\infty$-groupoids are of meta-theoretic nature (with the exception of attempts to internalise meta-theoretic results~\cite{finster_internal,ann-cap-kra:two-level}).
When working internally, for example when using a proof assistant, the other side of the coin becomes visible:
Attempting to treat types as sets, or the universe as an ordinary 1-category, often leads to problems which stem from the fact that paths are structure rather than property (i.e.\ they are not unique).
In other words, a path $p : x = y$ should be seen as an isomorphism in a higher category rather than an equality between elements of a set.
A collection of (iso-) morphisms in a higher category often requires coherences in order to be well-behaved.
The same is the case in homotopy type theory, and constructing these coherences is often a central difficulty.

\subsection{Applying rewriting arguments in homotopy type theory}
\label{subsec:intro-3}

The goal of this paper is to demonstrate how techniques from higher-dimensional rewriting can be applied in homotopy type theory, and to derive several new and non-trivial results with these techniques.
Although the most immediate applications of rewriting to type theory are of meta-theoretic nature (e.g.\ showing that $\beta$-reduction is confluent), this is explicitly \emph{not} what we mean; as indicated in the previous paragraph above, we mean purely internal type-theoretic applications.

The arguments that we apply are very far removed from the idea of computing normal forms and similar concepts that we have described in the beginning of \cref{subsec:intro-1}.
In the type-theoretic setting, the role of the reduction relation is played by a type family $R : A \times A \to \UU$, and the statement that $a:A$ reduces to $b:A$ simply becomes the type $R(a,b)$.
It is standard to give a relation a name such as $\sim$ which is then used infix, i.e.\ one writes $a \sim b$ or $a \torel b$.
In our type-theoretic applications, it will in general be undecidable whether a given element $a$ can be reduced, whether $a$ reduces to $b$, and even whether $a$ and $b$ are path-equal.
In particular, it is not possible to compute normal forms.
%Thus, the original statement about ``strongly normalising'' does not make sense anymore. We can't calculate normal forms.

%\subsection{A concrete example for the need of a homotopy basis in homotopy type theory}

\subsection{Set-quotients and the usefulness of a homotopy basis} \label{subsec:intro-hott-application}

Let us try to describe fairly concretely why a homotopy basis is useful in homotopy type theory.
We assume basic familiarity with the contents and notations of the book~\cite{hott-book}, the terminology of which we use.
Recall that a type $A$ is a \emph{proposition} if any two of its elements are equal,
\begin{equation}
\mathsf{isProp}(A) \defeq \Pi(x \, y : A). x = y,
\end{equation}
and a \emph{set} if any two parallel equalities are equal, i.e.\ if every equality type $x = y$ is a proposition,
\begin{equation}\label{eq:isSet}
\mathsf{isSet}(A) \defeq \Pi(x \, y : A). \mathsf{isProp}(x = y).
\end{equation}
One also says that sets are the types that satisfy the principle of unique identity proofs (UIP).

Recall from \cite[Chp 6.10]{hott-book} that, for a given type $A : \UU$ together with a relation
$(\sim) : A \to A \to \UU$, the set-quotient can be implemented as the higher inductive type
\begin{equation} \label{eq:setquotient}
\begin{aligned}
& \text{inductive } \setquotsim \text{ where} \\
&\qquad \iota : A \to \setquotsim \\
&\qquad \glue : \Pi\{a,b : A\}. (a \sim b) \to \iota(a) = \iota(b) \\
&\qquad \mathsf{trunc} : \Pi\{x,y : \setquotsim\}. \Pi(p,q : x=y). p=q
\end{aligned}
\end{equation}
The last constructor $\mathsf{trunc}$ ensures that the type $\setquotsim$ is a set.
From the above representation, we can derive the usual elimination rule for the set-quotient:
In order to get a function $f : ({\setquotsim}) \to X$, we need to give a function $g : A \to X$ such that, whenever $a \sim b$, we have $g(a) = g(b)$.
However, this only works if $X$ is a set itself.
If it is not, we have a priori no way of constructing the function $f$.

Let us look at one instance of the problem. We consider the following set-quotient, which we will use as a running example.
It is a standard construction that has been discussed in \cite[Chp 6.11]{hott-book}.
\begin{example}[free group] \label{ex:fg}
	Let $M$ be a set.
	We construct the free group on $M$ as a set-quotient.
	We consider lists over $M \uplus M$, where we think of the left copy of $M$ as positive and the right copy as negative elements.
	For $x : M \uplus M$, we write $x^{-1}$ for the ``inverted'' element: 
	\begin{equation}
	\inl(a)^{-1} \defeq \inr(a) \qquad \qquad \qquad \inr(a)^{-1} \defeq \inl(a)
	\end{equation}
	We let the binary relation $\torel$ on $\List(M \uplus M)$ to be generated by the reduction steps
	for all lists $[\ldots, x_i, \ldots]$:
	\begin{alignat}{3} \label{eq:fg-relation}
	& [\ldots, x_1, x_2, x_2^{-1}, x_3, \ldots] & \; &\torel & \;\; & [\ldots, x_1, x_3, \ldots]. 
	%   & [\ldots, a_1, \hat{a_2}, a_2, a_3, \ldots] & \; &\sim & \;\; & [\ldots, a_1, a_3, \ldots] 
	\end{alignat}
	Then, the set-quotient $\setquotXY{\List(M \uplus M)}{\mathop{\torel}}$ is the free group on $M$: It satisfies the correct universal property by \cite[Thm 6.11.7]{hott-book}.
\end{example}

Another way to construct the free group on $M$ is to re-use the natural groupoid structure that every type carries; this can be seen as a typical ``homotopy type theory style'' construction.
It works as follows.
The \emph{wedge of $M$-many circles} is the (homotopy) coequaliser of two copies of the map $M$ into the unit type, 
$\hcolim (M \rightrightarrows \unit)$.
Using a higher inductive type, it can be explicitly constructed:
\begin{equation} \label{wedge}
\begin{aligned}
& \text{inductive } \hcolim (M \rightrightarrows \unit) : \UU \text{ where} \\
& \qquad \mathsf{base}: \hcolim (M \rightrightarrows \unit) \\
& \qquad \mathsf{loop}: M \to \mathsf{base} = \mathsf{base}
\end{aligned}
\end{equation}
Its loop space $\Omega(\hcolim (M \rightrightarrows \unit))$ is by definition simply $\mathsf{base} = \mathsf{base}$.
This loop space carries the structure of a group in the obvious way: the neutral element is given by reflexivity, multiplication is given by path composition, symmetry by path reversal, and every $a:M$ gives rise to a group element $\mathsf{loop}(a)$.
This construction is works without the assumption that $M$ is a set and defines the free \emph{higher} group (cf.~\cite{Kraus:free}).
Following \cite{bezem_symmetry}, we write $\fhg(M)$ for this free higher group:
\begin{equation} \label{eq:FA-definition}
\fhg(M) \defeq  \Omega(\hcolim (M \rightrightarrows \unit)) \text{.}
\end{equation}
In contrast to this observation, the set-quotient of \cref{ex:fg} 
ignores any existing higher structure (cf. \cite[Rem 6.11.8]{hott-book}) and thus really only defines the free ``ordinary'' (set-level) group.
If we do start with a set $M$, it is a natural question whether the free higher group and the free group coincide:
There is a canonical function 
\begin{equation}\label{eq:higher-to-ordinary-fg}
\fhg(M) \to \quotX{\List(M \uplus M)},
\end{equation}
defined analogously to $\Omega(\mathsf{S}^1) \to \mathbb Z$, cf.\ \cite{hott-book}.
Classically, this function is an equivalence.
Constructively, it is an open problem to construct an inverse of \eqref{eq:higher-to-ordinary-fg}.

The difficulties do not stem from the first two constructors of the set-quotient.
Indeed, we have a canonical map
\begin{equation} \label{eq:omega1}
\omega_1 : \List(M \uplus M) \to \fhg(M)
\end{equation}
which maps a list such as $[\inl(a_1), \inr(a_2), \inl(a_3)]$ to the composition of paths given as $\mathsf{loop}(a_1) \ct (\mathsf{loop}(a_2))^{-1} \ct \mathsf{loop}(a_3)$.
For this map, we also have
\begin{equation} \label{eq:omega2}
\omega_2 : \Pi(\ell_1,\ell_2 : \List(M \uplus M)). (\ell_1 \torel \ell_2) \to \omega_1(\ell_1) = \omega_1(\ell_2)
\end{equation}
since consecutive inverse loops cancel each other out.
Therefore, if we define $(\quot)$ to be the higher inductive type \eqref{eq:setquotient} \emph{without} the constructor $\mathsf{trunc}$, i.e.\ the \emph{untruncated quotient} or \emph{coequaliser}, then there is a canonical map
\begin{equation} \label{eq:omega-complete}
\omega : \quotX{\List(M \uplus M)} \to \fhg(M).
\end{equation}
Thus, the difficulty with defining an inverse of \eqref{eq:higher-to-ordinary-fg} lies solely in the question whether $\fhg(M)$ is a set.
This is an open problem which has frequently been discussed in the homotopy type theory community (a slight variation is recorded in \cite[Ex 8.2]{hott-book}).
It is well-known in the community how to circumvent the problem if $M$ has decidable equality.
However, the only piece of progress on the general question that we are aware of is the result in \cite{Kraus:free}, where it is shown that all fundamental groups \cite[Chp 6.11]{hott-book} are trivial.
In other words: Instead of showing that \emph{everything} above truncation level $0$ is trivial, the result shows that a single level is trivial.
The proof in \cite{Kraus:free} uses a rather intricate construction which is precisely tailored to the situation.

The construction of functions $({\setquot}) \to X$ motivates the connection to higher-dimensional rewriting.
We do not allow an arbitrary type $X$, however; instead, we assume that $X$ is $1$-truncated, i.e.\ a \emph{groupoid} a.k.a. a \emph{1-type}, which means that all path spaces of $X$ are sets,
\begin{equation}
\mathsf{isGrp}(A) \; \defeq \; \Pi(x\, y : A). \mathsf{isSet}(x = y).
\end{equation}
As an application, we will give a new proof for the theorem that the fundamental groups of $\fhg(M)$ are trivial.
We will also show a family of similar statements, by proving a common generalisation.

The characterisation of the equality types of $({\setquot})$ makes it necessary to consider \emph{closed zig-zags} in $A$.
A closed zig-zag is simply an element of the symmetric-reflexive-transitive closure, for example:

\newdimen\WF
\newdimen\WS

\WF=.475\textwidth
\WS=.475\textwidth
\begin{minipage}[t]{\WF}
		\begin{align*}
		& s : a \torel b \qquad p : d \torel c \\
		& t : c \torel b \qquad q : a \torel d
		\end{align*}
\end{minipage}%
\begin{minipage}[t]{\WS}
	\begin{equation} \label{eq:cyclepicture}
	\begin{tikzpicture}[x=1.5cm,y=-1.50cm,baseline=(current bounding box.center)]
	\tikzset{arrow/.style={shorten >=0.1cm,shorten <=.1cm,-latex}}
	\node (A) at (0,0) {$a$}; 
	\node (B) at (0,1) {$b$}; 
	\node (C) at (1,1) {$c$}; 
	\node (D) at (1,0) {$d$}; 
	
	\draw[arrow] (A) to node [left] {$s$} (B);
	\draw[arrow] (C) to node [below] {$t$} (B);
	\draw[arrow] (D) to node [right] {$p$} (C);
	\draw[arrow] (A) to node [above] {$q$} (D);
	\end{tikzpicture}
	\end{equation}
\end{minipage}

\noindent
Our first result related to quotients (\cref{thm:gensetquotnew}) says:
We get a function $({\setquot}) \to X$ into a groupoid $X$ if we have $f : A \to X$ and $h : (a \torel b) \to f(a) = f(b)$, together with the coherence condition stating that $h$ maps any closed zig-zag to a ``commuting cycle'' in $X$. In the case of the example \eqref{eq:cyclepicture} above, this means that the composition $h(s) \ct h(t)^{-1} \ct h(p)^{-1} \ct h(q)^{-1}$ equals $\refl_{f(a)}$.
\cref{thm:gensetquotnew} is fairly simple, and we do not consider it a major contribution of this paper.

The actual contribution of the paper is to make \cref{thm:gensetquotnew} usable
since, on its own, it is virtually impossible to apply in any non-trivial situation.
The reason for this is that the coherence condition talks about \emph{closed} zig-zags.
Zig-zags are inductively generated (they are simply a chain of segments), but closed zig-zags are not.
If we have a property of closed zig-zags which we cannot generalise to arbitrary zig-zags, then there is no obvious inductive strategy to show the property in general: if we remove a segment of a closed zig-zag, it is not closed any more.
In all our examples, it seems not possible to formulate an induction hypothesis based on isolated not-necessarily-closed zig-zags.%
\footnote{To avoid the problem that closedness poses here, one may try to instead consider \emph{two} parallel but not-necessarily-closed zig-zags. However, the type of such pairs is not inductively generated either, and the situation is essentially the same. This is reminiscent of the well-known fact in type theory that the principles \texttt{UIP} (parallel equalities are equal) and \texttt{Axiom K} (loops are equal to reflexivity) are interderivable.}

Thus, how can \cref{thm:gensetquotnew} be made usable?
This is where the construction of a homotopy basis comes into play.
In the example of the free group, the relation $\torel$ on $\List{(M \uplus M)}$ can be presented in a way that ensures that it is Noetherian and confluent,
conditions that are (stronger than) necessary in order to construct the homotopy basis, and we have full control over how we want this basis to look like.
While it is very hard to show a property directly for \emph{all} closed zig-zags, it is much more manageable to show the property for closed zig-zags in the homotopy basis, and if the property is nice enough (which it is in all our examples),
then this is sufficient.

Let us get back to homotopy type theory.
The combination of the two mentioned results (\cref{thm:gensetquotnew} and \cref{thm:unary-basis}) gives us \cref{thm:dirsetquot}: Given a groupoid $X$ and $f : A \to X$ such that $a \torel b$ implies $f(a) = f(b)$, it suffices to show that closed zig-zags in the basis are mapped to trivial equality proofs.
We apply this to show that the free higher group over a set has trivial fundamental groups.
There is a family of similar statements that we also discuss and prove.

%\subsection{How to read this paper} % and Structure of the paper} 
% or:
\subsection{Structure of this paper and formalisation} \label{subsec:structure-and-agda}

The ideas and developments in this paper can be split into three categories:

\begin{enumerate}
	\item \label{item:cat-1}
	The first category consists of the plain rewriting arguments that are to some degree independent of the foundation in which they are formulated. These arguments work in set-theoretic settings as well as in various forms of type theory.
	\item \label{item:cat-2}
	The second category addresses the choices that have to be made when choosing homotopy type theory as the foundational setting.
	A central question is whether one works with general types or \emph{h-sets}, i.e.\ types of truncation level 0.
	\item \label{item:cat-3}
	The final category consists of our concrete applications in homotopy type theory.
\end{enumerate}
%%
%While the distinction between the third category and the other parts is clear, there is no clear line between the first two categories. 
%
We strive to separate the rewriting arguments from the type-theoretic arguments as much as possible and structure the development as follows:

\begin{itemize}
	\item In \cref{sec:non-tt}, we show the specific construction of a homotopy basis in a standard generic (unspecified) set-theoretic framework.
	This corresponds to category \ref{item:cat-1} above.
	We attempt to follow the style and presentation of other papers in the field on higher dimensional rewriting.
	\item The fairly short \cref{subsec:translation-is-generalisation} points out which parts of the 
	development presented in \cref{sec:non-tt} require particular attention when switching to homotopy type theory, and in which ways our type-theoretic formulation is more general. In this section, we give a high-level explanation of these points that corresponds to category \ref{item:cat-2}.
	\item \cref{sec:tt} gives the complete type-theoretic translation.
	\item We have formalised the development of \cref{sec:tt} in Agda. The source code is available at \href{https://bitbucket.org/fplab/confluencecoherence}{\nolinkurl{bitbucket.org/fplab/confluencecoherence}} and requires Agda 2.6.2.2 to be installed.
	In addition, we make a browsable \texttt{html} version available at \href{http://www.cs.nott.ac.uk/~psznk/agda/confluence/}{\nolinkurl{cs.nott.ac.uk/~psznk/agda/confluence/}}.
	This \texttt{html} version requires no software to be installed, is fully interlinked (i.e.\ everything can be clicked on to reach its definition), and benefits from complete syntax highlighting.
	\item Finally, we explain the applications in homotopy type theory in
	\cref{sec:applications} (category \ref{item:cat-3}).
\end{itemize}

%%% A flowchart explaining how to read the paper. Probably too much.
%
\begin{wrapfigure}{R}{0.5\textwidth}
\centering
\begin{tikzpicture}[baseline=(current bounding box.center), x=2cm, y=-.1cm]
\tikzset{arrow/.style={shorten >=.1cm,shorten <=.1cm,-{>[length=1.3mm]}}}
\tikzset{node/.style={rectangle,draw}}
\node[rectangle,draw] (S1) at (1, -10) {\cref{sec:intro}};
\node[rectangle,draw] (S2) at (0, 0) {\cref{sec:non-tt}};
\node[rectangle,draw] (S3) at (0, 10) {\cref{subsec:translation-is-generalisation}};
\node[rectangle,draw] (S4) at (1, 5) {\cref{sec:tt}};
\node[rectangle,draw] (Agda) at (2, 5) {\href{http://www.cs.nott.ac.uk/~psznk/agda/confluence/}{Agda}};
\node[rectangle,draw] (S5) at (1, 20) {\cref{sec:applications}};
\draw[arrow] (S1.south west) to node [] {} (S2.north);
\draw[arrow] (S1.south) to node [] {} (S4.north);
\draw[arrow] (S1.south east) to node [] {} (Agda.north);
\draw[arrow] (S2) to node [] {} (S3);
\draw[arrow] (S3.south) to node [] {} (S5.north west);
\draw[arrow] (S4.south) to node [] {} (S5.north);
\draw[arrow] (Agda.south) to node [] {} (S5.north east);
\draw[-, shorten >=.1cm,shorten <=.1cm] (Agda.west) to node [] {} (S4.east);
%%\draw[arrow] (Px) to node [above] {$v$} (Pz);
%%\draw[arrow, {<[length=1.3mm]}-{>[length=1.3mm]}, bend right,decorate,decoration=zigzag] (Py) to node [below] {$*$} (Pz);
\end{tikzpicture}
%%\caption{\label{fig:frog2}This is a figure caption.}
\end{wrapfigure}

\noindent
This structure allows a reader who is interested in the rewriting argument, but less so in type theory, to only study \cref{sec:non-tt}.
At the same time, an expert in the field of homotopy type theory will probably understand all ideas by additionally reading \cref{subsec:translation-is-generalisation} without going through
\cref{sec:tt} in full.

However, a reader interested in the full type-theoretic development and all the nitty-gritty details may wish to skip \cref{sec:non-tt,subsec:translation-is-generalisation} completely and immediately jump to \cref{sec:tt} as well as the accompanying Agda formalisation: \cref{sec:tt} can be seen as a high-level guide through the Agda code, and it comes with links to the \texttt{html} version of all the important definitions and theorems.
%While it is possible to \emph{only} study the formalisation without going through \cref{sec:tt}, the formalisation is intended to accompany the paper and may not emphasise the high-level ideas sufficiently.

Finally, \cref{sec:applications} explains our applications in homotopy type theory.

\subsection{Background of the paper}
The core observation on which this article is based is that confluence and wellfoundedness can be used to prove coherence results internally in homotopy type theory.
We (the current authors) originally presented this insight at the LICS'20 conference~\cite{krausVonRaumer:wellfounded}.
We received very helpful feedback.
In particular, 
Vincent van Oostrom explained to us the connection to several lines of work in the rewriting community, especially the relationship to Squier's work.
The current article extends and improves the conference paper~\cite{krausVonRaumer:wellfounded}, and attempts to more cleanly separate the rewriting arguments from the type-theoretic applications.

The arguments of the current article are somewhat different and the results in a certain way more general than the results in the conference paper.
By specialising the main result of \cref{sec:tt}, i.e.\ the construction of a homotopy basis in type theory (\cref{thm:tt-basis}), we can derive a version of the main result of the conference paper (\cref{thm:unary-basis}); this then turns out to actually be slightly weaker in a subtle sense, as explained in \cref{rem:lics-paper-vs-current-article}.
However, these differences are insignificant from the point of view of the applications that we present.

The conference paper was presented with a formalisation in the Lean theorem prover, available at
\href{https://gitlab.com/fplab/freealgstr}{\nolinkurl{gitlab.com/fplab/freealgstr}}. This formalisation is independent of, and follows a different strategy from, our Agda implementation.

% TODO: possibly replace ``terminating'' by ``Noetherian'' in order to reduce duplicated terminology.

\section{A homotopy basis for Noetherian 2-polygraphs}\label{sec:non-tt}

%\noindent
In this section, we explain the construction of a homotopy basis 
in a generic set-theoretic framework.
In later sections, we will show how the development can be translated to and generalised in homotopy type theory, and used to address open questions in the field.

% one of these 
\subsection{Notations for 1-polygraphs}
%\subsection{Directed pseudographs or 1-polygraphs}

A \emph{1-polygraph} is given by two sets $\Sigma_0$ and $\Sigma_1$ together with two functions $s_0, t_0 : \Sigma_1 \to \Sigma_0$.
This structure is sometimes known as a a \emph{quiver} \cite{gabriel1972unzerlegbare}, a \emph{directed pseudograph}, or simply a \emph{directed graph}.
Alternatively, it may be described as a low-dimensional special case of Street's \emph{computads} \cite{street1987algebra}, analogously to how it is a special case of Burroni's \emph{polygraphs} \cite{burroni1993higher}.
%$\Sigma_0 \leftleftarrows \Sigma_1$

We refer to elements of $\Sigma_0$ as \emph{objects} and elements of $\Sigma_1$ as \emph{reduction steps} or simply \emph{steps}.
Given $u \in \Sigma_1$, we call $s_0(u)$ the \emph{source} and $t_0(u)$ the \emph{target} of $u$.
For $x,y \in \Sigma_0$, we write $(x \torel y)$ for the subset of $\Sigma_1$ containing those reduction steps which have $x$ as source and $y$ as target.

We write $(x \tostar y)$ for the set of composable (if $x = y$ possibly empty) finite sequences of reduction steps which start in $x$ and end in $y$.
Elements of $(x \tostar y)$ are \emph{reduction sequences} or simply \emph{sequences} from $x$ to $y$.
Formally, such a sequence consists of an \emph{object sequence} $(x = x_0, x_1, \ldots, x_n = y)$, with $x_i \in \Sigma_0$, and a list $(u_1, \ldots, u_n)$ with $u_i \in (x_{i-1} \torel x_i)$.
Given two composable sequences, say $u \in (x \tostar y)$ and $v \in (y \tostar z)$, we write $u \cdot v$ for their composition, $u \cdot v \in (x \tostar z)$. We use the same notation if $u$ and/or $v$ is a single step instead of a sequence.

$(y \from x)$ denotes a copy of the set $(x \torel y)$.
We write $(x \tofrom y)$ for the disjoint sum $(x \torel y) \uplus (x \from y)$
and $(x \tofromstar y)$ for the set of \emph{reduction zig-zags} (or simply \emph{zig-zags}) from $x$ to $y$.
Just as a reduction sequence, a reduction zig-zag is given by an object sequences and a list of steps
$(u_1, \ldots, u_n)$, where however we allow that either $u_i \in (x_{i-1} \torel x_i)$ or $u_i \in (x_{i-1} \from x_i)$.
If for all steps the first (second) option is the case, the zig-zag is called \emph{positive} (\emph{negative}).
Based on this, we use notations such as $(y \from x \torel z)$, which is the set of pairs $(u,v)$ with $u \in (x \torel y)$ and $v \in (x \torel z)$.
In this case, we write $u^{-1} \cdot v \in (y \from x \torel z)$ to make clear that $u$ has been formally inverted, and $u^{-1} \cdot v$ is seen as a reduction zig-zag.

Finally, we denote by $\Sigma_1^*$ the union of all sets of the form $(x \tostar y)$, i.e.\ the set of all sequences.
Note that we have one trivial sequence of length $0$ for each $x \in \Sigma_0$, which we denote by $\varepsilon_x$.
Similarly, $(\Sigma_1 \uplus \Sigma_1^{-1})^*$ denotes the union of all sets of the form $(x \tofromstar y)$, i.e.\ the set of all zig-zags.
For $u \in (\Sigma_1 \uplus \Sigma_1^{-1})^*$, we write $s_0(u)$ for the starting element of the object sequence (i.e.\ $x_0$ in the description above) and $t_0(u)$ for the last (i.e.\ $x_n$).

\subsection{Terminating 1-polygraphs}

Let $>$ be a relation on a set $M$.
We call an element $n \in M$ \emph{accessible} if all $m \in M$ with $n > m$ are accessible.
Recall that $>$ is called \emph{Noetherian} (or \emph{co-wellfounded}) if all elements of $M$ are accessible.
Recall further that, for such a relation, we can perform \emph{Noetherian induction}: Given a property on $M$, if the property holds for an $n \in M$ as soon as it holds for all $m$ with $n > m$, then the property holds for all $n$.
Note that a relation is Noetherian if and only if its transitive closure is.

%%% paragraph version 1:
%Let a 1-polygraph $\Sigma_0 \leftleftarrows \Sigma_1$ with source- and target maps $s_0, t_0$ be given, and let $W \subseteq \Sigma_1$ be a subset of the reduction steps.
%We say that the polygraph is \emph{$W$-Noetherian} if there is a set $M$ with a Noetherian and transitive relation $>$ on it, and if there if a function $f$ from $\Sigma_0$ to $M$ such that, for any $u \in W$, we have $f(s_0(u)) > f(t_0(u))$. 

%%% paragraph version 2, maybe more readable?
Let a 1-polygraph $\Sigma_0 \leftleftarrows \Sigma_1$ be given.
We say that the polygraph is equipped with a (Noetherian) order if we have a (Noetherian) order $>$ on the set $\Sigma_0$. 
%%%%%%% don't need M:
%there is a set $M$ with a (Noetherian) relation $>$ on it, together with a function $f$ from $\Sigma_0$ to $M$.
%Extending $>$ to $\Sigma_0$, we simply write $x > y$ if $f(x) > f(y)$.
%
%We say that the polygraph is equipped with a Noetherian order on its objects if, in addition, the relation $>$ is Noetherian and transitive. 
%\todo{do we need transitive?}
%Note that the induced relation on $\Sigma_0$ is still Noetherian (assuming $>$ is), allowing us to do Noetherian induction. 
We say that the polygraph is \emph{terminating} if it is equipped with a transitive Noetherian order $>$ and,
whenever there is a $u \in (x \torel y)$, we have $x > y$.
We furthermore extend an ordering to zig-zags by e.\,g. saying that for
$u \in \Sigma_1$ we have $x > u$ if and only if $x > y_i$ for all $y_i \in \Sigma_0$ in the
object sequence of $u$.

%TODO: this idea is removed in the current version of the paper.
%\todo[inline]{N: I don't think we actually want the $x \torel y \to x > y$ condition. It's more natural without. (But have to rewrite quite a few things.)}

\begin{remark}
	Note that the above definitions of \emph{accessibility} and \emph{Noetherian}
	are phrased in a way that makes them usable in a constructive setting.
	It is a consequence that there exists no infinite sequence $x_0 > x_1 > \ldots$; however, that statement taken as a definition would not allow us to perform the constructions we do in this paper.
\end{remark}
\begin{remark}
	Accessibility could be formulated directly in terms of $\Sigma_1$ instead of referring to a relation $>$.
%	If a polygraph is Noetherian, then every $x \in \Sigma_0$ is in fact also $(\to)$-accessible, and the two notions coincide.
	The reason for introducing $>$ is that requiring $x > y$ can be a weaker condition than requiring $x \leadstostar y$, cf.~\cref{rem:>-could-be-truncation}.
%	Of course, $(M,>)$ could itself be generalised to a 1-polygraph and the function $f$ (in the terminating case) be replaced by a function between $1$-polygraphs.
\end{remark}

\subsection{The list extension of a Noetherian relation}
\label{subsec:list-extension}

Given a set $M$, we write $M^*$ for the set of finite (possibly empty) lists.
Given a relation $>$ on $M$, we extend it to a relation $\listext >$ on $M^*$, mirroring
the \emph{multiset extension} by Dershowitz and Manna~\cite{derschowitz-manna:multiset}.
This \emph{list extension} is defined as follows, where we choose to build in the transitive closure.
For lists $\vec n, \vec m \in M^*$, we have $\vec n \listext > \vec m$ if it is possible to transform $\vec n$ into $\vec m$ by applying the following operation one or multiple times:
remove one element $n$ of the list and replace it by a finite list, where each new list element has to be smaller than $n$.
In other words, the list extension of $>$ is the smallest relation $\listext >$ on $M^*$ which is:
\begin{enumerate}
	\item transitive;
	\item closed under congruence, 
%	i.e.\ if $(k_1, \ldots, k_a) \in M^*$ and $(l_1, \ldots, l_b)$ are arbitrary lists and we have $(n_1, \ldots, n_c) > (m_1, \ldots, m_d)$, then we also have $(k_1, \ldots, k_a, n_1, \ldots, n_c, l_1, \ldots, l_b) > 
	i.e.\ if $\vec k, \vec l, \vec m, \vec n \in M^*$ are four lists such that $\vec n > \vec m$, then we also have $(\vec k \cdot \vec n \cdot \vec l) > (\vec k \cdot \vec m \cdot \vec l)$;
	\item and, if $n \in M$ and $(n_1, \ldots, n_e) \in M^*$ such that $\forall i \in \{1, \ldots, e\}. n > n_i$, then $(n) > (n_1, \ldots, n_e)$.
	Here, $(n)$ is the list of length 1 with the single element $n$ and $\cdot$ donates list concatenation.
\end{enumerate}
Recall that the multiset extension of a Noetherian relation is Noetherian \cite{derschowitz-manna:multiset}.
As the multiset extension subsumes the list extension, the same statement holds for the list extension.
The proof that we give is an adaption of an argument by Nipkov \cite{nipkow:multiset}.
\begin{lemma} \label{lem:list-ext-noetherian}
	Let $M$ be a set with a relation $>$.
	If $>$ is Noetherian on $M$, then the list extension $\listext >$ 
%%% don't need transitivity:	
%	(and therefore $\listexttrans >$) 
	on the set $M^*$ is also Noetherian.
\end{lemma}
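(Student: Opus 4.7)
The plan is to mirror Nipkow's argument for multiset extensions, adapted to the list setting. First I would introduce the one-step relation $\listext{>}_1$: we have $\vec l \listext{>}_1 \vec m$ exactly when $\vec l = \vec k_1 \cdot (n) \cdot \vec k_2$ and $\vec m = \vec k_1 \cdot (n_1, \ldots, n_e) \cdot \vec k_2$ for some $n \in M$ and some $n_1, \ldots, n_e \in M$ with $n > n_i$ for all $i$. Because transitivity is built into the definition, $\listext{>}$ is simply the transitive closure of $\listext{>}_1$, so it suffices to show that $\listext{>}_1$ is Noetherian (the transitive closure of a Noetherian relation is Noetherian).

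The key technical step is the following \emph{cons lemma}: if $n \in M$ is accessible with respect to $>$ and $\vec l \in M^*$ is accessible with respect to $\listext{>}_1$, then $(n) \cdot \vec l$ is accessible with respect to $\listext{>}_1$. I would prove it by Noetherian (outer) induction on $n$, and inside that, by induction on accessibility of $\vec l$. To conclude accessibility of $(n) \cdot \vec l$, one inspects any $\vec m$ with $(n) \cdot \vec l \listext{>}_1 \vec m$: either the rewrite happens inside $\vec l$, in which case the inner induction hypothesis applies directly, or the rewrite replaces the head $n$ with some $(n_1, \ldots, n_e)$ all smaller than $n$. The second case is the genuinely delicate one and will be the main obstacle. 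Here I would run an auxiliary finite induction on $e$: the case $e=0$ reduces to $\vec l$ itself, and for $e > 0$ the tail $(n_2, \ldots, n_e) \cdot \vec l$ is accessible by this side induction, so the outer Noetherian induction hypothesis (applied at $n_1 < n$) lets me cons $n_1$ on top while preserving accessibility.

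With the cons lemma in hand, the remainder is short. The empty list $\varepsilon$ has no $\listext{>}_1$-predecessor (since that relation requires picking an element of the left-hand list), so $\varepsilon$ is trivially accessible. Then, by induction on the length of $\vec l \in M^*$, every list is accessible: the successor step is exactly one application of the cons lemma, using that each element of $M$ is accessible under $>$ by assumption. Hence $\listext{>}_1$, and therefore its transitive closure $\listext{>}$, is Noetherian.

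Two points deserve care when carrying this out. First, the formulation of the one-step relation has to allow \emph{any} position (not just the head) to be rewritten; the cons lemma reduces the general case to the head position via congruence, but one should check that the "rewrite inside $\vec l$" subcase really yields a tail $\vec l'$ with $\vec l \listext{>}_1 \vec l'$. Second, in a constructive setting one must make sure all inductions are genuinely well-founded inductions on accessibility predicates rather than appealing to the absence of infinite descending chains, which matters since the paper explicitly wants these notions usable constructively.
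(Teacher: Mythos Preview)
Your proposal is correct and follows Nipkow's strategy, as does the paper, but you organise the induction differently. The paper's decomposition is: (1) a \emph{concatenation lemma} (if two lists are accessible then their concatenation is), proved by nested induction on the two accessibility witnesses; (2) singletons are accessible, proved by Noetherian induction on $M$ using step~(1); (3) arbitrary lists are accessible by writing them as concatenations of singletons. Your decomposition instead proves a \emph{cons lemma} (accessible element prepended to accessible list gives accessible list), with the Noetherian induction on $M$ as the outer loop and accessibility of the tail as the inner loop, then builds lists from the empty list by repeated cons.

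The two routes are essentially dual reorderings of the same nested induction. The paper's concatenation lemma is slightly more general as a standalone statement (it composes arbitrary accessible lists, not just a singleton with a list), which is why its singleton step can simply invoke it; your version fuses the element-level Noetherian induction directly into the key lemma, which avoids proving the more general concatenation fact but requires the auxiliary finite induction on $e$ inside the head-replacement case. Both reductions to the one-step relation and both treatments of the constructive accessibility predicates are sound, and the care points you flag (arbitrary-position rewrites, genuine accessibility induction) are exactly the right ones.
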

\begin{proof}
	We need to show that every list in $M^*$ is accessible.
	Since accessibility is closed under transitive closure, we can without
	loss of generality assume the
	definition of $>_L$ to lack the transitive closure and only contain the
	other two closure properties.
	We do this in three steps:
	\begin{enumerate}[label=\textbf{(\arabic*)},ref={(\arabic*)}]
		\item \label{enum:comp-acc} If lists $\ell_1$ and $\ell_2$ are accessible,
		then so is their concatenation $\ell_1 \cdot \ell_2$.
		\item \label{enum:singleton-acc} Single element lists $(m)$, with $m\in M$, are accessible.
		\item \label{enum:all-acc} Arbitrary lists of objects are accessible.
	\end{enumerate}
	Point \ref{enum:comp-acc} holds because, in order to make arrive at a list smaller than $\ell_1 \cdot \ell_2$, one has to make $\ell_1$ smaller or $\ell_2$ smaller. More precisely, any list smaller than $\ell_1 \cdot \ell_2$ is of the form $\ell_1' \cdot \ell_2$ or of the form $\ell_1 \cdot \ell_2'$ or of the form $\ell_1' \cdot \ell_2'$, with $\ell_1 >_L \ell_1'$ and $\ell_2 >_L \ell_2'$. Since $\ell_1$ and $\ell_2$ are individually accessible, this shows that their concatenation is.%
	\footnote{The more general statement is \emph{nested wellfounded induction}, discussed in the type-theoretic setting below, cf.\ \Cref{lem:nested-ind}.} 
	
	To prove \ref{enum:singleton-acc}, we apply Noetherian induction on $x \in M$.
	To show that $(x)$ is accessible, assume $(x) >_L \ell$ with $\ell = (x_1, \ldots x_n)$. We have to show that $\ell$ is accessible.
	By the induction hypothesis, each $(x_i)$ is accessible. Further, we have $\ell = (x_1) \cdot \ldots \cdot (x_n)$. Thus, the statement is given by point \ref{enum:comp-acc}.
	
	For the proof of \ref{enum:all-acc}, we now combine the two previous steps and use
	the same argument as before:
	Every list can be written as a concatenation of singleton lists and is therefore
	accessible.
\end{proof}

\subsection{Generalised 2-polygraphs} \label{subsec:gen-2-poly}

Burroni's notion of a \emph{2-polygraph} \cite{burroni1993higher}
extends a 1-polygraph $\Sigma_0 \leftleftarrows \Sigma_1$.
The extension consists of a set $\Sigma_2$ together with two functions $s_1, t_1 : \Sigma_2 \to \Sigma_1^*$ (i.e.\ a second 1-polygraph $\Sigma_1^* \leftleftarrows \Sigma_2$) subject to the condition that, for each $\alpha \in \Sigma_2$,
we have $s_0(s_1(\alpha)) = s_0(t_1(\alpha))$ and $t_0(s_1(\alpha)) = t_0(t_1(\alpha))$.
The data that we want to work with is captured by a slight generalisation of a 2-polygraph that is obtained by replacing $\Sigma_1^*$ by $(\Sigma_1 \uplus \Sigma_1^{-1})^*$, i.e.\ we generalise sequences to zig-zags.
%%% note: A-Malbos use \top for a symmetric closure style operation, but I'm not convinced that's good.
Given a generalised 2-polygraph with two zig-zags $u, v \in (x \tofromstar y)$, we write $(u \To v)$ for the set of all $\alpha \in \Sigma_2$ with $s_1(\alpha) = u$ and $t_1(\alpha) = v$ and call $\alpha$ a \emph{rewrite step} from $u$ to $v$.
We use the notations $(u \Tostar y)$ as well as $(u \Tofrom v)$ and $(u \Tofromstar v)$ analogously to $(x \tostar y)$, $(x \tofrom y)$, and $(x \tofromstar y)$, respectively.

We say that a generalised 2-polygraph is terminating if the underlying 1-polygraph $\Sigma_0 \leftleftarrows \Sigma_1$ is terminating.
Further, we say that a generalised 2-polygraph is \emph{closed under congruence} if, for any $\alpha \in (\Sigma_2 \uplus \Sigma_2^{-1})^*$ and $u, v \in (\Sigma_1 \uplus \Sigma_1^{-1})^*$
with $s_0(s_1(\alpha)) = t_0(u)$ and $t_0(t_1(\alpha)) = s_0(v)$, we have a chosen
zig-zag of 2-cells
$u \cdot \alpha \cdot v \in (\Sigma_2 \uplus \Sigma_2^{-1})^*$ with
$s_1 (u \cdot \alpha \cdot v) = u \cdot s_1(\alpha) \cdot v$
and
$t_1 (u \cdot \alpha \cdot v) = u \cdot t_1(\alpha) \cdot v$.% TODO: footnote on formula is not good (see PhD viva).
\footnote{In the terminology of bicategories, the expression $u \cdot \alpha \cdot v$ corresponds to the \emph{horizontal composition} of the identities on $u$ and $v$ with $\alpha$. We do not require \emph{vertical composition}, which would correspond to an operation which turns a rewrite sequence in $(u \Tostar v)$ into a step in $(u \To v)$.}

\begin{remark}
	The fact that a \emph{rewrite step} in our setting is simply an element of $\Sigma_2$ is in contrast with the terminology of other authors (e.g.\ Alleaume and Malbos \cite{alleaume2016coherence}), for whom a rewrite step is a compositions of the form $u \cdot \alpha \cdot v$.
	For polygraphs that are closed under congruence, this distinction becomes essentially irrelevant since we are mostly interested in $(\Sigma_2 \uplus \Sigma_2^{-1})^*$ rather than $\Sigma_2$.
\end{remark}

\begin{remark} \label{rem:fibration-vs-families}
	Note that the data $(\Sigma_0, \Sigma_1, \Sigma_2, s_0, t_0, s_1, t_1)$ of a generalised 2-polygraph can equivalently be described as a 1-polygraph $\Sigma_0 \leftleftarrows \Sigma_1$ and, for each pair $x,y \in \Sigma_0$,
	another 1-polygraph $(x \tofromstar y) \leftleftarrows \Sigma_2^{x,y}$.
	Unfolding further, this data consists of: a set $\Sigma_0$; for each pair $x, y \in \Sigma_0$, a set $(x \torel y)$ of reduction steps; and, for each pair $x,y \in \Sigma_0$ and $u,v \in (x \tofromstar y)$, a set $(u \To v)$ of rewrite steps.
	This presentation is much more natural in type theory and will be used in \cref{sec:tt}.
\end{remark}

\subsection{Generalisations of Newman's Lemma}
Various notions of \emph{confluence} have been studied to characterize
well-behaved rewriting systems.
In this section, we will recall four of these definitions, adapt them to our
constructive meta-theory, and compare them with each other.
The notions that we consider are shown in \cref{fig:confluences}.

%%% don't need this
%For the rest of this section, let us assume that we are given a generalised 2-polygraph $(\Sigma_0 \leftleftarrows \Sigma_1, (\Sigma_1 \uplus \Sigma_1^{-1})^* \leftleftarrows \Sigma_2)$ with a Noetherian order on the objects, i.e.\ there is a function $f : \Sigma_0 \to M$ and a Noetherian relation $>$ on $M$.
%% such that any $u \in (x \torel y)$ implies $f(x) > f(y)$.

We will use the following terminology to refer to the global and local 
``topography'' of reduction sequences:
We call $u \in (y \fromstar x \tostar z)$ a \emph{peak} and $v \in (y \tostar x \fromstar z)$
a \emph{valley}.
If we replace the respective reduction sequences by single reduction steps, $u \in (y \from x \torel z)$ and $v \in (y \torel x \from z)$
will be called a \emph{local peak} (or \emph{span}) and \emph{local valley} (or \emph{co-span}),
respectively.

The first property we want to take a glance at is the arguably weakest notion
of confluence we want to consider. It requires that each local peak can be rewritten
into a reduction zig-zag below that peak.
It was originally formulated as a property of a rewriting system by Winkler and Buchberger \cite{buchberger1983criterion}.

\begin{definition} \label{def:WBstructure}
A \emph{Winkler-Buchberger structure} on a 2-polygraph with an order $>$ on its objects
consists of,
%is a subset $\WB \subseteq \Sigma_2$ which 
for each local peak $u \in (y \from x \torel z)$,
a rewrite zig-zag $\WB(u) \in (\Sigma_2 \uplus \Sigma_2^{-1})^*$ such that
$s_1(\WB(u)) = u$ and, writing $t_1(\WB(u)) = (y \tofrom x_1 \ldots x_n \tofrom z)$, we have $x > x_i$ for all $1 \leq i \leq n$.
%%% not really needed:
%\begin{equation}
%\WB := \{\WB(u) \, | \, \exists x,y,z \in \Sigma_0. u \in (y \from x \torel z)\} \subseteq \Sigma_2^*
%\end{equation} 
%for the set of all such rewrite sequences.
\end{definition}

Just like we represent the Winkler-Buchberger property by a structure on the polygraph,
we will proceed to present (local) confluence as a choice of rewrite zig-zags:

\begin{definition}\label{def:LCstructure}
A \emph{local confluence structure} on the assumed 2-polygraph consists of, 
%is a subset $\LC \subseteq \Sigma_2$ 
for each local peak $u \in (y \from x \torel z)$,
a rewrite zig-zag $\LC(u) \in (\Sigma_2 \uplus \Sigma_2^{-1})^*$ such that
$s_1(\LC(u)) = u$ and $t_1(\LC(u))$ takes the form $(y \tostar x' \fromstar z)$
for some $x' \in \Sigma_0$ which we will call the \emph{reduct} of the local peak
$u$.

A \emph{confluence structure} is defined analogously, with a rewrite zig-zag $\CS(u)$ for an arbitrary (not necessarily local) peak $u \in (y \fromstar x \tostar z)$. 
%%% don't need
%Again, we write $\CS \subseteq \Sigma_2^*$ for the set of all rewrite sequences $\CS(u)$.
\end{definition}

While sometimes conflated with confluence, we will call \emph{Church-Rosser structure}
the generalisation of confluence where objects are connected by an arbitrary reduction zig-zag.

\begin{definition}\label{def:CRstructure}
A choice of rewrite zig-zag $\CR(u)$ for every $u \in (y \tofromstar z)$ is called
a \emph{Church-Rosser structure} if $s_1(\CR(u)) = u$ and $t_1(\CR(u)) \in (y \tostar x' \fromstar z)$, for some $x' \in \Sigma_0$.
Like with confluence, we will call $x'$ the reduct of $u$.
\end{definition}

A Church-Rosser structure always contains a confluence structure (since every peak is a zig-zag). Similarly, a confluence structure contains a local confluence structure.
If $u \in (x \torel y)$ implies $f(x) > f(y)$ (and $>$ is transitive), which happens in particular if the generalised 2-polygraph is terminating,
then every local confluence structure trivially is a Winkler-Buchberger
structure as well.

We will now show that, if a 2-polygraph is terminating and closed under congruence (see \cref{subsec:gen-2-poly}), then these implications can be reversed.
In its core, the proof is simply the standard argument for Newman's Lemma \cite{Huet1980ConfluentRA}.

\begin{lemma}\label{lem:wb-to-cr}
Assume we have a generalised 2-polygraph that is closed under congruence and is terminating, with underlying transitive relation $>$.
Then, a Winkler-Buchberger structure $\WB$ allows us to construct a Church-Rosser structure $\CR$. % such that $\CR$ respects $>$.\todo{we don't need this}
\end{lemma}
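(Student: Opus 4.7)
The plan is to perform Noetherian induction using the list extension $\listext{>}$ of the given terminating order on the object sequences of zig-zags. Since $>$ is Noetherian, \cref{lem:list-ext-noetherian} tells us that $\listext{>}$ is Noetherian on $\Sigma_0^*$, which is precisely the structure we need to reduce an arbitrary zig-zag by locally eliminating peaks.

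In more detail, given a zig-zag $u \in (y \tofromstar z)$, I would proceed by cases on whether $u$ contains a peak. If it does not, then every forward step in $u$ must precede every backward step (otherwise a backward-then-forward transition would form a local peak), so $u$ is already of the form $(y \tostar x' \fromstar z)$ for some reduct $x'$. In this base case, one takes $\CR(u)$ to be the empty rewrite zig-zag (a trivial element of $(\Sigma_2 \uplus \Sigma_2^{-1})^*$ on $u$). Otherwise, decompose $u$ as $w_1 \cdot p \cdot w_2$ where $p = (y' \from x \torel z') \in (y' \from x \torel z')$ is a local peak occurring inside $u$. Apply the Winkler-Buchberger structure to obtain $\WB(p)$, whose target is a zig-zag $q$ all of whose intermediate objects are $<x$. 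Using closure under congruence, lift $\WB(p)$ to a rewrite zig-zag $w_1 \cdot \WB(p) \cdot w_2$ from $u$ to $u' \defeq w_1 \cdot q \cdot w_2$.

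The key combinatorial observation is that the object sequence of $u'$ is obtained from the object sequence of $u$ by replacing the single entry $x$ with a sequence of entries that are all strictly smaller than $x$, which is precisely an instance of the generating step for $\listext{>}$; hence $u$'s object sequence is strictly greater than $u'$'s in $\listext{>}$. By the inductive hypothesis applied to $u'$, we obtain a reduct $x''$ and a rewrite zig-zag $\CR(u') \in (u' \Tofromstar (y \tostar x'' \fromstar z))$. Composing the two rewrite zig-zags yields the desired $\CR(u)$.

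The main obstacle, as I see it, is purely bookkeeping: one must carefully verify that the WB step, which only acts on a local peak, still corresponds to a strict decrease in the full object sequence's $\listext{>}$-ordering (this is exactly why we formulated the list extension to allow replacing a single element by a finite list of strictly smaller elements), and that the composition of the congruence-extended $\WB(p)$ with the inductively obtained $\CR(u')$ indeed produces a well-formed zig-zag of 2-cells in $(\Sigma_2 \uplus \Sigma_2^{-1})^*$ with source $u$ and a valley target. The reduct $x''$ produced by the induction is automatically a reduct of $u$ itself since sources and targets are preserved by rewrite zig-zags.
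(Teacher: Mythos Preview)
Your proposal is correct and follows essentially the same argument as the paper: Noetherian induction on the object sequence of the zig-zag via the list extension $\listext{>}$, with the base case being a zig-zag that is already a valley and the inductive step applying $\WB$ (via congruence closure) to a local peak to obtain a $\listext{>}$-smaller zig-zag. The paper's proof is identical in structure and in the justifications it invokes.
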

\begin{proof}
Let $y, z \in \Sigma_0$ be given. 
The goal is to construct, for any reduction zig-zag $u \in (y \tofromstar z)$, a sequence $\CR(u) \in (\Sigma_2 \uplus \Sigma_2^{-1})^*$ such that $s_1(\CR(u)) = u$ and such that $t_1(\CR(u))$ is a valley.
%%% don't need transitivity

The list extension of $>$ gives an order $\listext >$ on $\Sigma_0^*$
and, by taking the underlying object sequence of a zig-zag, this induces a relation on $(y \tofromstar z)$ which, by \cref{lem:list-ext-noetherian}, is Noetherian.
We show the goal by Noetherian induction on this relation.

A given $u \in (y \tofromstar z)$ is either of the form $(y \tostar x' \fromstar z)$, in which case we are done ($\CR(u)$ is the empty sequence),
or it contains a local peak and can be written as $u = v \cdot u' \cdot w$ with $v \in (y \tofromstar y')$, $u' \in (y' \from x \to z')$, $w \in (z' \tofromstar z)$.
The Winkler-Buchberger structure and the closure under congruence gives us a rewrite step 
\begin{equation} \label{eq:wb-step}
\left(v \cdot \WB(u') \cdot w\right) \in \left((v \cdot u' \cdot w) \To (v \cdot t_1(\WB(u')) \cdot w)\right).
\end{equation}
By construction of the list extension and by the condition on the Winkler-Buchberger structure, 
the induction hypothesis lets us assume that we already have a suitable $\CR(v \cdot t_1(\WB(u')) \cdot w) \in (\Sigma_2 \uplus \Sigma_2^{-1})^*$.
Concatenating \eqref{eq:wb-step} with that rewrite zig-zag gives $\CR(u)$.
\end{proof}

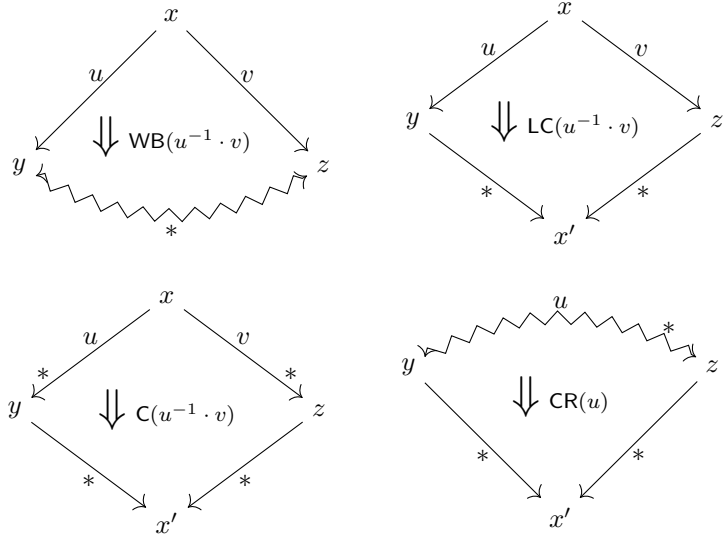
\begin{figure}
\centering
\begin{minipage}[t]{0.4\textwidth}
\centering
\begin{tikzpicture}[baseline=(current bounding box.center), x=2cm, y=2cm]
\tikzset{arrow/.style={-{>[length=1.3mm]}}}
\node (Px) at (0, 0) {$x$};
\node (Py) at (-1, -1) {$y$};
\node (Pz) at (1, -1) {$z$};
\node (PWB) at (0, -.8) {{\huge $\Downarrow$} {\footnotesize $\WB(u^{-1} \cdot v)$}};
\draw[arrow] (Px) to node [above] {$u$} (Py);
\draw[arrow] (Px) to node [above] {$v$} (Pz);
\draw[arrow, {<[length=1.3mm]}-{>[length=1.3mm]}, bend right,decorate,decoration=zigzag] (Py) to node [below] {$*$} (Pz);
\end{tikzpicture}
\end{minipage}
\begin{minipage}[t]{0.4\textwidth}
\centering
\begin{tikzpicture}[baseline=(current bounding box.center), x=2cm, y=1.5cm]
\tikzset{arrow/.style={-{>[length=1.3mm]}}}
\node (Px) at (0, 0) {$x$};
\node (Py) at (-1, -1) {$y$};
\node (Pz) at (1, -1) {$z$};
\node (Px') at (0, -2) {$x'$};
\node (PLC) at (0, -1) {{\huge $\Downarrow$} {\footnotesize $\LC(u^{-1} \cdot v)$}};
\draw[arrow] (Px) to node [above] {$u$} (Py);
\draw[arrow] (Px) to node [above] {$v$} (Pz);
\draw[arrow] (Py) to node [below] {$*$} (Px');
\draw[arrow] (Pz) to node [below] {$*$} (Px');
\end{tikzpicture}
\end{minipage} \\[1em]
\begin{minipage}[t]{0.4\textwidth}
\centering
\begin{tikzpicture}[baseline=(current bounding box.center), x=2cm, y=1.5cm]
\tikzset{arrow/.style={-{>[length=1.3mm]}}}
\node (Px) at (0, 0) {$x$};
\node (Py) at (-1, -1) {$y$};
\node (Pz) at (1, -1) {$z$};
\node (Px') at (0, -2) {$x'$};
\node (PLC) at (0, -1) {{\huge $\Downarrow$} {\footnotesize $\CS(u^{-1} \cdot v)$}};
\draw[arrow] (Px) to node [above] {$u$} node[pos=0.9,above] {$*$} (Py);
\draw[arrow] (Px) to node [above] {$v$} node[pos=0.9,above] {$*$} (Pz);
\draw[arrow] (Py) to node [below] {$*$} (Px');
\draw[arrow] (Pz) to node [below] {$*$} (Px');
\end{tikzpicture}
\end{minipage}
\begin{minipage}[t]{0.4\textwidth}
\centering
\begin{tikzpicture}[baseline=(current bounding box.center), x=2cm, y=2cm]
\tikzset{arrow/.style={-{>[length=1.3mm]}}}
\node (Py) at (-1, -1) {$y$};
\node (Pz) at (1, -1) {$z$};
\node (Px') at (0, -2) {$x'$};
\node (PLC) at (0, -1.2) {{\huge $\Downarrow$} {\footnotesize $\CR(u)$}};
\draw[arrow,{<[length=1.3mm]}-{>[length=1.3mm]},bend left,decorate,decoration=zigzag] (Py) to node [above]
  {$u$} node[pos=0.9,above] {$*$} (Pz);
\draw[arrow] (Py) to node [below] {$*$} (Px');
\draw[arrow] (Pz) to node [below] {$*$} (Px');
\end{tikzpicture}
\end{minipage}
\caption{Different notions of confluence.}
\label{fig:confluences}
\end{figure}

\subsection{A homotopy basis}

The goal of this section is to show that, with the help of a few assumptions, we can construct a \emph{homotopy basis} 
%TODO citation? where does it come from?
for a generalised 2-polygraph.

How does the concept of homotopy come into play here? Imagine a topological
realisation of the 2-polygraph where objects are represented by points,
reduction steps by the interval space, and rewrite steps by surfaces between the
zig-zags corresponding to their source and target.
Then, we can consider the fundamental group of this topological space.
The fundamental group is trivial (at any point) if,
for reduction zig-zags $u, v \in (x \tofromstar y)$,
there is always a ``filler'' $\alpha \in (u \Tofromstar v)$.
A subset of $\Sigma_2$ which can be used to ``fill'' every loop is a homotopy basis:

%By considering the concatenation $u \cdot v^{-1}$, the situation can be reduced to the case where one zig-zag is the empty zig-zag $\varepsilon$.

\begin{definition}[homotopy basis] \label{def:homotopy-basis}
	Let a generalised 2-polygraph $\Sigma = (\Sigma_0, \Sigma_1, \Sigma_2)$ be given.
	A \emph{homotopy basis} consists of, 
	for all $x,y \in \Sigma_0$ and
	$u,v \in (x \tofromstar y)$, a rewrite zig-zag
	$\alpha_{u,v} \in (u \Tofromstar v)$.
\end{definition}

\begin{remark}[alternative definition of homotopy basis]
	Instead of calling the set of all $\alpha_{u,v}$ a homotopy basis,
	it may seem more natural to refer to a subset $\mathcal B \subseteq \Sigma_2$ as a homotopy basis if one can construct all the $\alpha_{u,v}$ from rewrite steps in $\mathcal B$ (or their closure under congruence).
	That variation would in particular allow the formulation of properties or statements with respect to the cardinality of a basis.
	However, we are only interested in the collection of all the $\alpha_{u,v}$ itself, which is why we use the simplified \cref{def:homotopy-basis}.
\end{remark}

An additional but intuitive property that we need it the following:
\begin{definition}
	We say that a generalised 2-polygraph $(\Sigma_0, \Sigma_1, \Sigma_2)$ \emph{cancels inverses} if,
	for any reduction step $s \in (x \torel y)$, we have rewrite zig-zags $\RINV(s) \in (s \cdot s^{-1} \Tofromstar \varepsilon_x)$ and $\LINV(s) \in (s^{-1} \cdot s \Tofromstar \varepsilon_y)$. 
\end{definition}

%%% maybe come back to this later (in time).
%\todo[inline]{In the ``cancels inverses'' definition we can, of course, also use $\Tofromstar$, which would be the weaker (thus better, though less natural) assumption; but, in general, we can \emph{always} use the symmetric closure of $\Sigma_2$ anyway, so it's kind of strange to even distinguish between $\Tostar$ and $\Tofromstar$...}

\begin{theorem} \label{thm:main-theorem-settheory}
	Let $(\Sigma_0, \Sigma_1, \Sigma_2)$ be a terminating generalised 2-polygraph which is closed under congruence and which cancels inverses.
	If it has a Winkler-Buchberger structure $\WB$, then it has a homotopy basis.
%	and this homotopy basis is given as the closure under congruence of the set of all steps used in $\WB(u)$, $\LINV(s)$, and $\RINV(s)$.
\end{theorem}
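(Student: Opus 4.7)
The plan is to first promote the Winkler-Buchberger structure to a Church-Rosser structure via \cref{lem:wb-to-cr}, and then reduce the construction of a homotopy basis to the single statement that every loop is rewrite-trivial. Concretely, given $u, v \in (x \tofromstar y)$, a combination of closure under congruence and the cancellation $v^{-1} \cdot v \Tofromstar \varepsilon_y$ yields $u \Tofromstar u \cdot v^{-1} \cdot v$, so once we know that the loop $u \cdot v^{-1} \in (x \tofromstar x)$ satisfies $u \cdot v^{-1} \Tofromstar \varepsilon_x$ we obtain the desired $\alpha_{u,v} \in (u \Tofromstar v)$.

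The main work is therefore to show, for every $x \in \Sigma_0$ and every loop $\ell \in (x \tofromstar x)$, that $\ell \Tofromstar \varepsilon_x$. I would prove this by Noetherian induction on $x$. Applying $\CR$ to $\ell$ gives a valley form $\ell \Tofromstar p \cdot q^{-1}$ with $p, q \in (x \tostar z)$. Since sequences are inductively generated, I case-split on whether $p$ and $q$ are empty. If, say, $p = \varepsilon_x$, then $z = x$ and $q$ is a reduction sequence from $x$ to itself; using that Noetherianness of $>$ implies its irreflexivity, $q$ must be empty as well, and $\ell \Tofromstar \varepsilon_x$ is immediate. Otherwise, write $p = s \cdot p'$ and $q = t \cdot q'$ with $s : x \torel x_1$, $t : x \torel x_2$, $p' \in (x_1 \tostar z)$, $q' \in (x_2 \tostar z)$; termination yields $x > x_1$ and $x > x_2$.

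The decisive step is to feed the local peak $s^{-1} \cdot t \in (x_1 \from x \torel x_2)$ into the Winkler-Buchberger structure, producing $r \in (x_1 \tofromstar x_2)$ with $s^{-1} \cdot t \Tofromstar r$. From $s \cdot s^{-1} \Tofromstar \varepsilon_x$ and congruence we obtain $t \Tofromstar s \cdot r$, and hence $t^{-1} \Tofromstar r^{-1} \cdot s^{-1}$, so that
\[
\ell \;\Tofromstar\; s \cdot p' \cdot q'^{-1} \cdot t^{-1} \;\Tofromstar\; s \cdot (p' \cdot q'^{-1} \cdot r^{-1}) \cdot s^{-1}.
\]
The parenthesised zig-zag is a loop at $x_1$, and since $x > x_1$ the inductive hypothesis applies to give $p' \cdot q'^{-1} \cdot r^{-1} \Tofromstar \varepsilon_{x_1}$; congruence plus one more cancellation then collapses everything to $\ell \Tofromstar s \cdot s^{-1} \Tofromstar \varepsilon_x$.

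I expect the main obstacle to be bookkeeping rather than depth of idea: tracking endpoints through the repeated uses of congruence and cancellation, and ensuring that the ``first-step'' decomposition of $p$ and $q$ is a clean constructive case split on sequences. The Noetherian induction itself is straightforward once the invariant --- triviality of loops at the chosen basepoint --- is identified; interestingly, the condition on intermediate vertices in the Winkler-Buchberger structure is not used directly in this step, only earlier through \cref{lem:wb-to-cr} to extract the Church-Rosser structure that launches the argument.
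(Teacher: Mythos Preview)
Your argument is correct and follows the paper's overall strategy: reduce the problem to loops via congruence and cancellation, then do Noetherian induction on the basepoint, using the Church--Rosser structure to rewrite a loop into valley form $p\cdot q^{-1}$. The inductive step is organised differently, however. The paper does not peel off first steps or re-invoke $\WB$; instead it observes that when $v,w$ are nonempty one has $y>z$ for the \emph{bottom} vertex $z$, rotates to the closed zig-zag $w^{-1}\cdot v \in (z \tofromstar z)$, applies the induction hypothesis there, and collapses via
\[
v\cdot w^{-1}\;\Tofromstar\; v\cdot w^{-1}\cdot v\cdot v^{-1}\;\Tofromstar\; v\cdot v^{-1}\;\Tofromstar\;\varepsilon_y.
\]
This is a little cleaner than your route for two reasons. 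First, your second appeal to $\WB$ is actually redundant: you only need \emph{some} $r$ with $s^{-1}\cdot t \Tofromstar r$, and $r \defeq s^{-1}\cdot t$ with the empty rewrite already suffices (the resulting loop $p'\cdot q'^{-1}\cdot t^{-1}\cdot s$ at $x_1$ may pass through $x$, but the induction hypothesis at $x_1$ does not constrain intermediate vertices). Second, your ``hence $t^{-1}\Tofromstar r^{-1}\cdot s^{-1}$'' silently uses that $u \Tofromstar v$ implies $u^{-1}\Tofromstar v^{-1}$; this is derivable from $\INV$ and congruence, so there is no real gap, but the paper's version avoids needing it.
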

\begin{proof}
By \cref{lem:wb-to-cr}, we can construct a Church-Rosser structure $\CR$.

Moreover, using $\RINV$ and $\LINV$ together with closure under congruence we can, by straightforward induction on the length of a zig-zag $u \in (y \tofromstar z)$,
construct a sequence $\INV(u) \in (u \cdot u^{-1} \To \varepsilon_y)$, where $\varepsilon_y$ is the empty zig-zag.
Given $u, v \in (y \tofromstar z)$ and a rewrite zig-zag $\alpha \in (u \cdot v^{-1} \Tofromstar \varepsilon_y)$, closure under congruence shows that we also have
\begin{alignat}{5}
& u &\quad & \Tofromstar &\quad & u \cdot v^{-1} \cdot v 	&\qquad\qquad & \text{by $\INV(v^{-1})$,} \\
&&&\Tofromstar && v && \text{by $\alpha$}
\end{alignat}

By the above argument, it suffices to show the goal for the case that $u$ is a \emph{closed} reduction zig-zag ($u \in (y \tofromstar y)$) and $v$ is empty ($v = \varepsilon_y$).
By Noetherian induction on the object $y$, we show the following statement:
\begin{equation*}
P(y) := \text{``For all closed reduction zig-zags $u \in (y \tofromstar y)$, we have $\alpha_u \in (u \Tofromstar \varepsilon_y)$.''}
\end{equation*}

To this end, we assume $P(z)$ for all $y > z$ and take an arbitrary
reduction zig-zag $u \in (y \tofromstar y)$ for which we want to construct
$\alpha_u \in (u \Tofromstar \varepsilon_y)$.
Now consider the rewrite zig-zag $\CR(u) \in (u \Tofromstar v \cdot w^{-1})$,
where $v, w \in (y \tostar z)$ for the reduct $z$ of $\CR(u)$.
If either of $v$ or $w$ is an empty sequence, then so is the other and we have $y = z$, in which case we set 
$\alpha_u := \CR(u)$.
Otherwise, we have $y > z$.
In this case we consider the closed reduction zig-zag $w^{-1} \cdot v \in (z \tofromstar z)$.

From the induction hypothesis, we obtain a rewrite zig-zag
$\alpha_{w^{-1} \cdot v} \in (w^{-1} \cdot v \Tofromstar \varepsilon_z)$.
We can now use this rewrite zig-zag to construct the following chain of rewrites, again heavily relying on the closure under congruence; 
see \Cref{fig:basis} for an illustration:
\begin{alignat}{5}
& u &\quad & \Tofromstar &\quad & v \cdot w^{-1}			&\qquad\qquad & \text{by $\CR(u)$,} \\
&&&\Tofromstar && v \cdot w^{-1} \cdot v \cdot v^{-1}	&& \text{by $\INV(v)$,} \\
&&&\Tofromstar && v \cdot v^{-1}			&& \text{by $v \cdot \alpha_{w^{-1} \cdot v} \cdot v^{-1}$,} \\
&&&\Tofromstar && \varepsilon_y				&& \text{by $\INV(v)$,}
\end{alignat}
which completes the construction.
\end{proof}

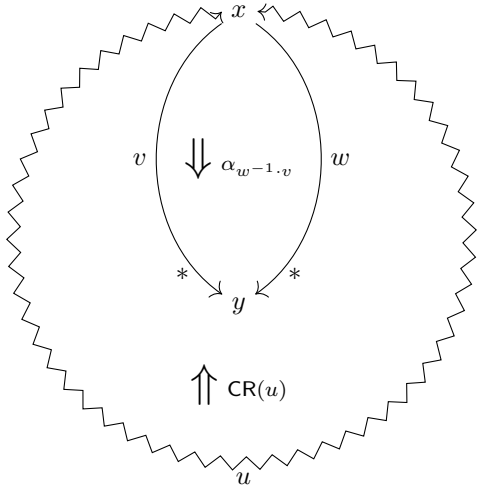
\begin{figure}[H]
\begin{tikzpicture}[baseline=(current bounding box.center), x=3cm, y=3cm]
\tikzset{arrow/.style={-{>[length=1.3mm]}}}
\node (Px) at (0, 1) {$x$};
\node (Py) at (0, -0.3) {$y$};
\node (PCR) at (0, -0.65) {{\huge $\Uparrow$} {\footnotesize $\CR(u)$}};
\node (Palpha) at (0, 0.35) {{\huge $\Downarrow$} {\footnotesize $\alpha_{w^{-1} \cdot v}$}};
\draw[arrow, {<[length=1.1mm]}-{>[length=1.1mm]},decorate,decoration=zigzag] (Px.west)arc(95:446:1) node [below, pos=0.5] {$u$};
\draw[arrow, bend right=55] (Px) to node [left] {$v$} node [left, pos=0.9] {$*$} (Py);
\draw[arrow, bend left=55] (Px) to node [right] {$w$} node [right, pos=0.9] {$*$} (Py);
\end{tikzpicture}
\caption{The induction step for the construction of the homotopy basis.}\label{fig:basis}
\end{figure}

\section{A translation to type theory: caveats and generalisations}
\label{subsec:translation-is-generalisation}

%\noindent
While the above constructions are formulated in an (unspecified) standard set-theoretic framework, the main motivation for the development
% (as presented in \cite{krausVonRaumer:wellfounded}) 
are applications in homotopy type theory.
Large parts of the translation of \cref{sec:non-tt} into homotopy type theory follow standard strategies and are to some degree mechanical, but some points require further attention.
Moreover, our type-theoretic construction is, due to the choices we make, more general than the set-theoretic development presented above. 
In this (short) section, we discuss these key points.
A reader working in homotopy type theory will likely find these explanations sufficient to understand how the type-theoretic formulation works and can then jump to \cref{sec:applications}, while someone interested in all details may wish to skip the current section 
%\cref{subsec:translation-is-generalisation} 
and immediately go to \cref{sec:tt}.

\subsection{Constructivity}
When translating a development into
(constructive) type theory,
the natural first consideration is whether 
any possibly implicit use of
the law of excluded middle and its consequences, especially proof by double negation
and decidability of equality, can be avoided.
This is the case here but, as it is often happens, relies on formulating the definitions correctly;
an obvious example is the definition of \emph{wellfounded} (and \emph{Noetherian}), where the classical negative phrasing ``there is no infinite sequence'' would not allow a constructive argument, while 
%Aczel's~\cite{aczelinductive} 
a well-known inductive definition works (cf.\ \cref{sec:closures}).
%, as presented in \cref{sec:closures}.

\subsection{Coherence}
More specific to the setting of homotopy type theory is the phenomenon of higher equalities.
In many cases, translations of standard mathematical concepts into homotopy type theory are phrased using sets (i.e.\ types satisfying UIP, see equation \eqref{eq:isSet} on page \pageref{eq:isSet}) in order to faithfully represent the corresponding theory, and formulating the same concepts for arbitrary types can be extremely difficult or impossible.

We choose to not restrict ourselves to sets for the mere purpose of being more general (although the set-case would suffice for the applications in \cref{sec:applications}).
This is not particularly difficult, but a possibly surprising consequence is that there may be rewrite zig-zags $u : x \Tofromstar x$ of length zero which are \emph{not} the trivial sequence $\varepsilon_x$.
To be precise, the type of zig-zags of length zero from $x$ to $y$ is equivalent to the equality type $x = y$.
The analogue to \cref{thm:main-theorem-settheory} thus needs to include the assumption that the rewrite system \emph{cancels empty sequences}, a condition which becomes trivial for sets (cf.\ \cref{thm:dirsetquot}).

Similarly, the faithful translation of the relations of \cref{sec:non-tt} would be to consider families of \emph{propositions}, i.e.\ types with at most one inhabitant.
Again, we aim for greater generality and avoid this assumption, which however has little consequences for the overall argument.
There are several further points where the type-theoretic formulation is, strictly speaking, a generalisation of the results of \cref{sec:non-tt}.
%, and we point this out when discussing the detailed translation in the next subsections. %%% do we?

\subsection{Formulation of results}
As we will discuss in \cref{subsec:noetherian-cycle-induction} below,
the construction of a homotopy basis is reminiscent of an induction principle,
and it is natural from a type-theoretic point of view to phrase it as such.
We will derive the following principle:

\begin{theorem*}[simplified formulation of \cref{thm:unary-basis}]
Let $A$ be a type with a binary relation $\leadsto$ that is Noetherian and locally confluent, and let $P$ be a type family indexed over closed zig-zags.
To prove (inhabit) $P$ for all closed zig-zags, it suffices to prove $P$ for empty zig-zags and for the diamonds that come from the local confluence property, and to check that $P$ is closed under standard groupoidal constructions.
\end{theorem*}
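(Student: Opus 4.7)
The plan is to translate the argument of \cref{thm:main-theorem-settheory} into an induction principle for closed zig-zags. The set-theoretic proof is already constructive in spirit, so the real work lies in (i) pinning down what ``closed under standard groupoidal constructions'' must mean for a type family $P$ indexed over closed zig-zags, and (ii) handling the fact that, without UIP, ``empty'' zig-zags carry genuine information (they are identifications $x = y$ in $A$, not a canonical $\refl_x$).

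First I would upgrade the local confluence structure to a Church--Rosser structure exactly as in \cref{lem:wb-to-cr}: by Noetherian induction on the list extension of $>$ applied to the object sequence of a zig-zag. The list extension is Noetherian by \cref{lem:list-ext-noetherian}, and this step does not yet involve $P$. Its output is, for every zig-zag $u$, a rewrite zig-zag $\CR(u)$ from $u$ to a valley $v \cdot w^{-1}$. Since closure under congruence and the local-confluence diamonds are given, the construction of $\CR$ is essentially a bookkeeping exercise in the type-theoretic version.

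The main step is Noetherian induction on the object $x \in A$, establishing $\Pi(u : x \tofromstar x).\, P(u)$. Fix $x$, assume the statement for all $z$ with $x > z$, and take $u : x \tofromstar x$. Apply $\CR$ to obtain $\CR(u) : u \Tofromstar v \cdot w^{-1}$ with $v, w : x \tostar z$. Using the closure of $P$ under (inverses of) rewrite steps, it suffices to exhibit $P(v \cdot w^{-1})$. If $v$ and $w$ are both empty then $v \cdot w^{-1}$ is an empty zig-zag and we invoke the empty-zig-zag hypothesis directly. Otherwise $x > z$, the induction hypothesis supplies $P(w^{-1} \cdot v)$, and we transport this to $P(v \cdot w^{-1})$ by the same maneuver as in \cref{thm:main-theorem-settheory}, via the chain
\begin{equation*}
v \cdot w^{-1} \; \Tofromstar \; v \cdot w^{-1} \cdot v \cdot v^{-1} \; \Tofromstar \; v \cdot v^{-1} \; \Tofromstar \; \varepsilon_x,
\end{equation*}
where the outer steps are $\INV(v)$ applied horizontally and the middle step sandwiches the lifted $P(w^{-1} \cdot v)$ between $v$ and $v^{-1}$.

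The main obstacle is the absence of UIP. Three points need care: (a) an ``empty'' closed zig-zag at $x$ is really an inhabitant of $x = x$, not a single canonical reflexivity, so the empty-zig-zag hypothesis must be parameterised by such an identification, and an analogue of ``cancels empty sequences'' (from \cref{thm:main-theorem-settheory}) enters here; (b) the list of ``standard groupoidal constructions'' closing $P$ must be fixed to exactly the operations the proof invokes---composition of zig-zags, inverses of zig-zags, horizontal composition with a fixed sequence on either side, and the uses of $\INV$ above---plus the rewrite-step lifting used to absorb $\CR(u)$; (c) the case split on whether the reduct equals $x$ must be framed well-foundedly rather than decidably. I would frame the induction so that the ``base'' case is absorbed into the inductive step via the Church--Rosser output---either that output is an empty valley, dispatched by the empty-zig-zag hypothesis, or its endpoint is strictly smaller---avoiding any appeal to decidability of equality on $A$.
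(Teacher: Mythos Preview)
Your outline captures the right argument---Noetherian induction on the base object, fed by a Church--Rosser structure built from local confluence via the list extension, with the careful handling of empty zig-zags in the absence of UIP---and that is exactly the mathematical content of the paper's proof. The packaging differs: the paper does not run the induction directly on $P$, but first defines a 2-polygraph by setting $(u \To v) \defeq P(u \cdot v^{-1})$, verifies that this polygraph satisfies the hypotheses of \cref{thm:tt-basis} (congruence closure, inverse cancellation, a Winkler--Buchberger structure, cancellation of empty zig-zags), obtains a homotopy basis there, and finally collapses the resulting rewrite zig-zags back to $P$-witnesses via pasting and inversion.

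That device resolves a tension visible in your sketch. You build $\CR$ in a polygraph that ``does not yet involve $P$'', but then in the final chain you insert the induction hypothesis $P(w^{-1} \cdot v)$ as if it were a 2-cell $(w^{-1} \cdot v \Tofromstar \varepsilon_z)$; those live in different polygraphs. Declaring $(u \To v) \defeq P(u \cdot v^{-1})$ is precisely the bridge: the induction hypothesis \emph{becomes} a single rewrite step, and the chain from \cref{thm:main-theorem-settheory} then goes through verbatim. Two further points: you treat congruence closure as given, but in \cref{thm:unary-basis} it is not a primitive hypothesis---the paper derives it for the $P$-polygraph from rotation, pasting, and the inverse conditions, and this is where a non-trivial share of the work sits; and once rotation is among the closure properties, your chain is unnecessary, since $P(w^{-1} \cdot v) \to P(v \cdot w^{-1})$ is a single rotation.
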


%%% looks bad in the TOC:
%\section{A translation into homotopy type theory}
\section{A homotopy basis in homotopy type theory} 
\label{sec:tt}

%\noindent
This section serves as a high-level description of our Agda formalisation, which in turn presents the type-theoretic development in full detail.
The main mathematical ideas which make the arguments work correspond to those presented in \cref{sec:non-tt} and, keeping the caveats and remarks of \cref{subsec:translation-is-generalisation} in mind,
the type-theoretic formulation follows standard strategies.
The expert reader may wish to immediately jump to \cref{sec:applications}.

We use \cref{subsec:specify-type-theory} to specify the type theory that we work in.
For any such translation to type theory, the specific choices one makes determine whether the procedure is straightforward or includes (mathematical) challenges.
We explain in \cref{subsec:translation-is-generalisation} which choices we make and how they make the type-theoretic statement a generalisation of the results in \cref{sec:non-tt}.
The concrete step-by-step translation is split into two subsections:
In \cref{sec:closures}, we examine how closures of binary relations behave
in homotopy type theory,
and in \cref{sec:hott-polygraphs}, we discuss the construction of a homotopy basis of a generalised 2-polygraph.
We then show in \cref{subsec:noetherian-cycle-induction} how this work allows us to derive an induction-like statement similar to the one proved in our previous conference paper~\cite{krausVonRaumer:wellfounded}.

As explained in \cref{subsec:structure-and-agda} above, we have formalised this part of the paper in Agda
We link the main constructions and proofs to the \texttt{html} version of the formalisation.

\subsection{Homotopy type theory as the setting} \label{subsec:specify-type-theory}

The type theory we work in is homotopy type theory as developed in the book~\cite{hott-book}.
However, for the translation of the results of \cref{sec:non-tt} itself, we do not rely on any features specific to homotopy type theory:
All of what we do here (i.e.\ in \cref{sec:tt}) works in intensional Martin-L\"of type theory with function extensionality.
Only for the applications that we discuss in \cref{sec:applications},
we rely on having set-quotients and a univalent universe.

Regarding notation, we mostly follow the book~\cite{hott-book}.
For \emph{judgmental} (also known as \emph{definitional}) equality between expressions, we use the symbol $\equiv$, and for definitions, we write $\defeq$.

In detail, the type theory that we consider features the following components:
\begin{itemize}
\item We assume that the type theory has a (Russell-style) \emph{universe} $\UU$.
It is standard to assume a hierarchy 
\begin{equation*}
\UU_0 : \UU_1 : \UU_2 : \ldots
\end{equation*}
of universes, and in such a theory, $\UU$ may denote any universe $\UU_i$
(i.e.\ our constructions are \emph{universe polymorphic}). 
\item Besides types of non-dependent functions we require for any type 
$A : \UU$ and for any \emph{type family} $B : A \to \UU$ the
type $\Pi(a:A).B(a)$ of \emph{dependent functions}.
\item For any type $A : \UU$ with elements $a,b :A$, we have the 
Martin-L\"of \emph{equality type} which,
following \cite{hott-book}, we denote by $(a = b) : \UU$.
The equality type is generated inductively on a witness for its reflexivity,
which means that we have $\refl_a : (a = a)$ for each $a : A$
(we will sometimes omit the subscript).
Its induction principle, called the \emph{J-rule}, states that
to produce an element of the type $\Pi(b : A). \Pi (p : a = b). C(b, p)$ it suffices
to provide an element of $C(a, \refl_a)$. \\
Caveat: In \href{http://www.cs.nott.ac.uk/~psznk/agda/confluence/}{Agda}, the roles of $=$ and $\equiv$ are reversed. Definitions are written using $=$, while the equality type is written using $\equiv$.
\item For $A : \UU$ and $B : A \to \UU$, we will need the type of dependent pairs,
written $\Sigma(a : A). B(a)$, with the non-dependent version (for $C : \UU$)
written $A \times C$.
\item We require the disjoint sum $A \uplus C$ of types,
with obvious functions $\inl : A \to A \uplus C$ and $\inr : C \to A \uplus C$.
\item We assume that the type theory has inductive types and inductive families, examples for which we will see in \cref{sec:closures}.
\item Finally, for $\Pi$-types, we assume function extensionality which, in its simplest form,
says that for two functions $f, g : \Pi(a : A). B(a)$ we have $f = g$
as soon as they are pointwise equal: $\Pi (a : A). f(a) = g(a)$.%
\footnote{It is known that this simple form already implies the seemingly stronger formulations of function extensionality.}
\end{itemize}

To improve readability and usability, many proof assistants 
implement implicit arguments.
We use them in writing as well, as a purely notational device, as
we write implicit arguments in curly brackets $\{\}$.
For example, $\Pi\{a:A\}. B(a) \to C$ denotes the same type as $\Pi(x:A). B(a) \to C$.
For $a : A$ and $b : B(a)$, if we have $f : \Pi\{a:A\}. B(a) \to C$ 
and $g : \Pi(a:A). B(a) \to C$, we implicitly uncurry and write $g(a, b) : C$ but can omit the implicit
argument in the other case and write $f(b) : C$.
Instead of $\Pi(a:A).\Pi(b: B(a)).C(a,b)$, we write $\Pi(a:A),(b:B(a)). C(a,b)$.

The inductive definition of equality induces the structure of a (higher) groupoid
on types, which forms the basis of the synthetic kind of topology performed in
homotopy type theory.
We will not introduce the corresponding terminology in full, but we will in the
following go through the notions we will use in this paper.

Besides transitivity and symmetry, equality has the property that any function
becomes a functor with respect to the induced groupoids:
Given $f : A \to B$, $a, a' : A$ and an equality $p : a = a'$, we have
$\ap_f(p) : f(a) = f(a')$.
Also, equal elements are indiscernible in the sense that if $B : A \to \UU$ is
a type family over $A$ and we have $a, a' : A$ with $p : a = a'$,
we have the implication $p_* : B(a) \to B(a')$, the so called
\emph{transport} along $p$.

Taking iterated equalities (i.e.\ equalities on equality types), we discover a
type's \emph{higher structure}.
The number of iterations of equality types we have to take after which equalities
don't carry information is called a types \emph{h-level}.
In particular, a type $A : \UU$ in which we have
\begin{itemize}
\item $\Pi (x, y : A). x = y$ is called a \emph{proposition} or \emph{(-1)-type},
\item $\Pi (x, y : A) (p, q : x = y). p = q$ is called an \emph{(h-)set} or \emph{0-type},
\item $\Pi (x, y : A) (p, q : x = y) (\alpha, \beta : p = q). \alpha = \beta$ is called a \emph{1-type}, and so on.
\end{itemize}

Finally, another notion which makes an appearance in this section paper is the
one of \emph{equivalent types}.
We denote with $A \simeq B$ the type of \emph{equivalences} between $A$ and $B$,
i.e.\ functions $f : A \to B$ which have a two-sided inverse $g : B \to A$ such
that for each $x : A$ the two ways of proving the equality $f (g (f (x))) = x$
(by cancelling $g \circ f$ and by cancelling $f \circ g$) coincide.

We refer to \cite{hott-book} for the details of the concepts discussed above.

\subsection{Properties and closures of binary relations\nopunct}
\label{sec:closures}
 (files \href{http://www.cs.nott.ac.uk/~psznk/agda/confluence/squier.graphclosures.html}{\nolinkurl{graphclosures}}, \href{http://www.cs.nott.ac.uk/~psznk/agda/confluence/squier.accessibility.html}{accessibility}, and \href{http://www.cs.nott.ac.uk/~psznk/agda/confluence/squier.listextension.html}{listextension})\textbf{.}
By a \emph{binary relation} on a type $A$ we mean, in the type-theoretic setting, simply a type family $R : A \to A \to \UU$.
This is sometimes called a \emph{proof-relevant} binary relation since $R \, a \, b$ can potentially have many (different) inhabitants.

Thus, a \emph{1-polygraph} in type theory is simply a type $\wasSigmaNull : \UU$ together with a binary relation $(\_ \torel \_) : \wasSigmaNull \to \wasSigmaNull \to \UU$, where the blanks reserve the places for arguments; that is, we write $(x \torel y) : \UU$, mirroring the notation used in \cref{sec:non-tt}.
As before, we will write
\begin{itemize}
 \item $\leadstostar$ for the reflexive-transitive closure,
 \item $\leadstofrom$ for the symmetric closure,
 \item $\leadstofromstar$ for the symmetric-reflexive-transitive closure
\end{itemize}
of the relation $\torel$.
The type-theoretic implementation of these closures is standard.
The reflexive-transitive closure is constructed as an inductive family with two constructors:
\begin{equation} \label{eq:refl-trans-rel-def}
 \begin{aligned}
  & \text{inductive } (\_ \leadstostar \_) : \, \wasSigmaNull \to \wasSigmaNull \to \UU \text{ where} \\
  & \qquad \mathsf{nil}: \Pi (x : \wasSigmaNull).\, (x \leadstostar x) \\
  & \qquad \consLong: \Pi\{x, y, z : \wasSigmaNull\}.\, (x \leadstostar y) \to (y \leadsto z) \to (x \leadstostar z)
 \end{aligned}
\end{equation}
The symmetric closure is the obvious disjoint sum,
\begin{equation} \label{eq:sym-rel-def}
 (x \leadstofrom y) \defeq (x \leadsto y) \uplus (y \leadsto x).
\end{equation}
%The transitive closure can be defined analogously to \eqref{eq:refl-trans-rel-def},
%with the constructor $\mathsf{nil}$ replaced by a constructor of the form $(a \torel b) \to (a \torel^+ b)$. TODO do we need transitive closure?
The symmetric-reflexive-transitive closure $\leadstofromstar$ is constructed by
first taking the symmetric and then the reflexive-transitive closure.
The transitive closure $\leadsto^+$ is defined analogously to $\leadstostar$, but with $\mathsf{nil}$ replaced by a constructor taking a single step (cf.\ 
\href{http://www.cs.nott.ac.uk/~psznk/agda/confluence/squier.graphclosures.html#1590}{TransReflClosure},
\href{http://www.cs.nott.ac.uk/~psznk/agda/confluence/squier.graphclosures.html#1258}{SymClosure},
\href{http://www.cs.nott.ac.uk/~psznk/agda/confluence/squier.graphclosures.html#15280}{TransClosure} in the formalisation).

Since the relation $\leadsto$ is proof-relevant, the same is the case for its closures.
As a consequence, we have functions $(x \leadstostar y) \to (x \leadsto^{**} y)$
and $(x \leadsto^{**} y) \to (x \leadstostar y)$, but these are in general not inverse to each other.
The analogous caveat holds for the other closure operations.
However, this observation has no consequences for our constructions and proofs.

Mirroring the earlier terminology, we call an element of $x \tostar y$ a \emph{sequence}, and an inhabitant of $x \tofromstar y$ a \emph{zig-zag}.
Let us write $\varepsilon_x$ instead of $\mathsf{nil} \, x$ for the trivial sequence at point $x$.
Given two sequences (or zig-zags) $u : x \tostar y$ and $v : y \tostar z$, the definition of their concatenation (by induction on $u$) is standard.
Adopting the notation in \cref{sec:non-tt}, we write $u \cdot v$ for this concatenation, no matter whether $u, v$ are single steps, sequences, or elements of the symmetric closure.
It is standard that the operation $\mathop{\cdot}$ is associative.
Moreover, $u : x \leadstofromstar y$ can be inverted by inverting every single step, and we denote this operation by $u^{-1} : y \leadstofromstar x$.
Inversion and concatenation interact in the obvious way.
Note that $x \leadstostar y$ embeds into $x \leadstofromstar y$ as a \emph{positive}
zig-zag while $y \leadstostar x$ embeds into $x \leadstofromstar y$ as a \emph{negative}
zig-zag, this giving us two distinct embeddings in the case of a closed zig-zag
$x \leadstofromstar x$.

Since it constitutes a real difference between the type-theoretic development 
and the set-theoretic on in \cref{sec:non-tt}, we make the following definition
and statements explicit:

\begin{definition}[length of sequences or zig-zags; cf.\ \href{http://www.cs.nott.ac.uk/~psznk/agda/confluence/squier.graphclosures.html\#6664}{length$^\text{t}$}] \label{def:length-of-zig-zag}
	There is an obvious function
	\begin{equation}
	\length: \Pi\{x, y : \wasSigmaNull\}.\, (x \tostar y) \to \N
	\end{equation}
	which calculates the length of a sequence.
	Starting with $\tofrom$ instead of $\torel$, it calculates the length of a zig-zag.
	We say that a sequence (zig-zag) $\alpha$ is \emph{empty} if its length is zero, and write
	\begin{align}
	& (x \tozero x) \; \defeq \; \Sigma(u : x \tostar x). \length (u) = 0 \\
	& (x \tofromzero x) \; \defeq \; \Sigma(u : x \tofromstar x). \length (u) = 0.
	\end{align}
	%For the type of \emph{empty cycles}, we write
	%\begin{equation}
	%\biggerhexagon^\torel_\emptyset \defeq \Sigma(\alpha : \cycle \torel). \mathsf{length}(\alpha) = 0.
	%\end{equation}
\end{definition}

%\begin{remark}[empty closed zig-zags] 
The trivial closed sequence (or zig-zag) $\varepsilon_x : (x \tostar x)$ is, of course, empty,
but in the case where $A$ is not a set but a higher type,
not every empty closed sequence (zig-zag) is equal to $\varepsilon_x$.
Instead, an empty closed sequence (zig-zag) at point $x$ corresponds to a path of type $(x = x)$ in $\wasSigmaNull$.
Since \eqref{eq:refl-trans-rel-def} without the second
constructor is the usual definition of Martin-L\"of's identity type as an inductive family, it is easy to see the following:

\begin{lemma} \label{lem:three-types-equivalent}
	For any point $x : \wasSigmaNull$, the three types $(x \tozero x)$ and $(x \tofromzero x)$ and $(x = x)$ are equivalent. \qed
\end{lemma}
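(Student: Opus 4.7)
The plan is to exploit the fact that Martin-L\"of's identity type $(x = y)$ is precisely the inductive family generated by the reflexivity constructor, and the $\nil$ constructor in~\eqref{eq:refl-trans-rel-def} plays exactly the same role for $\tostar$. Restricting to length-zero elements effectively forbids the $\consLong$ constructor, leaving only data built from $\nil$. It therefore suffices to exhibit mutually inverse maps between $(x = x)$ and $(x \tozero x)$; the same recipe then yields $(x = x) \simeq (x \tofromzero x)$, since $\tofromstar$ is itself defined as a reflexive-transitive closure (of $\tofrom$) with the same $\nil$ constructor.

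In the forward direction I would define, by path induction, a function $\mathsf{fromEq} : \Pi\{y : A\}.\,(x = y) \to (x \tostar y)$ sending $\refl_x$ to $\nil\,x$. By computation $\length(\mathsf{fromEq}(\refl_x)) \equiv 0$, so pairing with the trivial length witness gives $\phi : (x = x) \to (x \tozero x)$. In the reverse direction I would construct $\psi : (x \tozero x) \to (x = x)$ by case analysis on the underlying sequence: the case $\nil\,x$ returns $\refl_x$; the case $\consLong\,u'\,s$ is impossible because $\length(\consLong\,u'\,s) \equiv \length(u') + 1 \neq 0$, so the accompanying length-zero witness inhabits the empty type.

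To see that $\phi$ and $\psi$ are mutually inverse, both compositions reduce judgmentally on the only non-vacuous branch ($\nil\,x$ and $\refl_x$ respectively); the remaining cleanup is to observe that length witnesses are equalities in $\N$, a set, so any two length-zero pairs with the same sequence component are automatically equal. The main subtlety, if any, is purely administrative: ensuring that the length-zero constraint genuinely eliminates the $\consLong$ branch of the induction. In the Agda formalisation this is handled transparently by pattern matching together with absurd elimination on the resulting equation $\length(u') + 1 = 0$ in $\N$.
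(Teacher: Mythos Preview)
Your proposal is correct and matches the paper's approach exactly: the paper gives no detailed proof, only the observation (immediately preceding the lemma) that the definition~\eqref{eq:refl-trans-rel-def} with the $\consLong$ constructor removed is literally Martin--L\"of's identity type, and marks the lemma with \qedsymbol. Your explicit construction of inverse maps via path induction and case analysis, using the length-zero constraint to kill the $\consLong$ branch, is precisely the elaboration this remark invites; the only minor point to keep straight is that verifying $\psi\circ\phi = \id$ requires the endpoint-generalised versions of both maps (which you already set up for $\mathsf{fromEq}$) so that path induction applies.
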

\begin{corollary}\label{cor:empty-versus-trivial}
	If $\wasSigmaNull$ is a set, then the trivial sequence (or zig-zag) $\varepsilon_x$ is the only empty sequence (or zig-zag) at point $x$. \qed
\end{corollary}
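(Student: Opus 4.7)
The plan is to derive this essentially as an immediate consequence of \cref{lem:three-types-equivalent}. Recall that ``$A$ is a set'' means that every identity type $x = y$ is a proposition; in particular, for any $x : A$, the type $x = x$ has at most one element up to propositional equality, and that element must be $\refl_x$. Since being a proposition is invariant under equivalence, \cref{lem:three-types-equivalent} then transfers this contractibility-of-inhabitants from $(x = x)$ to both $(x \tozero x)$ and $(x \tofromzero x)$.

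More concretely, I would first observe that the equivalences of \cref{lem:three-types-equivalent} send $\refl_x : (x = x)$ to the pair $(\varepsilon_x, \refl_0)$ in $(x \tozero x)$, and analogously on the zig-zag side; this is immediate from how those equivalences are built in the proof of the preceding lemma (namely, the trivial sequence/zig-zag $\varepsilon_x$ is defined by the constructor $\mathsf{nil}\,x$, whose length-proof is definitionally $\refl_0$). Then, given any other empty sequence $u : x \tostar x$ with $\ell : \length(u) = 0$, transport the pair $(u, \ell)$ across the equivalence to obtain an element $p : x = x$. By the set hypothesis, $p = \refl_x$, and transporting back yields $(u, \ell) = (\varepsilon_x, \refl_0)$. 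Taking first projections gives $u = \varepsilon_x$. The zig-zag case is identical with $\tozero$ replaced by $\tofromzero$.

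The only subtlety — and the single point where some care is required — is that the ``empty sequence at $x$'' has been defined as a $\Sigma$-type $\Sigma(u : x \tostar x). \length(u) = 0$, so what the argument above literally yields is equality of pairs, not equality of the underlying sequences. However, since $\N$ is a set, the fibre $\length(u) = 0$ is a proposition, so the first projection from $(x \tozero x)$ to $(x \tostar x)$ is an embedding, and the desired equality $u = \varepsilon_x$ follows. There is no substantial obstacle here; the whole argument is a short unfolding of \cref{lem:three-types-equivalent} together with the definitional behaviour of equivalences on the unit of the closure operations.
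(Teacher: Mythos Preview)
Your proposal is correct and follows exactly the route the paper intends: the corollary is marked with a \qed and is meant to be read as an immediate consequence of \cref{lem:three-types-equivalent} together with the set hypothesis, which is precisely what you spell out. The extra care you take with the $\Sigma$-type and the first projection being an embedding is accurate and harmless, though more detail than the paper bothers to record.
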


Given a zig-zag $u : x \tofromstar y$, we can ask whether it is \emph{strictly increasing (decreasing)}, i.e.\ whether each step in the zig-zag comes from the left (right) summand in \eqref{eq:sym-rel-def},
i.e.\ is of the form $\mathsf{inl}(t)$ with $t : x \leadsto y$ ($\mathsf{inr}(t)$ with $t : y \leadsto x$).
The terminology is then the obvious one, copying the one from the set-theoretic development:
We call a zig-zag a \emph{peak} if it is the concatenation of an increasing and a decreasing
zig-zag, a \emph{valley} if it is the concatenation of a decreasing and an
increasing zig-zag, and so on.

The next concept which has an interesting equivalent in type theory is the
one of wellfoundedness or Noetherianness.
As remarked before, the usual (classical) formulation of the form ``no infinite sequence exists'' is unsuitable in a constructive setting.
The inductive characterisation which we use instead is well-known in type theory and due to Aczel~\cite{aczel2001notes}.
Given a binary relation $<$ on a type $A$, one first defines
the notion of accessibility as an inductive type family:
\begin{definition}[$\Phi_<$ in \cite{aczelinductive};
\href{http://www.cs.nott.ac.uk/~psznk/agda/confluence/Cubical.Induction.WellFounded.html\#318}{Acc} in the cubical library]
 The family $\acc < : A \to \UU$ is generated inductively by a single constructor,
 \begin{equation}
     \mathsf{step}: \Pi(a:A).\, (\Pi(x:A).\, (x < a) \to \acc < (x)) \to \acc < (a),
 \end{equation}
 i.e.\ an element $a$ is \emph{accessible} ($\acc \, a)$ if every element smaller than it is.
 The relation $<$ is \emph{wellfounded} if every element is accessible,
 \begin{equation}
  \mathsf{isWellFounded}(<) \defeq \Pi(a:A).\, \acc < (a).
 \end{equation}
 If $<$ is wellfounded, then $>$, where $(x > y) \defeq (y < x)$, is called \emph{Noetherian}.
\end{definition}

While the definition in \cite[Chp.~10.3]{hott-book} is only given for the special
case that $A$ is a set and $<$ is valued in propositions, the more general case
that we consider works in exactly the same way (cf.\ our formalisation).
In particular, we have the following two results:

\begin{lemma}[cf.\ \href{http://www.cs.nott.ac.uk/~psznk/agda/confluence/Cubical.Induction.WellFounded.html\#501}{isPropAcc} in the cubical library]
 For any $x$, the type $\acc < (x)$ is a proposition.
 Further, the statement that $<$ is wellfounded is a proposition. \qed
\end{lemma}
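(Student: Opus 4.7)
The plan is to prove the first claim — that $\acc < (x)$ is a proposition for every $x : A$ — by induction on accessibility, and then to derive the second claim directly by applying function extensionality.

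First I would aim to establish $\Pi(x : A)(p, q : \acc < (x)).\, p = q$ by accessibility induction on $p$. In the step case, $p$ is of the form $\mathsf{step}(x, h)$ for some $h : \Pi(y : A).\, (y < x) \to \acc < (y)$, and the induction hypothesis reads $\Pi(y : A)(r : y < x)(q' : \acc < (y)).\, h(y, r) = q'$. Given an arbitrary $q : \acc < (x)$, I would use the eliminator of $\acc$ at $q$ (equivalently, dependently pattern match on $q$) to write $q$ as $\mathsf{step}(x, h')$ for some $h'$. Applying $\ap$ to the constructor then reduces the goal $\mathsf{step}(x, h) = \mathsf{step}(x, h')$ to the equality $h = h'$; by function extensionality, this reduces further to $\Pi(y)(r).\, h(y, r) = h'(y, r)$, which is precisely the induction hypothesis instantiated with $q' \defeq h'(y, r)$.

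The subtle point in this plan is the case-analysis on $q$: the accessibility induction on $p$ gives an induction hypothesis only for $p$'s sub-proofs, not for $q$. Formally, one either mounts a second, suitably indexed accessibility induction on $q$, or (as in our Agda formalisation) dependently pattern matches on both $p$ and $q$ simultaneously — the resulting recursive call is well-founded because it is made on the strictly smaller proofs $h(y, r)$ and $h'(y, r)$. Everything else in the argument is routine groupoidal bookkeeping involving $\ap$ and function extensionality.

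For the second claim, note that $\mathsf{isWellFounded}(<) \equiv \Pi(a : A).\, \acc < (a)$ is a dependent product whose codomain is, by the first claim, a proposition at every index. A standard consequence of function extensionality is that any dependent product of propositions is itself a proposition, so $\mathsf{isWellFounded}(<)$ is a proposition as well.
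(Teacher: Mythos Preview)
Your proposal is correct and is exactly the standard argument; the paper itself gives no proof (the lemma is marked with \qed and simply points to \texttt{isPropAcc} in the cubical library), and the cubical implementation is precisely the simultaneous pattern match on $p$ and $q$ with a recursive call on the subproofs, combined with function extensionality, that you describe. The second claim follows, as you say, from the closure of propositions under dependent products.
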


Proving properties of wellfounded relations is made possible by the following principle:
\begin{lemma}[{accessibility induction \cite[Chp.~10.3]{hott-book}; \href{http://www.cs.nott.ac.uk/~psznk/agda/confluence/squier.accessibility.html\#770}{acc-ind}}]\label{lem:tt-acc-induction}
 Assume we are given a family $P : A \to \UU$ such that we have
 \begin{equation}
  \Pi(a_0 : A).\, \acc < (a_0) \to \left(\Pi(a<a_0).\, P(a)\right) \to P(a_0).
 \end{equation}
 In this case, we get:
 \begin{equation}
  \Pi(a_0 : A).\, \acc < (a_0) \to P(a_0).
 \end{equation}
 If $(<)$ is wellfounded, the argument $\acc < (a_0)$ can be omitted and the
 principle is known as \emph{wellfounded induction}.
\end{lemma}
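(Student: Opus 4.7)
The plan is to proceed by direct induction on the accessibility predicate. Recall that $\acc{<}$ is the inductive family generated by the single constructor $\mathsf{step}$, so its eliminator says that to build a term of type $\Pi(a_0:A).\, \acc{<}(a_0) \to Q(a_0)$ for any motive $Q : A \to \UU$, it suffices to show, for every $a_0 : A$ together with some $g : \Pi(x:A).\, (x < a_0) \to \acc{<}(x)$ and a recursive hypothesis $\tilde g : \Pi(x:A).\, (x < a_0) \to Q(x)$, how to produce an element of $Q(a_0)$.

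First I would take $Q \defeq P$. The induction step then asks for a construction of $P(a_0)$ from $a_0$, from $g : \Pi(x:A).\, (x<a_0) \to \acc{<}(x)$, and from $\tilde g : \Pi(x:A).\, (x<a_0) \to P(x)$. Applying the hypothesis of the lemma to $a_0$, to the accessibility witness $\mathsf{step}(a_0, g) : \acc{<}(a_0)$, and to $\tilde g$ (which has exactly the right type $\Pi(a<a_0).\, P(a)$), yields the required element of $P(a_0)$. This defines the desired term of type $\Pi(a_0:A).\, \acc{<}(a_0) \to P(a_0)$.

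For the final remark on wellfounded induction, I would simply precompose with the given inhabitant of $\mathsf{isWellFounded}(<) \equiv \Pi(a:A).\, \acc{<}(a)$, which supplies the accessibility argument uniformly at each $a_0$.

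I do not anticipate any genuine obstacle: the statement is essentially the eliminator of the inductive family $\acc{<}$, rearranged so that the accessibility witness is propagated explicitly instead of being absorbed into an ambient wellfoundedness assumption. The only point deserving a little care is to observe that the accessibility argument required by the hypothesis at the induction step is precisely the canonical form $\mathsf{step}(a_0, g)$ that is already at hand, so no additional data need be manufactured.
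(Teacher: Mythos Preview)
Your proof is correct and is precisely the standard argument: the lemma is nothing more than the (non-dependent) eliminator of the inductive family $\acc{<}$, with the accessibility witness at the induction step rebuilt as $\mathsf{step}(a_0,g)$. The paper does not actually spell out a proof for this lemma at all, deferring instead to \cite[Chp.~10.3]{hott-book} and the Agda formalisation, so your proposal matches the intended (and only reasonable) approach.
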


An easy application which demonstrates this induction principle is the following:
\begin{lemma} \label{lem:acc-trans-acc}
Let us write $<^+$ for the transitive closure of $<$. If $a:A$ is $<$-accessible, then it is $<^+$-accessible.
\end{lemma}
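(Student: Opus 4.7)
The plan is to prove this by accessibility induction on $a$ with respect to $<$, as formalised in \cref{lem:tt-acc-induction}. So I set $P(a) \defeq \acc{<^+}(a)$ and aim to establish
\begin{equation*}
\Pi(a : A).\, \acc < (a) \to \left(\Pi(a'<a).\, P(a')\right) \to P(a).
\end{equation*}
Fix $a$ together with the induction hypothesis: for every $a' < a$ we have $\acc{<^+}(a')$. By the constructor of $\acc{<^+}$, to produce $\acc{<^+}(a)$ it suffices to show that every $b$ with $b <^+ a$ is itself $<^+$-accessible.

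Following the snoc-based definition of the transitive closure (the analogue of $\consLong$ from \eqref{eq:refl-trans-rel-def} with the $\mathsf{nil}$-constructor replaced by a single-step constructor), any witness $b <^+ a$ falls into one of two cases, on which I induct. In the single-step case we have $b < a$ directly, and the outer induction hypothesis immediately gives $\acc{<^+}(b)$. In the snoc case we have some intermediate $c$ with a witness $b <^+ c$ and a step $c < a$; the outer induction hypothesis applied to $c$ yields $\acc{<^+}(c)$, and then unfolding this accessibility witness with the argument $b <^+ c$ delivers $\acc{<^+}(b)$.

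The only subtle point will be choosing which induction to run over which argument: it is tempting to try induction on the witness $b <^+ a$ directly, but then one cannot invoke the outer hypothesis at the correct element. The clean way is the nested one above, where the outer accessibility induction on $a$ provides, for every strict $<$-predecessor, the stronger $<^+$-accessibility, which is exactly the currency needed to discharge both cases of the transitive-closure constructor. No further closure properties of $<^+$ are required, and the argument is entirely constructive, so it translates verbatim to the Agda formalisation.
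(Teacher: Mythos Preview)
Your proof is correct and follows the same approach as the paper: accessibility induction on $<$. The paper's proof is a one-liner (``By $<$-accessibility induction on $P(a) \defeq (\acc < (a) \to \acc {<^+} (a))$''), and you have simply unfolded the inductive step in detail, exploiting the snoc-shape of the transitive-closure constructors exactly as intended.
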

\begin{proof}
	By $<$-accessibility induction on $P(a) \defeq (\acc < (a) \to \acc {<^+} (a)$.
\end{proof}
\begin{corollary}[of \cref{lem:acc-trans-acc}; cf.\
\href{http://www.cs.nott.ac.uk/~psznk/agda/confluence/squier.accessibility.html\#3052}{transitive-wellfounded}] \label{cor:+-Noetherian}
	If $<$ is well-founded, then so is $<^+$. \qed
\end{corollary}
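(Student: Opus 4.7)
The plan is to immediately reduce the corollary to Lemma \ref{lem:acc-trans-acc}. Wellfoundedness of $<$ unfolds to $\Pi(a:A).\,\acc{<}(a)$, and wellfoundedness of $<^+$ unfolds to $\Pi(a:A).\,\acc{<^+}(a)$. The lemma provides exactly the pointwise passage $\acc{<}(a) \to \acc{<^+}(a)$, so all that remains is to apply it at each $a:A$, yielding the desired global statement.

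Concretely, I would write the proof as a one-liner: assume $w : \Pi(a:A).\,\acc{<}(a)$ witnessing wellfoundedness of $<$, and define the witness of wellfoundedness of $<^+$ by $\lambda a.\, \text{\ref{lem:acc-trans-acc}}(a, w(a))$. There is no need for any additional induction at this stage, since the induction was already carried out inside the proof of Lemma \ref{lem:acc-trans-acc}.

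Since the whole argument is a trivial corollary once the lemma is in hand, I do not anticipate any genuine obstacle. The only place where care was needed was in the lemma itself, where one must pick the correct induction hypothesis; the standard trick, as used in the sketch of Lemma \ref{lem:acc-trans-acc}, is to perform $<$-accessibility induction on the predicate $P(a) \defeq \acc{<^+}(a)$, unfolding a chain in $<^+$ into its last step to invoke the hypothesis. For the corollary itself nothing further is required.
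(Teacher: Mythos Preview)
Your proposal is correct and matches the paper's approach: the paper marks the corollary with an immediate \qed, treating it as a direct consequence of \cref{lem:acc-trans-acc} by applying that lemma pointwise to the witness of wellfoundedness, exactly as you describe.
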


\emph{Nested} induction takes the following form:
\begin{lemma}[nested accessibility/wellfounded induction, cf.\ \href{http://www.cs.nott.ac.uk/~psznk/agda/confluence/squier.accessibility.html\#1321}{double-acc-ind}] \label{lem:nested-ind}
 Assume we are given a relation $<_1$ on a type $B$, a relation $<_2$ on a type $C$,
 and a family $P : B \times C \to \UU$.
 Assume further that we are given
 \begin{equation}\label{eq:nested-ind-assumption}
 \begin{aligned}
  & \Pi(b : B), (c : C).\, \acc {<_1} (b) \to \acc {<_2} (c) \to \\
  & \qquad (\Pi(b' <_1 b).\, P(b',c)) \to (\Pi(c'<_2 c).\, P(b,c')) \to \\
  & \qquad P(b,c).
 \end{aligned}
 \end{equation}
 Then, we get:
 \begin{equation}
  \Pi(b : B), (c : C).\, \acc {<_1} (b) \to \acc {<_2} (c) \to P(b,c).
 \end{equation}
\end{lemma}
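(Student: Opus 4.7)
The plan is to carry out a nested accessibility induction, using \cref{lem:tt-acc-induction} twice. The outer induction will be on the $<_1$-accessible element $b$, and the inner induction on the $<_2$-accessible element $c$. The challenge is to choose the right predicate for the outer induction so that the inductive hypothesis about smaller $b'$ is strong enough to feed back into the hypothesis \eqref{eq:nested-ind-assumption} at the inner induction step.

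Concretely, first I would define the outer predicate
\begin{equation*}
Q(b) \defeq \Pi(c : C).\, \acc{<_2}(c) \to P(b,c)
\end{equation*}
and apply $<_1$-accessibility induction (\cref{lem:tt-acc-induction}) with $Q$ in place of $P$. So it suffices to prove, for arbitrary $b_0$ with $\acc{<_1}(b_0)$, that $\bigl(\Pi(b' <_1 b_0).\, Q(b')\bigr) \to Q(b_0)$. Unfolding $Q(b_0)$, take an arbitrary $c_0 : C$ with $\acc{<_2}(c_0)$; the goal becomes $P(b_0, c_0)$.

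At this point I would perform the inner $<_2$-accessibility induction on the predicate $R(c) \defeq P(b_0, c)$, which is legal because $c_0$ is $<_2$-accessible. The step case requires us to show, for $c_1$ with $\acc{<_2}(c_1)$ and with an inductive hypothesis $\Pi(c' <_2 c_1).\, P(b_0, c')$, that $P(b_0, c_1)$ holds. This is exactly what the assumption \eqref{eq:nested-ind-assumption} yields, applied to $b_0$ and $c_1$: it needs $\acc{<_1}(b_0)$ (already at hand), $\acc{<_2}(c_1)$ (from the inner induction), the inner IH $\Pi(c' <_2 c_1).\, P(b_0, c')$, and $\Pi(b' <_1 b_0).\, P(b', c_1)$. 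This last piece is obtained by feeding $c_1$ together with its accessibility witness into the outer IH $\Pi(b' <_1 b_0).\, Q(b')$.

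Assembling these pieces gives the conclusion $\Pi(b : B), (c : C).\, \acc{<_1}(b) \to \acc{<_2}(c) \to P(b,c)$. The only subtle point—and what could be considered the main obstacle—is the design of the predicate $Q$: it must universally quantify over all $<_2$-accessible $c$ so that, when the outer IH is instantiated at the inner induction's ambient $c_1$, it produces $P(b', c_1)$ for the specific $c_1$ currently under consideration. A predicate stripped of this quantification (for example, trying to parameterise by a single fixed $c$) would fail precisely at this cross-feeding step.
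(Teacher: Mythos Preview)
Your proposal is correct and matches the paper's approach: the paper's proof simply says to ``carefully apply accessibility induction with the correct motive on the witnesses of both $\acc{<_1}(b)$ and $\acc{<_2}(c)$, then apply \eqref{eq:nested-ind-assumption}'', and your argument spells out exactly that, including the key choice of the outer motive $Q(b) \defeq \Pi(c:C).\,\acc{<_2}(c)\to P(b,c)$.
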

\begin{proof}
We carefully apply accessibility induction with the correct motive on the
witnesses of both $\acc {<_1} (b)$ and $\acc {<_2} (c)$,
then apply \eqref{eq:nested-ind-assumption}.
\end{proof}

\cref{lem:nested-ind} is needed for the type-theoretic proof of
the property \ref{enum:comp-acc} in the proof of \cref{lem:list-ext-noetherian}.
The other parts are identical.
This means, again, that the list extension of a wellfounded relation is wellfounded. The Agda formalisation of this fact can be found in the file \href{http://www.cs.nott.ac.uk/~psznk/agda/confluence/squier.listextension.html}{listextension}, with the main theorem of that module being \href{http://www.cs.nott.ac.uk/~psznk/agda/confluence/squier.listextension.html#4828}{isWF\textlangle$>$\textrangle$\Rightarrow$isWF\textlangle$>^\text{L}$\textrangle}.

Of course, all statements about wellfounded relations dualise in the obvious way to Noetherian relations; in particular,
the list extension of a Noetherian relation is Noetherian.
However, proving this in Agda requires a certain amount of work.
For example, we show that a monotone function between types with orders reflects accessibility
(cf.\ \href{http://www.cs.nott.ac.uk/~psznk/agda/confluence/squier.accessibility.html#3434}{acc-reflected}), and that reversing a list commutes (in a weak sense) with taking its transitive closure
(cf.\ \href{http://www.cs.nott.ac.uk/~psznk/agda/confluence/squier.graphclosures.html#16051}{revClosureComm}).
Only then, we are able to draw the seemingly obvious conclusion from 
\cref{cor:+-Noetherian}
that the transitive closure of a Noetherian relation is Noetherian (cf.\ \href{http://www.cs.nott.ac.uk/~psznk/agda/confluence/squier.accessibility.html\#4274}{transitive-Noetherian}).

We have already seen an example of a Noetherian relation in the introduction of this paper, namely 
the relation \eqref{eq:fg-relation} that is used in the construction of the free group in \cref{ex:fg}:
Given two lists $\ell_1,\ell_2 : \List(M \uplus M)$, we have $\ell_1 \torel \ell_2$ if the first list can be transformed into the second list by removing exactly two elements.
The two removed list elements have to be consecutive and ``inverse'' to each other, i.e.\ one is of the form $\inl(a)$, the other $\inr(a)$.
It is clear that this relation is Noetherian, since each step reduces the length of the list.

\begin{lemma}[{free groups, continuing \cref{ex:fg}}] \label{lem:fg-Noether-NO-confl}
	The relation $\torel$ on lists defined by \eqref{eq:fg-relation}
	is Noetherian. \qed
\end{lemma}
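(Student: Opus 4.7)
The plan is to reduce the statement to the well-known fact that the strict order $>$ on $\mathbb{N}$ is Noetherian, transporting the accessibility along the length function. The relation $\torel$ was set up exactly so that every reduction step deletes two consecutive inverse entries, and this is the only thing we need to know about it: each step strictly decreases length.

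First I would introduce the length function $\length : \List(M \uplus M) \to \N$ and observe the monotonicity property
\begin{equation*}
\Pi(\ell_1, \ell_2 : \List(M \uplus M)).\, (\ell_1 \torel \ell_2) \to \length(\ell_1) > \length(\ell_2),
\end{equation*}
which follows by a direct case-analysis on the constructor of \eqref{eq:fg-relation}: removing two elements yields $\length(\ell_1) = \length(\ell_2) + 2$, hence in particular $\length(\ell_1) > \length(\ell_2)$.

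Next I would invoke the fact that $>$ on $\N$ is Noetherian; this is standard and available in the formalisation. Together with the monotonicity above, the hypothesis of the \texttt{acc-reflected} lemma mentioned in the text (a monotone function reflects accessibility) is satisfied, so for every list $\ell$ the accessibility of $\length(\ell)$ with respect to $>$ on $\N$ reflects back to accessibility of $\ell$ with respect to $\torel$. Since every natural number is accessible for $>$, we conclude $\acc{\torel}(\ell)$ for all $\ell$, which is precisely what it means for $\torel$ to be Noetherian.

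There is essentially no obstacle here: the only mildly delicate point is the bookkeeping of orientations (Noetherian vs.\ wellfounded, $>$ vs.\ $<$) when appealing to the reflection lemma, and the fact that \texttt{acc-reflected} as stated is phrased for wellfoundedness, so one dualises in the standard way (or equivalently, replaces $\torel$ by its reversed relation on $\N$ via $\length$). All remaining work is routine, and no use of confluence or of the structure of $M \uplus M$ beyond the length-decrease is needed.
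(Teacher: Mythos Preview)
Your proposal is correct and matches the paper's approach exactly: the paper simply remarks (just before the lemma, which is marked \qed\ without proof) that ``each step reduces the length of the list,'' and you have spelled out precisely this argument via the length map and the \texttt{acc-reflected} lemma. There is nothing to add; your elaboration is faithful to the paper's one-line justification.
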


\subsection{2-polygraphs in homotopy type theory\nopunct}
\label{sec:hott-polygraphs}
 (files
\href{http://www.cs.nott.ac.uk/~psznk/agda/confluence/squier.polygraphs.html}{polygraphs},
\href{http://www.cs.nott.ac.uk/~psznk/agda/confluence/squier.cancelInverses.html}{cancelInverses},
\href{http://www.cs.nott.ac.uk/~psznk/agda/confluence/squier.newman.html}{newman},
\href{http://www.cs.nott.ac.uk/~psznk/agda/confluence/squier.homotopybasis.html}{homotopybasis})\textbf{.}
With these necessary notions defined, we can now translate generalised 2-polygraphs.
Here, it is important that we will not view reduction steps and rewrite steps as
plain types but always as parameterised by their source and target, see \cref{rem:fibration-vs-families}.
\begin{definition}[2-polygraph, cf.\ \href{http://www.cs.nott.ac.uk/~psznk/agda/confluence/squier.polygraphs.html\#849}{2-polygraph}]\label{def:tt-2-polygraph}
A generalised 2-polygraph is a triple $\Sigma \equiv (\wasSigmaNull, \torel, \To)$
consisting of data of the following types:
\begin{itemize}
\item A type $\wasSigmaNull : \UU$ of objects of the 2-polygraph,
\item a family $(\torel) : \wasSigmaNull \to \wasSigmaNull \to \UU$ of reduction steps, and
\item a family 
$(\To) : \Pi\{x, y : \wasSigmaNull\}.\, (x \tofromstar y) \to (x \tofromstar y) \to \UU$
of rewrite steps of $\Sigma$.
\end{itemize}
We might  denote with $\Sigma_1$ and $\Sigma_2$ the total spaces of the type
families of reduction steps and rewrite steps, respectively, in order to recover the content of \cref{subsec:gen-2-poly}.
Transferring  the properties of 2-polygraphs introduced before into the realm
of type theory is straightforward.
Note that none of these definitions are necessarily propositional, so they carry
\emph{data} instead of mere proofs.
\begin{itemize}
\item The 1-polygraph $(\Sigma_0, \Sigma_1)$
is called \emph{terminating} (cf.\ \href{http://www.cs.nott.ac.uk/~psznk/agda/confluence/squier.polygraphs.html#4139}{isNoetherian}) if it comes with
transitive Noetherian relation $(>)$ on $\wasSigmaNull$ and a function
\begin{equation} 
(x \torel y) \to (x > y) \text{.}
\end{equation}
\item $\Sigma$ is called \emph{closed under congruence} (cf.\ \href{http://www.cs.nott.ac.uk/~psznk/agda/confluence/squier.polygraphs.html#4871}{$\Leftrightarrow^*\!\!\text{isCongrClosed}$}) if for
$u : (x \tofromstar y)$, $v : (y' \tofromstar z)$, and $\alpha : (w \Tofromstar w')$
for some $w, w' : (y \tofromstar y')$ we have
\begin{equation}
u \cdot \alpha \cdot v : (u \cdot w \cdot v \Tofromstar u \cdot w' \cdot v) \text{.}
\end{equation}
\item $\Sigma$ has a \emph{Winkler-Buchberger structure} (cf.\ 
\href{http://www.cs.nott.ac.uk/~psznk/agda/confluence/squier.newman.html#1316}{hasWB})
if, for each local
peak $u : (y \from x \torel z)$, we have a $u' : (y \tofromstar z)$ such that $x > x_i$ for any object occurring in the inner part of $u'$, together with an element
$\WB(u) : (u \Tofromstar u')$.
\item $\Sigma$ \emph{cancels inverses} (cf.\
\href{http://www.cs.nott.ac.uk/~psznk/agda/confluence/squier.cancelInverses.html#830}{cancels*RInv}) if we supply a function
\begin{align}
\LINV &: \Pi\{x, y : \wasSigmaNull\},(u : x \tofromstar y).\, (u \cdot u^{-1} \Tostar \varepsilon_x).
\end{align}
\end{itemize}
\end{definition}

\begin{remark} \label{rem:>-could-be-truncation}
An important special case, sufficient for the applications in \cref{sec:applications}, is the case where $\torel$ is Noetherian and $>$ can simply be chosen to be the transitive closure of $\torel$ (cf.\ \cref{thm:unary-basis}).
Although we do not have an example where it happens, it is plausible that there may be cases where one should take $>$ to be the propositional truncation of $\torel^+$ instead: Everything will work in exactly the same way, but the requirement on the Winkler-Buchberger structure will be weaker.
\end{remark}

The other notions of confluence structures can be defined in a straightforward way.
By the same construction as in \Cref{sec:non-tt}, a Winkler-Buchberger structure
induces a Church-Rosser structure on the 2-polygraph:
\begin{lemma}[translation of \Cref{lem:wb-to-cr}, cf.\ module \href{http://www.cs.nott.ac.uk/~psznk/agda/confluence/squier.newman.html\#3757}{wb2cr}]
From a terminating generalised 2-polygraph which is closed under congruence and has a Winkler-Buchberger structure $\WB$,
we can construct for each $u : (y \tofromstar z)$ a valley
$u' : (y \leadstostar x' \leadsfromstar z)$ and a rewrite zig-zag $\CR(u) : (u \Tofromstar u')$.
Again, we will call $x'$ the \emph{reduct} of $\CR(u)$. \qed
\end{lemma}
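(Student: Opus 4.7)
My plan is to follow the same strategy as in the set-theoretic proof of \cref{lem:wb-to-cr}, translating each step into the type-theoretic setting developed so far. The key ingredient is wellfounded induction with respect to the list extension $\listext >$ applied to the underlying object sequence of a zig-zag. By \cref{cor:+-Noetherian} and the type-theoretic version of \cref{lem:list-ext-noetherian} (which was already recorded as \href{http://www.cs.nott.ac.uk/~psznk/agda/confluence/squier.listextension.html\#4828}{isWF\textlangle$>$\textrangle$\Rightarrow$isWF\textlangle$>^\text{L}$\textrangle}), the list extension of the transitive closure of $>$ is wellfounded, so its opposite is Noetherian. Pulling this back along the map $(y \tofromstar z) \to \Sigma_0^*$ taking a zig-zag to its object sequence yields a wellfounded/Noetherian structure on the type of zig-zags that I will call $\succ$.

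The induction will establish, for every $y, z : \wasSigmaNull$ and every $u : (y \tofromstar z)$, the $\Sigma$-type whose first component is a valley $u' : (y \tostar x' \fromstar z)$ for some reduct $x'$, and whose second component is $\CR(u) : (u \Tofromstar u')$. The key case analysis is whether $u$ is already a valley or contains a local peak. I will implement this as a recursive scan over the list of steps of $u$: traverse the zig-zag keeping track of whether we have so far seen only $\inl$ (positive) steps or have already transitioned into $\inr$ (negative) steps; if at any point we encounter an $\inl$ step after having transitioned to $\inr$, we extract a decomposition $u = v \cdot u_0 \cdot w$ with $u_0 : (y_0 \from x \torel z_0)$ a local peak and $v, w$ the prefix and suffix. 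Otherwise the scan succeeds and we conclude that $u$ is already a valley, setting $u' \defeq u$ and $\CR(u) \defeq \refl$ in the obvious way.

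In the recursive case, I apply the Winkler-Buchberger structure to $u_0$ to obtain $\WB(u_0) : (u_0 \Tofromstar u_0')$ with $u_0' : (y_0 \tofromstar z_0)$ whose inner objects are all strictly below $x$. Closure under congruence gives the rewrite zig-zag $v \cdot \WB(u_0) \cdot w : (u \Tofromstar v \cdot u_0' \cdot w)$. By the definition of the list extension, the object sequence of $v \cdot u_0' \cdot w$ is obtained from that of $u$ by replacing the single object $x$ by a list of objects each below $x$ (possibly combined with congruent transformations coming from the original $v$ and $w$), so $v \cdot u_0' \cdot w$ is $\succ$-smaller than $u$. Invoking the induction hypothesis on it yields a valley $u''$ together with a Church-Rosser witness, which I concatenate with the rewrite zig-zag above using closure under congruence and transitivity in the $\Tofromstar$-closure to obtain $\CR(u)$.

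The main obstacle I anticipate is the careful bookkeeping required for the scan and decomposition step: since zig-zags are inductively generated lists of $\inl$/$\inr$-tagged steps, one needs a clean auxiliary lemma that either produces an explicit three-way split around a local peak or certifies that the zig-zag is already of the form $(\inl, \dots, \inl, \inr, \dots, \inr)$, i.e., a valley. Once this decomposition lemma is established, the rest is a direct rewriting of \cref{lem:wb-to-cr} using the primitives introduced in \cref{def:tt-2-polygraph}: congruence closure gives whiskering, cancellation of inverses is not yet needed (it enters only in the main theorem), and the Noetherian induction on $\succ$ replaces the classical ``descend by the order'' argument.
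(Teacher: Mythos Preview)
Your proposal is correct and follows essentially the same strategy as the paper: Noetherian induction on zig-zags via the list extension of $>$ pulled back along the object-sequence map, with the same case split (already a valley vs.\ contains a local peak) and the same use of $\WB$ plus congruence closure in the inductive step. The paper does not spell out a separate type-theoretic proof (the statement is closed with \qed\ and deferred to the formalisation), so your added detail about the scan/decomposition lemma is exactly the kind of bookkeeping the translation requires; just note that $>$ is already assumed transitive in the definition of \emph{terminating}, so invoking \cref{cor:+-Noetherian} is unnecessary, and in the valley case $\CR(u)$ should be the empty rewrite zig-zag $\varepsilon_u$ rather than $\refl$.
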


A homotopy basis in type theory becomes just a single inhabitant of a $\Pi$-type:
\begin{definition}[cf.\ \href{http://www.cs.nott.ac.uk/~psznk/agda/confluence/squier.homotopybasis.html\#1019}{hasHomotopyBasis}]
A \emph{homotopy basis} of $\Sigma$ is a function
\begin{equation}
\alpha : \Pi\{x \, y : \wasSigmaNull\},(u \, v : (x \tofromstar y)). (u \Tofromstar v) \text{.} 
\end{equation}
\end{definition}

When collecting all the structures we need on our 2-polygraph to construct a homotopy
basis, the list of assumptions in \cref{thm:main-theorem-settheory}
is insufficient for the reason explained in \cref{subsec:translation-is-generalisation} and made precise in \cref{lem:three-types-equivalent}.
Recall that, in the proof of \cref{thm:main-theorem-settheory},
we had to consider the case that $u$ is a closed zig-zag at point $y$ with $s_1(\CR(u))$ an empty zig-zag.
In contrast to before, this case is not trivial in the type-theoretic setting with higher equalities.
What we need it the following property:

\begin{definition}[cf.\ \href{http://www.cs.nott.ac.uk/~psznk/agda/confluence/squier.homotopybasis.html\#1070}{cancelsEmpty}]
A 2-polygraph is said to \emph{cancel empty closed zig-zags} if we have a function
\begin{equation*}
e : \Pi\{x : \wasSigmaNull\},(u : (x \tofromstar x)). (\length(u) = 0) \to (u \Tofromstar \varepsilon_x) \text{.}
\end{equation*}
\end{definition}
By \cref{cor:empty-versus-trivial}, any 2-polygraph $\Sigma$, where the type of objects $\wasSigmaNull$ is a set, cancels empty zig-zags trivially.

With the above definitions at hand, the translation of the construction of the
homotopy basis is straightforward:

\begin{theorem}[translation of \cref{thm:main-theorem-settheory}, cf.\ \href{http://www.cs.nott.ac.uk/~psznk/agda/confluence/squier.homotopybasis.html\#6351}{Noeth×WB×Congr×Cancel$\To$Basis}]\label{thm:tt-basis}
Let $\Sigma$ be a 2-polygraph which is terminating, closed under
congruence, cancels inverses, has a Winkler-Buchberger structure, and
cancels empty closed zig-zags.
Then, $\Sigma$ has a homotopy basis.
\end{theorem}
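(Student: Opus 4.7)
The plan is to follow the structure of the proof of \cref{thm:main-theorem-settheory} essentially line by line, using \cref{lem:wb-to-cr}'s type-theoretic translation to first promote $\WB$ to a Church-Rosser structure $\CR$. From $\LINV$ and closure under congruence I would construct, by induction on the length of a zig-zag $u : x \tofromstar y$, an inhabitant $\INV(u) : (u \cdot u^{-1} \Tofromstar \varepsilon_x)$, and use it together with congruence closure to reduce the general goal (constructing $\alpha_{u,v} : (u \Tofromstar v)$ for arbitrary parallel zig-zags $u, v : x \tofromstar y$) to the special case of a \emph{closed} zig-zag $u : y \tofromstar y$ with target $\varepsilon_y$. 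This reduction is the same trick as in \cref{thm:main-theorem-settheory}: compose via $u \Tofromstar u \cdot v^{-1} \cdot v \Tofromstar v$.

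The core construction then proceeds by Noetherian induction on $y$ with respect to the transitive relation $>$ (Noetherian by assumption; this is where we use termination). Fix $u : y \tofromstar y$ and apply $\CR(u)$ to obtain a valley $v \cdot w^{-1}$ with reduct $z$, so that $\CR(u) : u \Tofromstar v \cdot w^{-1}$ with $v, w : y \tostar z$. The case split is now subtler than in \cref{sec:non-tt}: we need to distinguish the case where $v$ and $w$ are literally empty sequences (length zero) from the case $y > z$. When $v$ has $\length = 0$ we conclude by \cref{lem:three-types-equivalent} that $w$ does too (they share source and target, and any empty sequence is essentially a path $y = z$), in which case the composite $v \cdot w^{-1}$ is an empty closed zig-zag at $y$, and \emph{cancels empty closed zig-zags} provides $v \cdot w^{-1} \Tofromstar \varepsilon_y$, finishing that branch. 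Otherwise $v$ (hence $w$) has positive length, so $y > z$ and we may invoke the induction hypothesis on $w^{-1} \cdot v : z \tofromstar z$ to obtain $\alpha_{w^{-1} \cdot v} : (w^{-1} \cdot v \Tofromstar \varepsilon_z)$. Splicing this rewrite between two copies of $\INV(v)$ (pre- and post-composed with $v$ and $v^{-1}$ via congruence) exactly reproduces the chain in \cref{fig:basis}, giving $u \Tofromstar \varepsilon_y$.

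The formulation must be careful: to do the case split ``$v$ is empty or not'', I do not want to require decidability of $\length(v) = 0$. Instead I perform induction on the sequence $v$ itself (and symmetrically on $w$): the $\nil$ constructor gives the empty branch (handled by \emph{cancels empty closed zig-zags} together with the fact that then the empty zig-zag $v \cdot w^{-1}$ lies at a reduct equal to $y$ up to a path coming from \cref{lem:three-types-equivalent}, which $\INV$ absorbs), while the $\consLong$ constructor guarantees positive length and hence $y > z$ via transitivity of $>$ applied along the first step of $v$. This avoids any implicit use of excluded middle while still producing the needed recursion.

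The main obstacle I anticipate is the bookkeeping around the empty case: unlike in \cref{sec:non-tt}, where an empty sequence is literally $\varepsilon_y$ and $y = z$ definitionally, here an empty closed zig-zag $v \cdot w^{-1}$ encodes a path $y = z$ in $\wasSigmaNull$, and to apply \emph{cancels empty closed zig-zags} we have to transport along that path so that the target is an empty zig-zag at $y$ (not at $z$). Handling this transport coherently — and ensuring it composes with $\CR(u)$ and $\INV(v)$ on the nose — is the delicate point that \cref{subsec:translation-is-generalisation} warned about, and it is the reason the hypothesis \emph{cancels empty closed zig-zags} is needed at all. Once this is discharged, the remaining splicing is routine congruence-closure manipulation identical to the set-theoretic proof.
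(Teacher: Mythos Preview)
Your proposal is correct and matches the paper's own proof essentially step for step: reduce to closed zig-zags via $\INV$, perform Noetherian induction on the basepoint, apply $\CR$, and case-split on whether the resulting valley has empty legs (handled by \emph{cancels empty closed zig-zags}) or not (handled by the induction hypothesis, spliced via congruence exactly as in \cref{fig:basis}). One small wobble: the sentence ``when $v$ has $\length = 0$ we conclude by \cref{lem:three-types-equivalent} that $w$ does too'' is not justified by that lemma---$v$ being empty only gives a path $y = z$, not emptiness of $w$---but your subsequent plan to pattern-match on \emph{both} $v$ and $w$ (and observe that a $\consLong$ in either yields $y > z$) is the right fix and is exactly how the formalisation handles it.
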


\begin{proof}
By and large, the proof proceeds the same way as the one presented for 
\cref{thm:main-theorem-settheory}:
We apply Noetherian induction (cf.~\cref{lem:tt-acc-induction}) on the
type $\wasSigmaNull$
%$\Sigma (x:\wasSigmaNull).(x \tofromstar x)$ with
with
\begin{equation*}
P(x) \defeq \Pi(u : x \tofromstar x). (u \Tofromstar \varepsilon_x) \text{.}
\end{equation*}
As before, we fix $x : \wasSigmaNull$ and $u : (x \tofromstar x)$ and
may assume that we are given $\alpha_v := P(y)(v)$ for all $x > y$ and
$v : (y \tofromstar y)$.
We again consider $\CR(u) : (u \Tostar v \cdot w^{-1})$ with
$v, w : (x \leadstostar y)$ for the reduct $y$ of $\CR(u)$.
If $v$ or $w$ (and thus also the other) have length zero, 
then there is $e_v : (v \Tofromstar \varepsilon_y)$, allowing us to construct the
rewrite chain $u \Tostar v \cdot w^{-1} \Tofromstar \varepsilon_x$.
%\begin{align*} 
%u & \Tostar v \cdot w^{-1}					& \text{by $\CR(u)$,} \\
%& \Tostar v							& \text{by $v \cdot \epsilon_{w^{-1}}$,} \\
%& \Tostar \varepsilon_x						& \text{by $\epsilon_v$.}
%\end{align*}

In the case where either $v$ or $w$ have non-zero length, we again can conclude
that $x > y$ and that the induction hypothesis can be applied as in the set-theoretic formulation.
%Note that the more complicated first step, which uses cancellation of empty
%closed zig-zags, was necessary exactly in the case where $\wasSigmaNull$ is not a set.
\end{proof}

\subsection{Noetherian induction for closed zig-zags\nopunct} \label{subsec:noetherian-cycle-induction}
(file \href{http://www.cs.nott.ac.uk/~psznk/agda/confluence/squier.noethercycle.html}{noethercycle})\textbf{.}
While so far we considered the rewrite zig-zags of a 2-polygraph mainly as
additional data to a 1-polygraph,
we can also think of them as a witness of the fact that certain properties
of reduction zig-zags carry over to other, rewritten, reduction zig-zags.
This separation between structured data and statements is more blurred in a type
theoretic setting than in a set theoretic one, but we can still use the
homotopy basis we constructed to say something \emph{about} statements which we
can prove about closed reduction zig-zags.
In type theory, it is natural to phrase such a tool as a lemma reminiscent of an \emph{induction principle},
i.e.\ a scheme to prove statements about composite structures by recursively
proving things about simpler structures.%
\footnote{Depending on the terminology one uses, one may wish to reserve the term \emph{induction principle} for the native elimination principles that inductive types are equipped with. Our result (\cref{thm:unary-basis}) is of course not such an elimination principle, but can be stated in a similar form. Being slightly less strict about the usage of the term, it thus seems reasonable to call it an induction principle.}

So far, we have taken the \emph{globular} point of view which considers
surfaces between paths with the same source and target.
However, our assumptions guarantee that we do not lose generality if we assume that one of the zig-zags is empty, since a rewrite $u \Tofromstar v$ can equivalently be represented as a rewrite $u \Tofromstar \varepsilon$.
In type theory, it is often more convenient to consider this version and formulate a principle with only one instead of two arguments.

Moreover, our \cref{thm:tt-basis} is more general than what is needed for the applications in type theory that we will present in \cref{sec:applications}. For the sake of simplicity, we specialise it slightly and formulate the following principle:

%\begin{theorem}[homotopy basis for unary predicates / Noetherian cycle induction]\label{thm:unary-basis}
\begin{theorem}[Noetherian induction for closed zig-zags, cf.\ the theorem \href{http://www.cs.nott.ac.uk/~psznk/agda/confluence/squier.noethercycle.html\#15255}{induction-for-closed-zigzags}]\label{thm:unary-basis}
Let $(\wasSigmaNull, \torel)$ be a 1-polygraph such that $\torel$ is Noetherian.
Assume further that $\torel$ is locally confluent, i.e.\ that for each local peak $u : (y \from x \torel z)$, there is a valley $\overline u : (y \tostar w \fromstar z)$.

Let $P$ be a type family of the form
\begin{equation*}
P : \Pi\{x : \wasSigmaNull\}.\, (x \tofromstar x) \to \UU \text{,}
\end{equation*}
with the following properties:
\begin{enumerate}[label=\textbf{(\arabic*)},ref={(\arabic*)}]
\item\label{enum:fill-empty}
$P(e)$ holds for all $e : (x \tofromstar x)$ with $\length(e) = 0$.
\item\label{enum:fill-inverses}
For all rewrite zig-zags $u : (x \tofromstar y)$,
we have $P(u \cdot u^{-1})$.
\item\label{enum:fill-rotate}
$P$ is closed under \emph{rotation}:
For $u : (x \tofromstar y)$ and $v : (y \tofromstar x)$, we have
\begin{equation*}
P(u \cdot v) \to P(v \cdot u) \text{.}
\end{equation*}
%(TODO: illustrate this)
\item\label{enum:fill-concat}
$P$ is closed under \emph{pasting} of closed zig-zags:
If $u,v,w : (x \tofromstar y)$ are parallel zig-zags, we have
\begin{equation*}
P(u \cdot v^{-1}) \to P(v \cdot w^{-1}) \to P(u \cdot w^{-1}) \text{.}
\end{equation*}
\item\label{enum:fill-inv}
$P$ is closed under \emph{inversion} of closed zig-zags:
For $u : (x \tofromstar x)$ we have 
\begin{equation*}
P(u) \to P(u^{-1})\text{.}
\end{equation*}
\item\label{enum:fill-wb}
$P$ holds for the ``outlines'' of the chosen local confluence diamonds,
i.e.\ for every local peak $u$,
we have
$P(u \cdot \overline u^{-1})$.
\end{enumerate}
Then, $P$ holds for all closed reduction zig-zags:
\begin{equation*}
\Pi\{x : \wasSigmaNull\},(u : x \tofromstar x).\, P(u) \text{.}
\end{equation*}
\end{theorem}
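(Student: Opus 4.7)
The approach is to derive this statement as a corollary of Theorem \ref{thm:tt-basis}, by packaging the family $P$ as the set of 2-cells of a suitable generalised 2-polygraph. Take $\Sigma_0 := \wasSigmaNull$, $\Sigma_1 := \torel$, with Noetherian order $\torel^+$ (Noetherian by Corollary \ref{cor:+-Noetherian}); for parallel zig-zags $u, v : x \tofromstar y$, define the rewrite steps by $(u \To v) := P(u \cdot v^{-1})$. Under this encoding, the closure conditions (3)--(5) on $P$ correspond exactly to the rotation, transitivity and inversion properties that make $\Tofromstar$ well-behaved as an equivalence-like relation on parallel zig-zags.

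The plan is then to verify the hypotheses of Theorem \ref{thm:tt-basis} in turn. Termination and inverse cancellation fall out of the chosen order and condition (2); cancellation of empty closed zig-zags is condition (1); and a Winkler--Buchberger structure is supplied directly by local confluence together with condition (6), noting that any intermediate point of the chosen valley $\overline{u}$ is $\torel^+$-smaller than the peak $x$ since it is reached from either $y$ or $z$ by at least one reduction step and $x \torel y, z$. For closure under congruence one must show that $P(w \cdot w'^{-1})$ implies $P\bigl((u \cdot w \cdot v) \cdot (u \cdot w' \cdot v)^{-1}\bigr)$; this follows by pasting (4) with the trivially cancelling closed zig-zag $P\bigl((u \cdot w' \cdot v) \cdot (u \cdot w' \cdot v)^{-1}\bigr)$ supplied by (2), after suitable rotations using (3).

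Applying Theorem \ref{thm:tt-basis} then yields a homotopy basis and, in particular, for each closed $u : x \tofromstar x$ a rewrite zig-zag $u \Tofromstar \varepsilon_x$. By construction, such a zig-zag is a finite chain of $\To$-steps (possibly inverted), each of which carries a $P$-witness for a pair of parallel segments. We collapse this chain to a single witness of $P(u \cdot \varepsilon_x^{-1}) = P(u)$ by induction on its length, using (5) to handle inverted rewrite steps and (4) together with the congruence closure established above to handle composition.

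The step I expect to be the main obstacle is the verification of closure under congruence: the input axioms (3)--(5) formulate rotation, pasting and inversion only for complete closed zig-zags, whereas the native congruence condition for 2-polygraphs requires inserting a rewrite \emph{inside} an arbitrary context $u \cdot (-) \cdot v$. Bridging this gap forces one to thread ``trivial bump'' zig-zags of the form $s \cdot s^{-1}$ through the argument and to carefully match sources and targets; once this bookkeeping is in place, the result follows by direct appeal to Theorem \ref{thm:tt-basis}.
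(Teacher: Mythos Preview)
Your proposal is correct and follows essentially the same approach as the paper: define $(u \To v) := P(u \cdot v^{-1})$, verify the hypotheses of \cref{thm:tt-basis} for this 2-polygraph, and collapse the resulting rewrite zig-zag $u \Tofromstar \varepsilon_x$ to a witness of $P(u)$ by induction on its length using \ref{enum:fill-concat} and \ref{enum:fill-inv}. The only cosmetic difference is that the paper establishes the collapse step first, as a general implication $(u \Tofromstar v) \to P(u \cdot v^{-1})$, before checking the hypotheses; and for congruence the paper rotates the $u^{-1}$ to the front and then splits via \ref{enum:fill-concat} into the pieces $P(u^{-1}\cdot u)$, $P(v\cdot v^{-1})$, and $P(w'^{-1}\cdot w)$, rather than pasting with $P\bigl((u\cdot w'\cdot v)\cdot(u\cdot w'\cdot v)^{-1}\bigr)$ as you suggest---but both decompositions use exactly the ingredients you name.
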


\begin{proof}
The given data give rise to the 2-polygraph
$(\wasSigmaNull, \leadsto, \To)$, where the family of reduction steps 
between $u, v : (x \tofromstar y)$ is defined by
\begin{equation}
(u \To v) \defeq P\left(u \cdot v^{-1}\right) \text{.}
\end{equation}

First, we observe that the definition, together with the conditions on $P$, imply
that we have
\begin{equation}
(u \Tofromstar v) \to P\left(u \cdot v^{-1}\right) \text{.}
\end{equation}
To see this, we apply induction on the length of reduction a zig-zag
$\alpha : (u \Tofromstar v)$:
\begin{itemize}
\item If $\alpha$ has length $0$, we are done by \ref{enum:fill-inverses}.
\item If $\alpha$ starts with a reduction $\beta : (u \To w)$, we obtain
$P(w \cdot v^{-1})$ by induction from the remaining reduction zig-zag
and $P(u \cdot w^{-1})$ from $\beta$ by definition, from which
\ref{enum:fill-concat} lets us deduce $P(u \cdot v^{-1})$.
\item Lastly, if $\alpha$ instead starts with a reversed reduction step $\beta : (w \To u)$,
we use \ref{enum:fill-inv} and are again in the situation of the previous case.
\end{itemize}
Therefore, a homotopy basis for our 2-polygraph
$(\wasSigmaNull, \leadsto, \To)$ suffices to complete the proof.
We check the conditions of \cref{thm:tt-basis}:
\begin{itemize}
\item \textbf{Cancellation of inverses:} Note that the requirement $s \cdot s^{-1} \Tostar \varepsilon$ follows from $s \cdot s^{-1} \To \varepsilon$, which is the same as $s \To s$, which (by the above observation) is implied by $s \Tostar s$, which is trivial. Alternatively, cancellation of inverses is also a direct consequence of \ref{enum:fill-inverses}.
\item \textbf{Closure under congruence:}
To show that $(\To)$ is closed under congruence, let $u : (x \tofromstar y)$ and
$v : (y' \tofromstar z)$ be given together with $w, w' : (y \tofromstar y')$, for which
we further assume $\alpha : (w \To w')$.
We need to show that
\begin{equation*}
P\left((u \cdot w \cdot v) \cdot (u \cdot w' \cdot v)^{-1}\right) \text{,}
% = P\left(u \cdot w \cdot v \cdot v^{-1} \cdot w'^{-1} \cdot u^{-1}\right) \text{,}
\end{equation*}
which by rotation follows from
$P\left(u^{-1} \cdot u \cdot w \cdot v \cdot v^{-1} \cdot w'^{-1}\right)$.
Note that the special case of \ref{enum:fill-concat} with $v \equiv \varepsilon$
allows us to concatenate closed zig-zags.
In particular, is suffices to prove
$P\left(u^{-1} \cdot u\right)$
and $P\left(w \cdot v \cdot v^{-1} \cdot w'^{-1}\right)$.
The former holds by the discussed cancellation of inverses.
For the latter, we use the same trick again to write it as concatenation of
$P\left(v \cdot v^{-1}\right)$ and $P\left(w'^{-1} \cdot w\right)$.
This time, the second part holds by rotation of the assumption $\alpha$.
\item \textbf{Winkler-Buchberger structure:} The local confluence structure is also a Winkler-Buchberger structure.%
\footnote{While it seems fair to consider this connection between a local confluence structure and a Winkler-Buchberger structure obvious, we found it rather cumbersome to formalise is. The corresponding Agda statement is more than 70 lines of code long, and this number does not even take into account that we had to write a range of auxiliary lemmas solely for this particular statement.}
\item \textbf{Cancellation of empty zig-zags:} This is almost directly given by \ref{enum:fill-empty}.
\end{itemize}
As pointed out above, the thereby obtained homotopy basis completes the construction of $P(u)$ for every $u : (x \tofromstar x)$.
\end{proof}

\begin{remark} \label{rem:lics-paper-vs-current-article}
We could of course derive an induction principle that is more general than \cref{thm:unary-basis},
where we have given up some generality to obtain simplicity.
On the other hand, \cref{thm:unary-basis} holds even if we remove the assumptions \ref{enum:fill-inverses} and \ref{enum:fill-inv}, i.e.\  $P(u \cdot u^{-1})$ and $P(u) \to P(u^{-1})$, which we have proved in our conference paper \cite{krausVonRaumer:wellfounded}.
Unfortunately, if we want to present \cref{thm:unary-basis} as a special case of \cref{thm:tt-basis},
the assumptions \ref{enum:fill-inverses} and \ref{enum:fill-inv} seem to be unavoidable.
This slight loss in generality stems from the symmetric definition of a homotopy basis in \cref{thm:main-theorem-settheory}, but poses no meaningful restriction in our applications.
\end{remark}

\section{Applications in homotopy type theory}\label{sec:applications}

From now on, we are working in ``full'' homotopy type theory; i.e.,
in addition to the components detailed in \cref{subsec:specify-type-theory},
we assume that the theory has higher inductive types and families, and that the
universe $\UU$ is univalent:
By assuming that for $A, B : \UU$ the canonical function
\begin{equation*}
(A = B) \to (A \simeq B)
\end{equation*}
is an equivalence itself, we allow ourselves to treat equivalences and equalities
between types as the same.

Besides quotients and pushouts, which will be introduced as we go along,
we need the truncation $\trunc{n}{A}$, which for a type $A : \UU$ and
$n \ge -1$ represents a copy of $A$ which is made an $n$-type by equating all
(higher) equalities above the stated level.
Truncation is defined as a higher inductive type with constructors
\begin{align*}
\iota &: A \to \trunc{n}{A} \text{ and} \\
\mathsf{trunc} &: \istype{n}{\trunc{n}{A}} \text{.}
\end{align*}
Note that the induction principle for $\trunc{n}{A}$ states that the type can
only eliminate into (families of) $n$-types.

After establishing some preliminary results for our applications in \cref{sec:application-preliminaries},
we will, in \cref{sec:quotients}, use the induction principle for closed
zig-zags to characterise functions that map from a quotient into a 1-type (groupoid).
This characterisation will then be applied to show that the fundamental group
of the free group is trivial (\cref{sec:free-groups}) and that the
pushout of 1-types over a set has no non-trivial second homotopy groups (\Cref{sec:pushouts}).
We will then, in \cref{sec:special} compare the two results with each other and put them in a series
of six different applications which are all approximations of open problems
in the field of synthetic homotopy theory.

\subsection{On quotients, coequalisers, and truncation}\label{sec:application-preliminaries}

As explained in the introduction, the \emph{set-quotient} ${\setquot}$ is the
higher inductive type with constructors $\iota$, $\glue$, and $\mathsf{trunc}$,
see \eqref{eq:setquotient}.
The construction can be split into two steps.
Recall from \cref{subsec:intro-hott-application}
that we write $\quot$ for the \emph{untruncated quotient} or \emph{coequaliser}
which has only the constructors $\iota$ and $\glue$, and $\trunc 0 -$
for the \emph{set-truncation} as described above.
\begin{lemma} \label{lem:simu-is-consec}
 For a relation $(\leadsto)$ on $\wasSigmaNull$, we have
 \begin{equation}
  \left({\setquot}\right) \simeq \trunc 0 {\quot}.
 \end{equation}
\end{lemma}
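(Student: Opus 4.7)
The plan is to construct inverse maps $f : ({\setquot}) \to \trunc{0}{\quot}$ and $g : \trunc{0}{\quot} \to ({\setquot})$ and verify that they are mutually inverse by the relevant elimination principles. Both target types are 0-types (the left one by $\mathsf{trunc}$, the right by the truncation constructor of $\trunc{0}{-}$), which makes the constructions essentially mechanical.

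First I would define $f$. Since $\trunc{0}{\quot}$ is a set, the universal property of the set-quotient reduces the task to giving a map $A \to \trunc{0}{\quot}$ together with a proof that it respects $\leadsto$. Take the function $a \mapsto \bproj{\iota(a)}$; for $r : a \leadsto b$, apply $\ap_{\bproj{-}}$ to $\glue(r) : \iota(a) = \iota(b)$ in $\quot$ to obtain $\bproj{\iota(a)} = \bproj{\iota(b)}$ in $\trunc{0}{\quot}$.

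Next I would define $g$. Since $({\setquot})$ is a set (by its $\mathsf{trunc}$ constructor), the universal property of the $0$-truncation reduces this to giving a function $\quot \to ({\setquot})$. In turn, by the universal property of the coequaliser, this reduces to a function $A \to ({\setquot})$ together with an action on $\leadsto$; we use the constructors $\iota$ and $\glue$ of the set-quotient itself.

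Finally I would check both composites are the identity. For $g \circ f = \id$, apply set-quotient induction on $x : ({\setquot})$; because the target is a set, the fibrewise statement $g(f(x)) = x$ is a family of propositions, so it suffices to check it at $x \equiv \iota(a)$, where both sides compute to $\iota(a)$ by the computation rules. For $f \circ g = \id$, first use truncation induction on $y : \trunc{0}{\quot}$, which is legitimate since the codomain is a set; then coequaliser induction on the underlying element of $\quot$, which similarly reduces to a proposition. At the generator $\iota(a)$ both sides reduce to $\bproj{\iota(a)}$.

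There is no real obstacle beyond bookkeeping. The only point requiring a moment of care is that the two elimination principles used in the inverse proofs apply only because the respective path types have the correct h-level, which is guaranteed by the set-quotient being a set and by $\trunc{0}{\quot}$ being a set; everything else is a direct computation on constructors.
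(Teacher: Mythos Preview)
Your proposal is correct and matches the paper's approach: the paper simply states that ``the direct approach of constructing functions back and forth works without difficulties,'' and what you have written is precisely that construction spelled out in detail.
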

\begin{proof}
 The direct approach of constructing functions back and forth works without difficulties.
\end{proof}
For a given type $X$, there is a canonical map from the function type
$(\quot) \to X$ to the $\Sigma$-type of pairs $(f,h)$, where
\begin{align}
 & f : \wasSigmaNull \to X \label{eq:fAX}\text{,}\\
 & h : \Pi\{x, y : \wasSigmaNull\}. (x \leadsto y) \to f(x) = f(y). \label{eq:hAX}
\end{align}
This map is given by:
\begin{equation} \label{eq:canonicalmap}
 g \mapsto (g \circ [-], \ap_g \circ \glue).
\end{equation}
The universal property of the higher inductive type $\quot$ tells us that this
function is an equivalence (one can of course also show this with the dependent
elimination principle of $\quot$, if that is assumed instead as primitive).

We will need to prove statements about equalities in coequalisers.
For this, we use the following result which characterises the path spaces of $(\quot)$: %~\cite{KrausVonRaumer_pathSpaces}:

\begin{theorem}[induction for coequaliser paths,~\cite{KrausVonRaumer_pathSpaces}] 
\label{thm:lics2019-main}
 Let a relation $(\leadsto) : \wasSigmaNull \to \wasSigmaNull \to \UU$ as before and a
 point $x_0: \wasSigmaNull$ be given.
 Assume we further have a type family
 \begin{equation} \label{eq:mainresult-based-P}
 P: \Pi \{y : \wasSigmaNull\}.(\iota(x_0) =_{\quot} \iota(y)) \to \UU
 \end{equation}
 together with terms
  \begin{align}
   r &: P(\refl_{\iota(x_0)})\text{,} \\
   e &: \Pi\{y, z : \wasSigmaNull\}, (q : \iota(x_0) = \iota(y)), (s : y \leadsto z).
         P(q) \simeq P(q \ct \glue(s))\text{.}
  \end{align}
 Then, we can construct a term
 \begin{equation}
  \mathsf{ind}_{r,e} : \Pi\{y : \wasSigmaNull\},(q : \iota(x_0) = \iota(y)). P(q)
 \end{equation}
 with the following $\beta$-rules:
 \begin{align}
  & \mathsf{ind}_{r,e} (\refl_{\iota(x_0)}) = r \label{eq:thm-based-first-beta}\text{,} \\
  & \mathsf{ind}_{r,e}(q \ct \glue(s)) = e (q,s, \mathsf{ind}_{r,e}(q))\text{.} \label{eq:thm-based-second-beta}
 \end{align}
 \qed
\end{theorem}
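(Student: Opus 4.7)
The plan is to use the standard encode-decode technique for path spaces of higher inductive types, adapted here to extract the data of $P$ along paths in the coequaliser $(\quot)$.

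First I would define a type family $C : (\quot) \to \UU$ by the recursion principle of the coequaliser together with univalence. On the point constructor, set $C(\iota(y)) \defeq \Sigma(q : \iota(x_0) = \iota(y)). P(q)$. For the path constructor, given $s : y \leadsto z$, provide the equivalence $C(\iota(y)) \simeq C(\iota(z))$ defined by $(q, p) \mapsto (q \ct \glue(s), e(q, s, p))$. This is an equivalence because $e(q, s, -)$ is one by assumption and because post-concatenation with $\glue(s)$ is invertible. The basepoint $d_0 \defeq (\refl_{\iota(x_0)}, r)$ lives in $C(\iota(x_0))$ and packages the given data.

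Next, define the encode map by
\begin{equation}
\mathsf{enc}(q) \defeq \transport^C(q)(d_0),
\end{equation}
so that for $q : \iota(x_0) = \iota(y)$ we obtain a pair $(q_*, p_*)$ in $C(\iota(y))$. A short path induction on $q$ shows that $q_* = q$; the resulting equality of paths is then used to transport $p_*$ into an inhabitant of $P(q)$, which we take as $\mathsf{ind}_{r,e}(q)$.

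For the $\beta$-rules, the equation $\mathsf{ind}_{r,e}(\refl) = r$ follows immediately because transport along $\refl$ is the identity, so $(q_*, p_*)$ is literally $(\refl, r)$. The equation $\mathsf{ind}_{r,e}(q \ct \glue(s)) = e(q, s, \mathsf{ind}_{r,e}(q))$ follows from the functoriality of transport and the way $C$ was defined on $\glue(s)$: transporting $d_0$ along $q \ct \glue(s)$ factors as first transporting along $q$ (producing $(q, \mathsf{ind}_{r,e}(q))$ after identifying $q_* = q$) and then along $\glue(s)$ (producing $(q \ct \glue(s), e(q, s, \mathsf{ind}_{r,e}(q)))$ by construction). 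The main obstacle will be the careful bookkeeping in this last step, where the action of $\transport^C$ on $\glue(s)$ must be unwound through the univalence equation that defined $C$ on path constructors; the resulting $\beta$-rules will typically be propositional rather than definitional equalities, which is standard for higher inductive types and sufficient for our applications.
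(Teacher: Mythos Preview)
The paper does not actually prove this theorem: it is quoted from \cite{KrausVonRaumer_pathSpaces} and stated with a bare \qed, so there is no in-paper argument to compare against. Your encode--decode sketch is the standard route and is essentially how the cited source proceeds: one builds a family over $(\quot)$ by recursion (using univalence on the path constructor), transports a basepoint, and reads off the desired section together with the two $\beta$-rules. Your outline is correct; the only real work, as you note, is the coherence bookkeeping when unfolding $\transport^C(\glue(s))$ through the univalence equation, and the resulting equalities are indeed propositional, which is all the paper needs.
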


To have all prerequesites we need in order to characterise the functions out
of a quotient, we need one additional characterisation of maps:
The type of functions from $\trunc 0 {\wasSigmaNull}$ type into a 1-type consists of
those functions which map any loop $p$ to the reflexivity witness in $B$:
For types $\wasSigmaNull$ and $B$, we have a canonical function
\begin{equation} \label{eq:comp-with-tproj}
 (\trunc 0 {\wasSigmaNull} \to B) \to (\wasSigmaNull \to B)
\end{equation}
which is given by precomposition with $\tproj 0 -$.
Any such function $g \circ \tproj 0 -$ is moreover constant on loop spaces in
the sense that
\begin{equation}
 \ap_{g \circ \tproj 0 -} : (x = x) \to (g(x) = g(x))
\end{equation}
satisfies $\ap_{g \circ \tproj 0 -}(p) = \refl$, for all $x$ and $p$.
For a $1$-truncated type $B$, the following known result by Capriotti, Kraus,
and Vezzosi states that this property is all one needs to reverse \eqref{eq:comp-with-tproj}:

\begin{theorem}[\cite{capKraVez_elimTruncs}] \label{lem:CSL}
 Let $A$ be a type and $B$ be a 1-truncated type.
 The canonical function from $(\trunc 0 {A} \to B)$ to the type 
 \begin{equation}
%   \begin{gathered}
%   (\trunc 0 A \to B) \\ \downarrow \\
  \Sigma(f : A \to B). \Pi(a : A),(p : a = a). \ap_f(p) = \refl
%   \end{gathered}
 \end{equation}
 is an equivalence. \qed
\end{theorem}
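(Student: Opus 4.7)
The canonical map takes $g : \trunc{0}{A} \to B$ to the pair $(g \circ \tproj{0}{-}, \alpha_g)$, where $\alpha_g(a, p)$ witnesses the chain $\ap_{g \circ \tproj{0}{-}}(p) = \ap_g(\ap_{\tproj{0}{-}}(p)) = \refl$, using that $\ap_{\tproj{0}{-}}(p)$ equals $\refl$ because $\trunc{0}{A}$ is a set. My plan is to construct an explicit inverse $\psi$ sending $(f, \alpha)$ to some $\widehat{f} : \trunc{0}{A} \to B$ with $\widehat{f} \circ \tproj{0}{-} \equiv f$, and then to verify both composition identities.

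To construct $\widehat{f}$, I would use the characterization $(\tproj{0}{a} = \tproj{0}{b}) \simeq \trunc{-1}{a = b}$ of the path spaces of the 0-truncation, and proceed by induction on $\trunc{0}{A}$. On points, set $\widehat{f}(\tproj{0}{a}) \defeq f(a)$. For a path $q : \tproj{0}{a} = \tproj{0}{b}$ corresponding (propositionally) to some $p : a = b$, set $\ap_{\widehat{f}}(q) \defeq \ap_f(p)$. This definition is representative-independent: for any two $p, p' : a = b$, the loop $p \ct (p')^{-1}$ satisfies $\ap_f(p \ct (p')^{-1}) = \refl$ by $\alpha$, whence $\ap_f(p) = \ap_f(p')$. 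Because $B$ is $1$-truncated, $f(a) = f(b)$ is a set, so the induced map $\trunc{-1}{a = b} \to (f(a) = f(b))$ is uniquely determined by its action on representatives. The obligation coming from the $\mathsf{trunc}$ constructor of $\trunc{0}{A}$ then reduces, for any two parallel paths, to an equality of their $\ap_{\widehat{f}}$-images in a path space of $B$; this path space is a set, and the required equality follows from the representative-independence just established. All higher coherences are automatic because $B$ is $1$-truncated.

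The round-trip $\phi \circ \psi = \id$ holds by construction: the first component of $\phi(\widehat{f})$ is $\widehat{f} \circ \tproj{0}{-} \equiv f$, and the second recovers $\alpha$ from the way $\widehat{f}$ trivializes loops. For $\psi \circ \phi = \id$, given $g : \trunc{0}{A} \to B$, I would show that $\widehat{g \circ \tproj{0}{-}}$ agrees with $g$ pointwise by $\trunc{0}{A}$-induction with motive $\widehat{g \circ \tproj{0}{-}}(x) = g(x)$; this motive is propositional since the equality type in question lies in a set (as $B$ is a $1$-type), and it holds definitionally on $\tproj{0}{-}$-images. Function extensionality then yields equality of the functions themselves.

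The main obstacle I anticipate is defining $\widehat{f}$ coherently, i.e., handling the $\mathsf{trunc}$ constructor of $\trunc{0}{A}$ when the target $B$ is only a $1$-type rather than a set. The coherence $\alpha$ supplies exactly the path-level data needed to make $\ap_{\widehat{f}}$ well-defined on propositional truncations of paths in $A$; once this is in place, all higher coherences are free because $B$ is $1$-truncated, so its iterated equality types become propositions and ultimately trivial.
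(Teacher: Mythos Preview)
The paper does not prove this statement; it is quoted from \cite{capKraVez_elimTruncs} and closed immediately with a \qed. There is thus no paper argument to compare against, and I can only assess your proposal on its own terms.

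The central step---the construction of $\widehat{f}$---does not go through as written. You say you ``proceed by induction on $\trunc{0}{A}$'', put $\widehat{f}(\tproj{0}{a}) \defeq f(a)$, and then \emph{separately} set $\ap_{\widehat{f}}(q) \defeq \ap_f(p)$ for a path $q$ between point-constructor images. But the elimination principle for $\trunc{0}{A}$ does not take the action on paths as input: $\ap_{\widehat{f}}$ is \emph{determined} by $\widehat{f}$, not prescribed. The HIT $\trunc{0}{A}$ has only the point constructor $\tproj{0}{-}$ and the recursive 2-path constructor $\mathsf{trunc}$; there is no 1-path constructor whose image you get to supply. Discharging $\mathsf{trunc}$ when eliminating into a 1-type $B$ requires, for \emph{arbitrary} $x,y:\trunc{0}{A}$ and $p,q:x=y$, that $\ap_{\widehat{f}}(p) = \ap_{\widehat{f}}(q)$---yet at that moment $\widehat{f}$ is still being defined and $x,y$ need not be of the form $\tproj{0}{a}$, so your ``representative-independence'' argument for paths in $A$ does not close the obligation. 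The identification $(\tproj{0}{a}=\tproj{0}{b}) \simeq \trunc{-1}{a=b}$ is correct but does not give you an elimination principle of the shape you use; this circularity with the recursive higher constructor is precisely the difficulty the cited paper has to overcome.

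A smaller point: in the round-trip $\psi\circ\phi=\id$ you call the motive $\widehat{g\circ\tproj{0}{-}}(x) = g(x)$ ``propositional'' because ``the equality type in question lies in a set''. But $B$ is only a 1-type, so that equality type is a \emph{set}, not a proposition. That is still enough for $\trunc{0}{A}$-induction (which requires set-valued families), so this part of the argument survives once the wording is fixed---provided $\widehat{f}$ has already been constructed, which is where the real gap lies.
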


\subsection{A characterisation of functions on quotients}\label{sec:quotients}

As before, let $(\leadsto)$ be a relation on $\wasSigmaNull$.
Assume further that we are given a function $f : \wasSigmaNull \to X$ and a proof $h$ 
that $f$ sends related points to equal points, as in \eqref{eq:fAX} and \eqref{eq:hAX}.
There is an obvious function
\begin{equation}
 h^{*} : \Pi\{x,y : \wasSigmaNull\}. (x \tofromstar y) \to f(x) = f(y),
\end{equation}
defined by recursion on $x \tofromstar y$ which in each step composes
with a path given by $h$ or the inverse of such a path.
Given $(f,h)$ and a third map $k : X \to Y$, it is easy to prove by induction on
$x \tofromstar y$ that we have
\begin{equation} \label{eq:star-shifting}
 \ap_k \circ h^{*} = (\ap_k \circ h)^{*}.
\end{equation}
We also note that, for chains $u, v$,
\begin{align}
 & h^{*}(u \cdot v) = h^{*}(u) \ct h^{*}(v)\label{eq:star-inverting} \text{ and}\\
 & h^{*}(u^{-1}) = (h^{*}(u))^{-1}\label{eq:star-functoriality}\text{.}
\end{align}
% % TODO: do we need something else?
Of particular interest is the function
$\glue^{*} : \Pi\{x, y : \wasSigmaNull\}. (x \tofromstar y) \to \iota(x) = \iota(y)$.
It is in general not an equivalence: For example, for $t : x \leadsto x$,
the chain $t \cons t^{-1}$  and the empty chain both get mapped to $\refl$.
Thus, $\glue^{*}$ does not preserve inequality (but see \cref{cor:empty-versus-trivial}).
However, we have the following result:
\begin{lemma}
 The function $\glue^{*} : (x \tofromstar y) \to \iota(x) = \iota(y)$ is surjective.
\end{lemma}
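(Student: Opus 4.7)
The plan is to apply the based path induction principle for coequalisers (\cref{thm:lics2019-main}) with the type family
\[P(\{y\},q) \; \defeq \; \trunc{-1}{\Sigma(u : x \tofromstar y).\, \glue^{*}(u) = q}.\]
Fixing the basepoint $x_0 \defeq x$, proving surjectivity amounts to inhabiting $P(q)$ for every $q : \iota(x) = \iota(y)$, so it suffices to provide the two ingredients demanded by \cref{thm:lics2019-main}.

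For the base case, the empty zig-zag $\nil_x : x \tofromstar x$ satisfies $\glue^{*}(\nil_x) \equiv \refl_{\iota(x)}$ by definition, so $\bproj{(\nil_x, \refl)} : P(\refl_{\iota(x)})$. For the step, given $s : y \leadsto z$ I need an equivalence $P(q) \simeq P(q \ct \glue(s))$. Since both sides are propositions, it is enough to exhibit maps in both directions. In the forward direction, send $\bproj{(u, e)}$ to $\bproj{(u \cdot \inl(s),\, e')}$, where $e'$ is built from $e$ using \eqref{eq:star-inverting} and the computation $\glue^{*}(\inl(s)) = \glue(s)$. In the backward direction, send $\bproj{(v, f)}$ to $\bproj{(v \cdot \inr(s),\, f')}$, where $f'$ combines \eqref{eq:star-inverting} and \eqref{eq:star-functoriality} to give $\glue^{*}(v \cdot \inr(s)) = \glue^{*}(v) \ct \glue(s)^{-1} = q \ct \glue(s) \ct \glue(s)^{-1} = q$ after cancelling the trivial path composition.

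Applying $\mathsf{ind}_{r,e}$ to this data yields the desired term
\[\Pi\{y : \wasSigmaNull\},(q : \iota(x) = \iota(y)).\, P(q),\]
which is precisely the statement that $\glue^{*}$ is surjective at $x$. I do not anticipate a serious obstacle: the only thing to be careful about is that the path algebra in the step equivalence lines up, but since the codomain is propositionally truncated, the equivalence reduces to a pair of plain functions and no coherence conditions need to be checked.
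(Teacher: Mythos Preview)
Your proposal is correct and follows essentially the same argument as the paper: apply \cref{thm:lics2019-main} to the truncated fibre family $P(q) \defeq \trunc{-1}{\Sigma(u : x \tofromstar y).\, \glue^{*}(u) = q}$, use the empty zig-zag for the base case, and obtain the required equivalence $P(q) \simeq P(q \ct \glue(s))$ by extending with $s$ in one direction and $s^{-1}$ in the other. The paper gives slightly less detail on the path algebra, but the structure is identical.
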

\begin{proof}
 Fixing one endpoint $x_0 : \wasSigmaNull$ and setting
 \begin{align}
  & P : \Pi\{y : \wasSigmaNull\}.(\iota(x_0) = \iota(y)) \to \UU \\
  & P(q) \defeq \trunc{-1}{\Sigma(u : x_0 \tofromstar y).\glue^{*}(u) = q}
 \end{align}
we need to show that, for all $q$, we have $P(q)$.
We use \cref{thm:lics2019-main}, where $r$ is given by the trivial chain.
To construct $e$, we need to prove $P(q) \simeq P(q \ct \glue(s))$ for any
$s : y \leadsto z$.
This amounts to constructing functions in both directions between the types
${\Sigma(u : x_0 \tofromstar y).\glue^{*}(u) = q}$ and
${\Sigma(u : x_0 \tofromstar y).\glue^{*}(u) = q \ct \glue(s)}$, where extending
a chain with $s$ or with $s^{-1}$ is sufficient.
\end{proof}
The following is a ``derived induction principle'' for equalities in coequalisers:
\begin{lemma} \label{lem:simple_ind_paths}
 For a family ${P : \Pi\{x : \quot\}. x=x \to \UU}$ such that each $P(q)$ is a proposition,
 the two types
 \begin{equation}
  \Pi\{x : \wasSigmaNull\},(u : x \leadstofromstar x).\,P(\glue^{*}(u)).
 \end{equation}
 and
 \begin{equation}
  \Pi(c : \quot),(q: a = a).\,P(q)
 \end{equation}
 are equivalent.
\end{lemma}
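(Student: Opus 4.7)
The plan is to prove the two types are logically equivalent and observe that both sides are propositions, so that logical equivalence upgrades to an equivalence. Since each $P(q)$ is a proposition and dependent products into propositions are propositions (by function extensionality), both $\Pi\{x : \wasSigmaNull\},(u : x \leadstofromstar x).\,P(\glue^{*}(u))$ and $\Pi(c : \quot),(q: c = c).\,P(q)$ are propositions, so it suffices to construct maps in both directions.

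The forward direction (from the second to the first) is straightforward: given $F : \Pi(c : \quot),(q : c = c).\, P(q)$, a point $x : \wasSigmaNull$, and $u : x \tofromstar x$, we set $c :\equiv \iota(x)$ and $q :\equiv \glue^{*}(u)$ and invoke $F(\iota(x), \glue^{*}(u))$.

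The backward direction is the interesting one. Assume we have $G : \Pi\{x : \wasSigmaNull\},(u : x \tofromstar x).\,P(\glue^{*}(u))$, and fix $c : \quot$ and $q : c = c$; we must produce $P(q)$. Since $P(q)$ is a proposition and the goal depends only on $c$ via a propositional family (after $\Pi$-introducing over $q$), we can eliminate $c$ along the induction principle of $\quot$: only the point case $c \equiv \iota(x)$ has computational content, while the $\glue$-case is automatic by propositionality. So it remains to show $P(q)$ for $q : \iota(x) = \iota(x)$. By the surjectivity of $\glue^{*} : (x \tofromstar x) \to \iota(x) = \iota(x)$ (the previous lemma), we obtain an inhabitant of $\trunc{-1}{\Sigma(u : x \tofromstar x).\,\glue^{*}(u) = q}$. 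Because $P(q)$ is a proposition, we may eliminate from this $(-1)$-truncation; given a pair $(u, p)$ with $p : \glue^{*}(u) = q$, the term $G(u) : P(\glue^{*}(u))$ transports along $p$ to a term of $P(q)$.

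The main obstacle, if any, is carefully checking that the $\glue$-case of the induction on $c : \quot$ goes through — but this is only a matter of observing that the type being constructed, $\Pi(q : c = c).\,P(q)$, is a proposition in $c$, and so the higher constructor imposes no additional coherence obligation. Everything else is just assembling \cref{lem:simu-is-consec}-style surjectivity with propositional elimination. Logical equivalence plus propositionality of both sides then yields the claimed equivalence.
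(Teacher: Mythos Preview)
Your proof is correct and follows essentially the same route as the paper: both sides are propositions, the implication from the second type to the first is immediate by instantiation, and for the converse you induct on $c : \quot$ (the $\glue$ case being automatic by propositionality) and then invoke surjectivity of $\glue^{*}$ together with truncation elimination into the proposition $P(q)$. One minor slip: the closing reference to ``\cref{lem:simu-is-consec}-style surjectivity'' points to the wrong lemma; the relevant input is the surjectivity of $\glue^{*}$ established immediately before the present statement.
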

\begin{proof}
 Both types are propositions, and the second clearly implies the first.
 For the other direction, induction on $c : \quot$ lets us assume that $c$ is of the form
 $\iota(x)$ for some $x : \wasSigmaNull$; the case for the constructor $\glue$ is automatic.
 The statement then follows from the surjectivity of $\glue^{*}$.
\end{proof}

\begin{theorem} \label{thm:gensetquotnew}
 Let $\wasSigmaNull : \UU$ be a type, $(\leadsto) : \wasSigmaNull \to \wasSigmaNull \to \UU$ be
 a relation, and $X : \UU$ be a 1-type.
 Then, the type
 of functions $({\setquot} \to X)$ is equivalent to the type of triples $(f,h,c)$ (a nested $\Sigma$-type), where
 \begin{align}
  & f : \wasSigmaNull \to X \\
  & h : \Pi\{x, y : \wasSigmaNull\}. (x \leadsto y) \to f(x) = f(y) \\
  & c : \Pi\{x : \wasSigmaNull\}(u : x \leadstofromstar x). h^{*}(u) = \refl. \label{eq:genset-last-comp}
 \end{align}
\end{theorem}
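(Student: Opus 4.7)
The plan is to assemble the desired equivalence as a chain of equivalences, using the lemmas established earlier in the section. First, by \cref{lem:simu-is-consec}, we have $({\setquot}) \simeq \trunc 0 {\quot}$, so functions $({\setquot}) \to X$ are equivalent to functions $\trunc 0 {\quot} \to X$. Since $X$ is a 1-type by assumption, \cref{lem:CSL} gives an equivalence between $\trunc 0 {\quot} \to X$ and the type of pairs $(g, \kappa)$ consisting of a function $g : (\quot) \to X$ and a proof $\kappa : \Pi(c : \quot),(p : c = c). \ap_g(p) = \refl$.

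Next, I would unfold $g : (\quot) \to X$ via the universal property of the coequaliser: the map \eqref{eq:canonicalmap} gives an equivalence between $(\quot) \to X$ and the $\Sigma$-type of pairs $(f, h)$ with $f : A \to X$ and $h$ as in \eqref{eq:hAX}. Under this equivalence, $g$ corresponds to $(g \circ \iota, \ap_g \circ \glue)$, i.e.\ $f \equiv g \circ \iota$ and $h \equiv \ap_g \circ \glue$.

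The remaining task is to translate the condition $\kappa$ into the condition $c$ from \eqref{eq:genset-last-comp}. By \cref{lem:simple_ind_paths}, applied to the proposition-valued family $P(q) \defeq (\ap_g(q) = \refl)$, the statement $\Pi(c : \quot),(p : c = c). \ap_g(p) = \refl$ is equivalent to $\Pi\{x : A\},(u : x \tofromstar x). \ap_g(\glue^*(u)) = \refl$. Finally, by \eqref{eq:star-shifting} applied to $k \equiv g$ and the relation $\glue$, we have $\ap_g \circ \glue^* = (\ap_g \circ \glue)^* = h^*$, so the condition becomes $h^*(u) = \refl$ for every closed zig-zag $u$, which is exactly $c$.

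Composing these equivalences yields the desired characterisation. The only subtle part is that the equivalences need to be compatible in the sense that the $(f,h)$ extracted from $g$ via the universal property is literally the same as the $(f,h)$ appearing in the statement; but this is immediate from the definition \eqref{eq:canonicalmap}. Apart from that, there are no real obstacles — the result is essentially a bookkeeping exercise chaining together \cref{lem:simu-is-consec}, \cref{lem:CSL}, the universal property of the coequaliser, and \cref{lem:simple_ind_paths}.
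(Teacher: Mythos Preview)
Your proposal is correct and follows essentially the same approach as the paper: both assemble the equivalence as a chain using \cref{lem:simu-is-consec}, \cref{lem:CSL}, \cref{lem:simple_ind_paths}, \eqref{eq:star-shifting}, and the universal property \eqref{eq:canonicalmap}. The only cosmetic difference is ordering—the paper applies \cref{lem:simple_ind_paths} and \eqref{eq:star-shifting} before unfolding $g$ via \eqref{eq:canonicalmap}, whereas you unfold $g$ first and then translate the loop condition—but the steps commute and the argument is the same.
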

\begin{proof}
 We have the following chain of equivalences:
\begin{alignat*}{5}
  &&&& \quad & \phantom{\Sigma} \setquot \to X \\[.3cm]
%   &&& \simeq & \qquad & \textit{by } \cref{lem:simu-is-consec} \\[.3cm]
  \textit{by } \cref{lem:simu-is-consec} &\quad &&\simeq & \quad & \phantom{\Sigma} \trunc{0}{\quot} \to X \\[.3cm]
%   &&& \simeq & \qquad & \textit{by } \cref{lem:CSL} \\[.3cm]
  \textit{by } \cref{lem:CSL} &&&\simeq  &       &\Sigma g : (\quot) \to X. \\
  &&& & \quad & \phantom{\Sigma} c : \Pi\{x:\quot\},(q : x=x). \ap_g(q) = \refl \\[.3cm]
%   &&& \simeq & \qquad & \textit{by }\cref{lem:simple_ind_paths} \\[.3cm]
  \textit{by }\cref{lem:simple_ind_paths}&&&\simeq  &  & \Sigma g : (\quot) \to X. \\
  &&& & \quad & \phantom{\Sigma} c : \Pi\{x : \wasSigmaNull\},(u : x \tofromstar x).
     \ap_g(\glue^{*}(\gamma)) = \refl \\[.3cm]
%   &&& \simeq & \qquad & \textit{by }\eqref{eq:star-shifting} \\[.3cm]
  \textit{by }\eqref{eq:star-shifting}&&&\simeq  &  & \Sigma g : (\quot) \to X. \\
  &&& & \quad & \phantom{\Sigma} c : \Pi\{x : \wasSigmaNull\},(u : x \tofromstar x).
     (\ap_g \circ \glue)^{*}(\gamma) = \refl \\[.3cm]
%   &&& \simeq & \qquad & \textit{by }\eqref{eq:canonicalmap} \\[.3cm]
  \textit{by }\eqref{eq:canonicalmap}&&&\simeq   &  & \Sigma f : \wasSigmaNull \to X. \\
  &&& & \quad & \Sigma h : \Pi\{x, y : \wasSigmaNull\}. (x \leadsto y) \to f(x) = f(y). \\
  &&&&& \phantom{\Sigma} c : \Pi\{x : \wasSigmaNull\},(u : x \tofromstar x). h^{*}(u) = \refl
 \end{alignat*}
\end{proof}

\begin{remark}
It is an easy exercise to show that the component $c$ in the statement of
\cref{thm:gensetquotnew} can be equivalently replaced by
\begin{equation*}
c' : \Pi\{x, y : \wasSigmaNull\}(u, v : x \leadstofrom x). h^{*}(u) = h^{*}(v)
\end{equation*}
to obtain a binary version of the requirement.
\end{remark}

We are now ready to combine the theory developed in this section with the
construction of the homotopy basis to obtain a full characterisation of
maps from a set-quotient into a one-type.

\begin{theorem} \label{thm:dirsetquot}
 Let $\wasSigmaNull : \UU$ be a type, $(\leadsto) : \wasSigmaNull \to \wasSigmaNull \to \UU$
 a Noetherian and locally confluent relation,
 with the local confluence valley of $u$ denoted by $\overline u$ as in \cref{thm:unary-basis}.
 Further, let $X : \UU$ be a 1-type.
 Then, the type of functions 
 $(\setquot) \to X$
 is equivalent to the type of tuples $(f,h,d_1,d_2)$, where
 \begin{align}
  & f : \wasSigmaNull \to X \\
  & h : \Pi\{x, y : \wasSigmaNull\}. (x \leadsto y) \to f(x) = f(y) \\
  & d_1 : \Pi\{x : \wasSigmaNull\}. \Pi(p : x = x). \ap_f(p) = \refl \label{eq:dir-penu-comp} \\
  & d_2 : \Pi\{x, y, z : \wasSigmaNull\}(u : y \leadsfrom x \leadsto z). 
            h^{*} \left(u \cdot \overline{u}^{-1}\right) = \refl. \label{eq:dir-last-comp}
\end{align}
Further, if $\wasSigmaNull$ is a set, the type $\setquot \to X$ is
equivalent to the type of triples $(f,h,d_2)$.
\end{theorem}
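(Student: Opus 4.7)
My plan is to derive \cref{thm:dirsetquot} directly from \cref{thm:gensetquotnew} and \cref{thm:unary-basis}. By \cref{thm:gensetquotnew}, the type $(\setquot) \to X$ is already equivalent to the $\Sigma$-type of triples $(f, h, c)$ where $c$ asserts $h^*(u) = \refl$ for every closed zig-zag $u$. Since $X$ is a $1$-type, each of $c$, $d_1$, and $d_2$ takes values in a proposition, so given $f$ and $h$ it suffices to exhibit a biimplication between $c$ and the conjunction of $d_1$ and $d_2$.

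First I would show that $c$ implies both $d_1$ and $d_2$. The condition $d_2$ is simply the instantiation of $c$ at the closed zig-zag $u \cdot \overline u^{-1}$ arising from the local confluence diamond. For $d_1$, given $p : x = x$, I would use the equivalence from \cref{lem:three-types-equivalent} to produce a corresponding empty closed zig-zag $u_p : x \tofromstar x$ and verify by path induction on $p$ (the base case $p \equiv \refl_x$ yields $u_p = \nil\, x$ with $h^*(\nil\, x) \equiv \refl_{f(x)}$) that $h^*(u_p) = \ap_f(p)$; applying $c$ at $u_p$ then delivers $\ap_f(p) = \refl$.

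The more substantial direction, that $d_1$ and $d_2$ imply $c$, is where \cref{thm:unary-basis} is used. I would take
\begin{equation*}
P(u) \defeq \bigl(h^*(u) = \refl\bigr)
\end{equation*}
and check the six hypotheses of that theorem. Hypothesis $(1)$ is exactly $d_1$, read through the identification of length-$0$ zig-zags with loops established above; hypothesis $(6)$ is exactly $d_2$. The remaining four hypotheses (cancellation of inverses, rotation, pasting, and inversion of closed zig-zags) reduce to purely formal groupoid manipulations in $X$, using the multiplicativity \eqref{eq:star-inverting} and involutivity \eqref{eq:star-functoriality} of $h^*$; for instance $h^*(u \cdot u^{-1}) = h^*(u) \ct (h^*(u))^{-1} = \refl$, and rotation follows from $h^*(u) \ct h^*(v) = \refl$ implying $h^*(v) \ct h^*(u) = (h^*(u))^{-1} \ct h^*(u) = \refl$.

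For the final assertion, if $A$ is a set then every $p : x = x$ equals $\refl_x$, so $\ap_f(p) = \refl$ holds automatically and $d_1$ becomes redundant. The main obstacle I anticipate is the bookkeeping for hypothesis $(1)$: one must trace through \cref{lem:three-types-equivalent} carefully enough to confirm that $h^*$ on an empty closed zig-zag really does coincide, under the equivalence, with $\ap_f$ applied to the corresponding loop. Everything else is routine algebra of path composition and inversion.
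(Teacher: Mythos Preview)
Your proposal is correct and follows essentially the same approach as the paper: reduce via \cref{thm:gensetquotnew} to a biimplication between the proposition $c$ and the pair $(d_1,d_2)$, then invoke \cref{thm:unary-basis} with $P(u)\defeq(h^*(u)=\refl)$ for the nontrivial direction, checking the six closure conditions exactly as you outline. Your caveat about hypothesis~(1) is apt; the standard fix is to generalise the loop $p:x=x$ to a path $p:x=y$ before invoking path induction (so that the empty zig-zag and $\ap_f$ comparison becomes a statement about arbitrary endpoints), after which the base case is immediate.
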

\begin{proof}
 The case for $\wasSigmaNull$ being a set follows immediately from the main statement,
 since the type of $d_1$ becomes contractible.

 For the main statement,
 we want to apply \cref{thm:gensetquotnew}.
 We need to show that the type of $c$ in \eqref{eq:genset-last-comp} is equivalent
 to the type of pairs $(d_1,d_2)$ above.
 Note that they are all propositions.
 From $c$, we immediately derive $(d_1, d_2)$ from \cref{cor:empty-versus-trivial}.
 
 Let us assume we are given $(d_1, d_2)$. We need to derive $c$.
 We want to apply the induction principle given by \cref{thm:unary-basis} with
\begin{equation*}
P(u) \defeq (h^{*}(u) = \refl) \text{.}
\end{equation*}
Now, we need to show the six closure properties of $P$ to complete the proof:
\begin{description}
\item[$P$ is true for empty closed zig-zags] By \cref{lem:three-types-equivalent},
each empty closed zig-zag is the image of a loop in $\wasSigmaNull$ under the equivalence
between empty closed zig-zags, which by $d_1$ is mapped to $\refl$.
\item[$P$ is true for concatenation of inverses] For $s : (x \leadsto y)$ the type
\begin{equation*}
h^*(s \cdot s^{-1}) \equiv h(s) \cdot h(s)^{-1} = \refl
\end{equation*}
is inhabited. For longer zig-zags, the statement follows by induction.
\item[$P$ is closed under rotation] For $u$ and $v$ we have
\begin{equation*}
\begin{alignedat}{9}
&P(u \cdot v) &\quad&\equiv&\quad& (h^{*}(u \cdot v) = \refl) \\
&&&\simeq && (h^{*}(u) = h^{*}(v)^{-1}) \\
&&&\simeq && (h^{*}(v) = h^{*}(u)^{-1}) \\
&&&\equiv && P(v \cdot u) \text{.}
\end{alignedat}
\end{equation*}
\item[$P$ is closed under pasting]
We can calculate
\begin{equation*}
\begin{alignedat}{9}
&h^{*}(u \cdot w^{-1}) &\quad& = &\quad & h^{*}(u) \ct h^{*}(w^{-1})\\
&&& = && h^{*}(u) \ct h^{*}(v^{-1})\ct h^{*}(v) \ct h^{*}(w^{-1})\\
&&& = && h^{*}(u \cdot v^{-1}) \ct h^{*}(v \cdot w^{-1})\\
&&& = && \refl \cdot \refl.
\end{alignedat}
\end{equation*}
if $h^{*}(u \cdot w^{-1}) = h^{*}(w \cdot v^{-1}) = \refl$.
\item[$P$ is closed under inversion]
By $h^{*}(u^{-1}) = h^{*}(u)^{-1} = \refl^{-1} \equiv \refl$ whenever we have
$h^{*}(u) = \refl$.
\item[$P$ holds for the outlines of local confluence diagrams]
This is given directly by $d_2$.
\end{description}
\end{proof}

\subsection{Free $\mathbf{\infty}$-groups}\label{sec:free-groups}

We want to use \cref{thm:dirsetquot} to show that the free higher group $\fhg(M)$ has trivial
fundamental groups.
Recall that this is the example discussed in the introduction, with $\fhg(M)$ defined
in equation \eqref{eq:FA-definition}.

\begin{theorem}\label{thm:fundamental-group-trivial}
 The fundamental groups of the free higher group on a set are trivial.
 In other words, for a set $M$ and any $x : \fhg(M)$,
 we have
 \begin{equation}
  \pi\left(\fhg(M) , x \right) \; = \; \unit.
 \end{equation}
\end{theorem}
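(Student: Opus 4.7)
The plan is to establish an equivalence $\trunc{1}{\fhg(M)} \simeq G$, where $G \defeq \setquotXY{\List(M \uplus M)}{\mathop{\torel}}$ is the ordinary free group on $M$ from \cref{ex:fg}. Since $\Omega(\trunc{1}{A}, \tproj{1}{x}) \simeq \pi(A, x)$ for any pointed type, and $G$ is a set, this equivalence will immediately yield triviality of $\pi(\fhg(M), x)$ at every basepoint $x$ simultaneously.

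To construct one direction, $\phi : G \to \trunc{1}{\fhg(M)}$, I would apply \cref{thm:dirsetquot} with the 1-type $X \defeq \trunc{1}{\fhg(M)}$. The required data consist of: the function $f : \List(M \uplus M) \to \trunc{1}{\fhg(M)}$ given by $f(L) \defeq \tproj{1}{\omega_1(L)}$; the witness $h$ sending a reduction step to the equality in $\trunc{1}{\fhg(M)}$ obtained from the standard groupoid cancellation $p \ct p^{-1} = \refl$ (for $p = \mathsf{loop}(x_2)$) together with horizontal congruence with the rest of the list; the coherence $d_1$, which is trivial because $\List(M \uplus M)$ is a set; and the confluence coherence $d_2$. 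For the last item, I first need local confluence of $\torel$ on $\List(M \uplus M)$, which is a standard fact: two cancellation sites either are disjoint (both reductions commute in one further step on each side) or overlap in the pattern $[\ldots, y, y^{-1}, y, \ldots]$ (both reductions give the same list in one step), and symmetric variants.

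The main obstacle is verifying $d_2$, i.e., showing that for every local peak the loop produced in $\trunc{1}{\fhg(M)}$ by composing the peak with its diamond completion is reflexivity. Because the target is a 1-type, this equality is a proposition, and it suffices to exhibit, already inside $\fhg(M)$, a $2$-cell between the two composite $1$-cells produced by the two reduction paths. Such $2$-cells are standard groupoid coherences: by path induction on the loops $\mathsf{loop}(x_i)$ involved, the entire diagram collapses to a composition of reflexivities. The disjoint case is handled by the interchange of cancellations at independent positions, and the overlapping case by the coherence relating the two ways of reducing $\mathsf{loop}(y) \ct \mathsf{loop}(y)^{-1} \ct \mathsf{loop}(y)$ to $\mathsf{loop}(y)$, both of which become reflexivity once the loop is contracted.

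With $\phi$ in hand, it remains to identify it as an inverse of the natural map $\psi : \trunc{1}{\fhg(M)} \to \trunc{0}{\fhg(M)} \simeq G$, where the equivalence $\trunc{0}{\fhg(M)} \simeq G$ is the routine comparison of the two descriptions of the free group of a set. The identity $\psi \circ \phi = \id_G$ follows from the universal property of the set-quotient by checking on $\iota(L)$. For $\phi \circ \psi = \id_{\trunc{1}{\fhg(M)}}$, the assertion is a proposition (the target is a 1-type), so by the elimination principle for $\trunc{1}{-}$ one reduces to showing $\phi(\psi(\tproj{1}{\ell})) = \tproj{1}{\ell}$ for $\ell : \fhg(M)$; surjectivity of $\omega_1$ onto path components of $\fhg(M)$, which follows from applying \cref{thm:lics2019-main} to $W = \hcolim(M \rightrightarrows \unit)$, lets us merely assume $\ell = \omega_1(L)$, at which point both sides unfold to $\tproj{1}{\omega_1(L)}$. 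Hence $\trunc{1}{\fhg(M)} \simeq G$ is a set, and $\pi(\fhg(M), x) = \unit$ follows for every $x$.
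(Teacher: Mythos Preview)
Your construction of $\phi : G \to \trunc{1}{\fhg(M)}$ via \cref{thm:dirsetquot}, and your verification of $d_2$ by critical-pair analysis, match the paper exactly (this is \cref{lem:fg-cycle-condition}). The paper, however, does not attempt the full equivalence $\trunc{1}{\fhg(M)} \simeq G$: it simply observes that the diagram built from $\varphi$ (\cref{lem:retract}), $\omega$, and $\phi$ exhibits $\tproj{1}{-} : \fhg(M) \to \trunc{1}{\fhg(M)}$ as factoring through the set $G$, whence $\trunc{1}{\fhg(M)}$ is a retract of a set and therefore itself a set. This bypasses both your appeal to $\trunc{0}{\fhg(M)} \simeq G$ (true, but an instance of Seifert--van~Kampen rather than quite ``routine'') and the round-trip $\phi \circ \psi = \id$.

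Your final paragraph contains a genuine error. You claim that ``the assertion $\phi \circ \psi = \id$ is a proposition (the target is a 1-type)'', but a path in a $1$-type is a \emph{set}, not a proposition. The reduction to points $\tproj{1}{\ell}$ still works (sets are $1$-types, so dependent elimination out of $\trunc{1}{-}$ applies), but then the goal $\phi(\psi(\tproj{1}{\ell})) = \tproj{1}{\ell}$ is set-valued, and mere surjectivity of $\omega_1$ does \emph{not} let you ``merely assume $\ell = \omega_1(L)$''. A clean fix: since you already have $\psi \circ \phi = \id_G$, the map $\phi$ is an embedding, so it suffices to show that $\phi$ is surjective---and surjectivity \emph{is} a proposition, at which point your mere-surjectivity argument goes through. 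Alternatively, use the paper's retraction argument directly, which needs only the one round-trip coming from \cref{lem:retract}.
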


We split the proof into several small lemmas.
We keep using the relation $\torel$ of \cref{ex:fg,lem:fg-Noether-NO-confl}.
Further, recall 
the functions $\omega_1$ \eqref{eq:omega1} and $\omega_2$ \eqref{eq:omega2}
from the introduction, as well as the map $\omega$ \eqref{eq:omega-complete}.

\begin{lemma}[{free group; continuing \cref{ex:fg,lem:fg-Noether-NO-confl}}] \label{lem:fg-cycle-condition}
For the relation $\torel$ of \cref{ex:fg}, we can construct the outlines
of a local confluence structure
consisting for each local peak $u : (\ell_x \leadsfrom \ell \leadsto \ell_y)$
of a valley
\begin{equation*}
\overline u : (\ell_x \tostar \ell' \fromstar \ell_y) \text{,}
\end{equation*}
which furthermore can be proven to be coherent by the presence of a 2-path
\begin{equation*}
d_2(u) : \omega_2^*\left(u \cdot \overline{u}^{-1}\right) = \refl \text{.}
\end{equation*}
\end{lemma}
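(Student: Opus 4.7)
The plan is to proceed by case analysis on the relative positions of the two "reduction windows" in $\ell$. Each reduction step of $\torel$ deletes a consecutive inverse pair at some window $(i,i{+}1)$ of the list, so a local peak $u = s^{-1} \cdot t$ with $s : \ell \torel \ell_x$ and $t : \ell \torel \ell_y$ is determined by two such windows $(i,i{+}1)$ and $(j,j{+}1)$ (say $i \le j$). Up to symmetry, three geometric configurations arise: (a) the windows are \emph{disjoint}, i.e.\ $j \ge i+2$; (b) they are \emph{identical}, $i = j$; (c) they \emph{overlap in one position}, $j = i+1$. In cases (b) and (c) the two resulting lists coincide as elements of $\List(M \uplus M)$, and we will take $\overline u$ to be the empty valley; in case (a) we form the standard Newman diamond.

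For case (a), since the two cancellations act on disjoint windows, $\ell_x$ still contains the pair at $(j,j{+}1)$ (with indices shifted by $-2$) and $\ell_y$ still contains the pair at $(i,i{+}1)$. Cancelling the surviving pair in each gives a common reduct $\ell'$; this yields a single-step valley $\overline u : (\ell_x \torel \ell' \from \ell_y)$. Case (b) is trivial: $s = t$ as reduction steps, hence $\ell_x \equiv \ell_y$ and we set $\overline u$ to be the two empty sequences. For case (c), the window $(i,i{+}1,i{+}2)$ must contain three elements $x_1, x_1^{-1}, x_1$ (since each pair must be inverse); deleting either pair leaves $\ell_x \equiv \ell_y$, and we again set $\overline u$ to the empty valley.

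For the coherence component $d_2(u)$, recall that $\omega_1$ sends a list to the corresponding composition of $\mathsf{loop}$-paths and their inverses in $\fhg(M)$, and that $\omega_2$ witnesses, for a single reduction step, that a composition of the form $p \cdot p^{-1}$ is $\refl$ in the loop space. The key algebraic fact I would use is the general lemma that for any paths $p,q,r,s$ in any $\infty$-groupoid, the composite $((p \cdot q \cdot q^{-1} \cdot r \cdot s \cdot s^{-1} \cdot t)^{-1}) \cdot (p \cdot r \cdot t)$ (and its reorderings) reduces to $\refl$ by elementary $\ct$-associativity and inverse cancellation. In case (a), $\omega_2^{*}(u \cdot \overline u^{-1})$ unfolds to a four-step composition of such cancellations acting on \emph{disjoint} segments of the path, which therefore cancel independently and compose to $\refl$. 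In cases (b) and (c) the composite takes the form $\omega_2(s) \cdot \omega_2(t)^{-1}$ with $s$ and $t$ acting on the same (b) or overlapping (c) region; in case (b) we have literally $s = t$ so the composite is $p \cdot p^{-1} = \refl$; in case (c) a short direct calculation shows that both $\omega_2(s)$ and $\omega_2(t)$ cancel the same middle pair $x_1 \cdot x_1^{-1}$ inside $x_1 \cdot x_1^{-1} \cdot x_1$, giving equal paths up to $\refl$.

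The main obstacle is the bookkeeping in case (c): although the valley is trivially empty, the two reduction steps are \emph{distinct} as inhabitants of $\torel$ (they target different windows) and yet their images under $\omega_2$ must be shown equal. I expect to handle this by a direct rewriting of both sides into the canonical form $\mathsf{loop}(x_1) \ct (\mathsf{loop}(x_1) \ct \mathsf{loop}(x_1)^{-1} \ct \mathsf{loop}(x_1)) \ct \mathsf{loop}(x_1)^{-1}$ and using the standard naturality of inverse cancellation in a type-theoretic groupoid. Once all three cases are discharged, pasting them together gives the desired local confluence structure together with the coherence witness $d_2$.
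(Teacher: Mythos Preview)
Your case decomposition and the valleys you construct match the paper exactly: the same three configurations (disjoint/Peiffer, identical, single overlap) and the same choice of $\overline u$ in each. Where your plan becomes imprecise is in the coherence component $d_2$. Your ``key algebraic fact'' is phrased as a statement about a \emph{1-path} equalling $\refl$, but $d_2(u)$ is a \emph{2-path}: $\omega_2^{*}(u\cdot\overline u^{-1})$ is a loop in the path space $\omega_1(\ell_x)=\omega_1(\ell_x)$, built by composing the specific cancellation witnesses $\omega_2(s)$, $\omega_2(t)$, etc. Saying the disjoint cancellations ``compose to $\refl$'' is the right intuition, but it is not automatic from groupoid laws; it is precisely the statement that the two orders of simplifying
\[
e_1 \ct (p\ct p^{-1}) \ct e_2 \ct (q\ct q^{-1}) \ct e_3 \;\rightsquigarrow\; e_1\ct e_2\ct e_3
\]
give \emph{equal} 2-paths. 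The paper handles this by first factoring $\omega_1$ through $\List(\mathsf{base}=\mathsf{base})$ (so one is dealing with arbitrary paths, not $\mathsf{loop}$-images), then generalising the endpoints of $p$ and $q$ so that path induction applies; alternatively one can invoke Eckmann--Hilton. Your sketch does not isolate this step.

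The same issue bites in your overlap case (c): you need $\omega_2(s)=\omega_2(t)$ as 2-paths, i.e.\ that the two ways of reducing $e_1\ct p\ct p^{-1}\ct p\ct e_2$ to $e_1\ct p\ct e_2$ coincide. Your ``canonical form'' with an outer $\mathsf{loop}(x_1)\ct(\cdots)\ct\mathsf{loop}(x_1)^{-1}$ does not look right, and ``naturality of inverse cancellation'' is not quite enough on its own; the paper again generalises all endpoints (of $p$, $e_1$, $e_2$) and path-inducts. With these two refinements your argument goes through and is the same as the paper's.
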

\begin{proof}
We perform a standard critical pair analysis on the span and assume that
$\ell_x$ is obtained from $\ell$ by removing a redex $(x, x^{-1})$, and likewise
that $\ell_y$ is obtained from $\ell$ by removing a redex $(y, y^{-1})$.
Taking in consideration the symmetry of the assumptions we end up with only
three cases:
\begin{enumerate}[ref={(\arabic*)}]
\item \label{enum:freegrp-wb-trivial}
The two redexes are at the same position of $\ell$ (they ``fully overlap''),
implying $x = y$ and $\ell_x = \ell_y$.
\item \label{enum:freegrp-wb-partial}
The two redexes partially overlap, in the sense that $x^{-1} = y$
(or $y^{-1} = x$, which is equivalent).
In this case, we again have $\ell_x = \ell_y$.
\item \label{enum:freegrp-wb-peiffer}
There is no overlap between the two redexes (``Peiffer branching'').

\end{enumerate}
The case \ref{enum:freegrp-wb-trivial} is trivial because we can set
$\wb(u) = \epsilon_{\ell_x}$ and 
\begin{equation*}
\omega_2^*(u) = \omega_2(s) \cdot \omega_2(s)^{-1} = \refl
\end{equation*}
for $s : (\ell \leadsto \ell_x)$.
For the remaining two cases we need to recall the definition of $\omega_1$
and $\omega_2$ and observe the following:
The function $\omega_1 : \List(M \uplus M) \to \fhg(M)$, cf.\ \eqref{eq:omega1},
factors as
\begin{equation} \label{eq:factor-omega1}
\List(M \uplus M) \longrightarrow \List(\mathsf{base} 
= \mathsf{base}) \longrightarrow \mathsf{base} = \mathsf{base}
\end{equation}
where the first map applies $\mathsf{loop}$ on every list element, while the
second concatenates; note that $\fhg(M) \defeq (\mathsf{base} = \mathsf{base})$.
The function $\omega_2$ \eqref{eq:omega2} can then be factored similarly.

For case \ref{enum:freegrp-wb-peiffer} we can remove the redex $(y,y^{-1})$
from $\ell_x$ and have constructed a list $\ell'$ equal to the one we get if we
remove $(x,x^{-1})$ from $\ell_y$.
We combine these reductions to obtain $\wb(u) : (\ell_x \leadsto \ell' \leadsfrom \ell_y)$.
To provide the coherence $d_2(u)$ in this case, we can, by \eqref{eq:factor-omega1},
assume that we are given a list of loops around $\mathsf{base}$ instead of
a list of elements of $M \uplus M$.
We first repeatedly use that associativity of path composition is coherent
(we have ``MacLane's pentagon'' by trivial path induction).
Then, we have to show that the two canonical ways of simplifying
$e_1 \ct (p \ct p^{-1}) \ct e_2 \ct (q \ct q^{-1}) \ct e_3$
to $e_1 \ct e_2 \ct e_3$ are equal.

This can be achieved in two ways:
A common pattern in homotopy type theory is now to generalise to the case that $p$ and $q$ are
equalities with arbitrary endpoints rather than loops, and then do path induction.
If $p$ and $q$ are both $\refl$, then both simplifications become $\refl$ as well.
Instead of applying path induction directly, it is possible to prove this lemma only using
naturality and the Eckmann-Hilton theorem~\cite[Thm 2.1.6]{hott-book}:
The choice of whether to first reduce on the left and then on the right or
vice versa corresponds to the two ways (in the reference called $\star$ and $\star'$)
of defining horizontal composition of 2-paths by first whiskering on the left or
on the right, respectively.
As the proof of the theorem states, these two ways coincide.

In case \ref{enum:freegrp-wb-partial}, we can set $\ell' = \ell_x = \ell_y$
and $\wb(u) = \epsilon_{\ell'}$.
Analogously to case \ref{enum:freegrp-wb-peiffer}, we can construct $d_2(u)$
by showing that the two
ways of reducing $e_1 \ct p \ct p^{-1} \ct p \ct e_2$ to $e_1 \ct p \ct e_2$ are equal.
This time, we have to generalise not only the endpoint of $p$ but
both endpoints of $p$ and the respective endpoints of $e_1$ and $e_2$ to reduce
the problem to the case where $p$ is $\refl$, and the equalities are definitionally
the same.
\end{proof}

\begin{lemma} \label{lem:retract}
 The free higher group $\fhg(M)$ is a retract of $\quotX{\List(M \uplus M)}$, in the sense that there is a map
 \begin{equation}
  \varphi : \fhg(M) \to \quotX{\List(M \uplus M)}
 \end{equation}
 such that $\omega \circ \varphi$ is the identity on $\fhg(M)$.
\end{lemma}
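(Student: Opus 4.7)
The plan is a variant of the standard encode--decode argument used to compute $\Omega(S^1)$, adapted to the wedge of $M$-many circles and with $\mathbb Z$ replaced by $Q \defeq \quotX{\List(M \uplus M)}$. First, for each $a : M$ I would construct an auto-equivalence $e_a : Q \simeq Q$. The underlying map sends $\iota(\ell)$ to $\iota(\ell \cdot [\inl(a)])$ (appending $\inl(a)$ on the right); a candidate two-sided inverse appends $\inr(a)$. That these are mutually inverse follows from a single application of $\glue$ using the reduction \eqref{eq:fg-relation}. Compatibility with $\glue$ for defining the maps on $Q$ is immediate since appending preserves the rewriting relation.

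Second, using the universal property of $W \defeq \hcolim(M \rightrightarrows \unit)$ together with univalence, I would define a type family $C : W \to \UU$ by $C(\mathsf{base}) \defeq Q$ and $\ap_C(\mathsf{loop}(a)) \defeq \mathsf{ua}(e_a)$. Set
\begin{equation*}
\varphi(p) \;\defeq\; \transport^{C}(p,\iota(\nil)) \qquad \text{for } p : \mathsf{base} =_W \mathsf{base}.
\end{equation*}
This gives $\varphi : \fhg(M) \to Q$ directly from the definition $\fhg(M) \equiv \Omega W$.

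Third, to prove $\omega \circ \varphi = \id_{\fhg(M)}$, I would construct a dependent ``decode'' function
\begin{equation*}
\mathsf{dec} \; : \; \Pi(x : W).\, C(x) \to (\mathsf{base} =_W x)
\end{equation*}
with $\mathsf{dec}_{\mathsf{base}} \equiv \omega$. To define $\mathsf{dec}$ by induction on $W$, I must supply, for each $a : M$, a dependent path in $C(x) \to (\mathsf{base} =_W x)$ over $\mathsf{loop}(a)$. After transport-computation this amounts to the pointwise identity $\omega(e_a(q)) = \omega(q) \ct \mathsf{loop}(a)$ for every $q : Q$. By the elimination principle of the coequaliser it suffices to check this on $q \equiv \iota(\ell)$, where both sides reduce to the same composition of $\mathsf{loop}$'s (and their inverses) by the definition of $\omega$ on \eqref{eq:omega1}--\eqref{eq:omega-complete}, and to verify the required coherence against $\glue$, which is again straightforward from the definition of $\omega_2$. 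Once $\mathsf{dec}$ is in place, a one-line path induction on $p : \mathsf{base} =_W x$ shows $\mathsf{dec}_x(\transport^C(p,\iota(\nil))) = p$; specialising to $x \equiv \mathsf{base}$ gives $\omega(\varphi(p)) = p$, as required.

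The main obstacle is constructing $\mathsf{dec}$: this is the only step that genuinely uses the higher-dimensional structure of $W$, since $Q$ is not assumed to be a set and the coherence $\omega \circ e_a \sim (- \ct \mathsf{loop}(a))$ must hold as an \emph{element} of an equality type, not merely as a propositional fact. Everything else (the equivalences $e_a$, the recursor for $C$, and the final path induction) is routine once univalence and the recursion principles for $W$ and for the coequaliser are in hand.
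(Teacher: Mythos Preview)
Your proposal is correct and follows essentially the same encode--decode strategy as the paper. The one difference is packaging: rather than building the type family $C$ via univalence and defining $\varphi$ by transport, the paper invokes \cref{thm:lics2019-main} (an induction principle for paths in coequalisers, from the companion paper) with the constant family $P \equiv Q$ and the auto-equivalence on $Q$ as the component~$e$, obtaining $\varphi$ directly; the retraction $\omega \circ \varphi = \id$ is then established by a second application of the same theorem rather than by your decode map plus path induction. Both routes rest on the identical ingredient---the auto-equivalence of $Q$ given by adding a generator to a list (the paper prepends, you append, which is immaterial)---and both require the same 2-dimensional coherence you correctly flag as the main obstacle; the paper simply hides the encode--decode machinery inside a reusable lemma and avoids an explicit appeal to univalence at this particular step.
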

\begin{proof}
 For any $x : M \uplus M$, the operation ``adding $x$ to a list''
 \begin{equation}
  (x \cdot \_) \; : \; \List(M \uplus M) \to \List(M \uplus M)
 \end{equation}
 can be lifted to a function of type
 \begin{equation} \label{eq:endoeqv-on-lists}
  (\quotX{\List(M \uplus M)}) \to (\quotX{\List(M \uplus M)}).
 \end{equation}
 Moreover, the function \eqref{eq:endoeqv-on-lists} is inverse to $(x^{-1} \cons \_)$
 and thus an equivalence.

 Let $\star$ be the unique element of the unit type $\unit$.
 We define the relation $\sim$ on the unit type by $(\star \torel \star) \defeq M$.
 Then, $\mathsf{hcolim} (M \rightrightarrows \unit)$ is by definition the coequaliser 
 $(\quotXY{\unit}{\sim})$, and $\fhg(M)$ is given by $(\iota(\star) = \iota(\star))$.
 This allows us to define $\varphi$ using \cref{thm:lics2019-main} with the constant
 family $P \defeq (\quotX{\List(M \uplus M)})$, with the equivalence of the
 component $e$ given by \eqref{eq:endoeqv-on-lists}.
 
 A further application of \cref{thm:lics2019-main} show that $\omega \circ \varphi$
 is pointwise equal to the identity.
\end{proof}

\begin{proof}[Proof of \cref{thm:fundamental-group-trivial}]
 By \cite[Thms 7.2.9 and 7.3.12]{hott-book}, the statement of the theorem is
 equivalent to the claim that $\trunc 1 {\fhg(M)}$ is a set.
 
 We now consider the following diagram:
 
\begin{equation} \label{eq:diagram}
\begin{tikzpicture}[x=3.2cm,y=-2.5cm,baseline=(current bounding box.center)]
  \tikzset{arrow/.style={shorten >=0.1cm,shorten <=.1cm,-latex}}
\node (FA1) at (0,0) {$\fhg(M)$}; 
\node (Lquot) at (0,1) {$\quotX{\List(M \uplus M)}$}; 
\node (Lsquot) at (0,2) {$\quotX{\List(M \uplus M)}$}; 
\node (FA2) at (1,1) {$\fhg(M)$}; 
\node (FAT) at (2,1) {$\trunc 1 {\fhg(M)}$}; 

\draw[arrow] (FA1) to node [right] {$\varphi$} (Lquot);
\draw[arrow] (FA1) to node [above right] {$\tproj 1 -$} (FAT);
\draw[arrow] (Lquot) to node [right] {$\tproj 0 -$} (Lsquot);
\draw[arrow] (Lquot) to node [above] {$\omega$} (FA2);
\draw[arrow] (FA2) to node [above] {$\tproj 1 -$} (FAT);
\draw[arrow,dashed] (Lsquot) to node [below right] {} (FAT);
% \draw[arrow] (C) to node [below] {$t$} (B);
% \draw[arrow] (D) to node [left] {$p$} (C);
% \draw[arrow] (A) to node [above] {$q$} (D);
\end{tikzpicture}
\end{equation}
 The dashed map exists by 
 the combination of \cref{thm:dirsetquot} (note that we are in the simplified case
 where the type to be quotiented is a set) together with \cref{lem:fg-cycle-condition}
 (and \cref{lem:simu-is-consec}).
 By construction, the bottom triangle commutes.
 The top triangle commutes by \cref{lem:retract}.
 
 Therefore, the map $\tproj 1 -$ factors through a set (namely $\quotX{\List(M \uplus M)}$).
 This means that $\trunc 1 {\fhg(M)}$ is a retract of a set, and therefore itself a set.
\end{proof}

\subsection{Pushouts of 1-types}\label{sec:pushouts}

In this section, we will prove another theorem using the characterisation of
maps out of quotients \cref{thm:dirsetquot}.
At first glance it might look completely distinct from the application of free
$\infty$-groups, but, as we will see in \cref{sec:special}, it is a more
general formulation of the same phenomenon.
The subject of study of this section will be \emph{pushouts} of types.
We will always assume that we are given a span of 
$B \xleftarrow f A \xrightarrow g C$ of types and functions, of which we will
take the pushout:
  \begin{equation} \label{eq:pushout}
  \begin{tikzpicture}[x=1.5cm,y=-1.0cm,baseline=(current bounding box.center)]
   \node (A) at (0,0) {$A$};
   \node (C) at (1,0) {$C$};
   \node (B) at (0,1) {$B$};
   \node (D) at (1,1) {$B \sqcup_A C$};
  
   \draw[->] (A) to node [left] {\scriptsize $f$} (B);
   \draw[->] (A) to node [above] {\scriptsize $g$} (C);
   \draw[->] (B) to node [above] {\scriptsize $i_0$} (D);
   \draw[->] (C) to node [left] {\scriptsize $i_1$} (D);
  \end{tikzpicture}
 \end{equation}
The pushout is, as common in homotopy type theory, defined as a higher inductive
type with point constructors $i_0 : B \to B \sqcup_A C$ and $i_1  : C \to B \sqcup_A C$
as well as a path constructor
\begin{equation*}
\glue : \Pi(a : A).\, i_0(f(a)) = i_1(g(a)) \text{,}
\end{equation*}
which makes the diagram \eqref{eq:pushout} commute.

In constrast to \cref{sec:free-groups}, we will not prove statement about
first but about \emph{second} homotopy groups, but again,
we do not make any statements about the homotopy levels above that.

\begin{theorem} \label{thm:pushout-1-type}
 Given a pushout as in \eqref{eq:pushout}, if $A$ is a set and $B$, $C$
 are $1$-types, then all second homotopy groups of $B\sqcup_A C$ are trivial.
 In other words, $\trunc 2 {B\sqcup_A C}$ is a $1$-type.
\end{theorem}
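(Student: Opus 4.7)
The plan is to adapt the template of the proof of \cref{thm:fundamental-group-trivial}. Since being a $1$-type is a proposition, it suffices to show that for every $x_0 : B \sqcup_A C$ the truncated loop space $\trunc{1}{x_0 = x_0}$ is a set; by the eliminator of the pushout, I may restrict to base points of the form $x_0 \equiv i_0(b_0)$ with $b_0 : B$, the case $x_0 \equiv i_1(c_0)$ being symmetric. The target is then to realise $\trunc{1}{x_0 = x_0}$ as a retract of a set-quotient constructed via \cref{thm:dirsetquot}.

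First I would introduce a set $L$ of \emph{combinatorial zig-zag loops} at $b_0$: finite words alternating between markers $a_i : A$ and paths, where each $B$-segment is a path between consecutive points of the form $f(a)$ (or $b_0$ at the endpoints) and each $C$-segment is a path between points of the form $g(a)$. Because $A$ is a set and $B$, $C$ are $1$-types, all these path types are sets, so $L$ is a set. A natural realisation $\omega_1 : L \to (x_0 = x_0)$ pastes a word into a loop using $\ap_{i_0}$, $\ap_{i_1}$, and the glue paths $\glue(a_i)^{\pm}$. I would then define a reduction $\torel$ on $L$ that contracts two adjacent, oppositely oriented glue steps attached to the same marker $a$: the sandwiched segment lifts along $\ap_f$ or $\ap_g$ and is composed into its neighbour on the same side. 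This strictly decreases the number of markers, so $\torel$ is Noetherian. A critical-pair analysis with the three familiar cases (full overlap, partial overlap, disjoint redexes), completely analogous to \cref{lem:fg-cycle-condition}, yields local confluence together with a coherence filler $d_2$ establishing $\omega_1^{*}(u \cdot \overline u^{-1}) = \refl$ for every local peak $u$. As in the free-group case, each filler is built using associativity manipulation and the Eckmann--Hilton argument \cite[Thm 2.1.6]{hott-book}, reducing each comparison to a case where a path identifying two markers in $A$ is $\refl$, which is legitimate precisely because $A$ is a set.

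With \cref{thm:dirsetquot} applied to the $1$-type $X \defeq \trunc{1}{x_0 = x_0}$ and the data $(\tproj 1 {\omega_1(-)}, h, d_2)$, I obtain a map $\overline{\omega} : \setquotXY{L}{\torel} \to \trunc{1}{x_0 = x_0}$ from a set. Mirroring \cref{lem:retract}, I would construct a section $\varphi : (x_0 = x_0) \to \setquotXY{L}{\torel}$ by applying \cref{thm:lics2019-main} to the pushout viewed as the coequaliser of $A \rightrightarrows B \uplus C$; the constant family is $\setquotXY{L}{\torel}$, and the equivalence datum sends a word to its extension by a single new glue marker. A second use of \cref{thm:lics2019-main} verifies that the composition $(x_0 = x_0) \xrightarrow{\varphi} \setquotXY{L}{\torel} \xrightarrow{\overline\omega} \trunc{1}{x_0 = x_0}$ equals $\tproj 1 {-}$; by the universal property of $1$-truncation this exhibits $\trunc{1}{x_0 = x_0}$ as a retract of the set $\setquotXY{L}{\torel}$ and therefore as a set. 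The hard part will be the construction of the coherence witnesses $d_2$: although each diagram ultimately reduces to associativity and unit laws applied to a reflexivity, organising this in the presence of both path letters in $B$, $C$ and glue markers in $A$ requires careful combinatorial bookkeeping, especially to ensure that every comparison of markers can be contracted using only the set-ness of $A$.
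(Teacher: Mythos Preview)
Your proposal is correct and follows essentially the same route as the paper. The paper identifies the list type explicitly as the Seifert--van Kampen encoding of \cite{favonia:SvK} and works with $L_{x,x'}$ for arbitrary endpoints before specialising, and it routes $\varphi$ through the untruncated coequaliser $\quotX{L_{x,x'}}$ before passing to the set-quotient (mirroring diagram~\eqref{eq:diagram}); but the reduction relation, the Noetherian/local-confluence argument, the appeal to \cref{thm:dirsetquot}, and the retract via \cref{thm:lics2019-main} are the same as what you outline.
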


\begin{proof}[Proof sketch]
The argument is almost completely analogous to the proof of
\cref{thm:fundamental-group-trivial}.
The main difference is that the type $\List(M \uplus M)$ is not sufficient any more.
Instead, we need to be slightly more subtle when we encode the equalities
in the pushout.
The following construction is due to Favonia and Shulman \cite{favonia:SvK},
who use it in their formulation of the \emph{Seifert-van Kampen Theorem}.

Given the square in \eqref{eq:pushout} and $b, b' : B$, we consider the type
$L_{b,b'}$ of lists of the form
\begin{equation} \label{eq:SvK-lists}
[b, p_0, x_1, q_1, y_1, p_1, x_2, q_2, y_2, \ldots, y_n, p_n, b'] 
\end{equation}
\ where\footnote{We remove the $0$-truncations around the path spaces.
These are without effect here since $B$, $C$ are $1$-types.}
\begin{alignat}{3}
& x_i : A & \qquad &  y_i : A \\
& p_0 : b = f(x_1) && p_n : f(y_n) = b' \\
& p_i : f(y_i) = f(x_{i+1}) && q_i : g(x_i) = g(y_i)
\end{alignat}
The corresponding relation is generated by
\begin{equation} \label{eq:SvK-relation}
\begin{aligned}
  [\ldots, q_k, y_k, \refl, y_k, q_{k+1}, \ldots] \; & \leadsto \; [\ldots, q_k \ct q_{k+1}, \ldots] \\
  [\ldots, p_k, x_k, \refl, x_k, p_{k+1}, \ldots] \; & \leadsto \; [\ldots, p_k \ct p_{k+1}, \ldots] 
\end{aligned}
\end{equation}
The statement of the Seifert-van Kampen Theorem is that the set-quotient
$\setquotX{L_{b,b'}}$
is equivalent to the set-truncated type
$\trunc 0 {i_0(b) = i_0(b')}$ of equalities in the pushout.
Similarly to $L_{b,b'}$, there are three further types of lists where one or
both of the endpoints are in $C$ instead of $B$.
In general, we can define a type of lists $L_{x,x'}$ for $x,x' : B \uplus C$, and the
Seifert-van Kampen Theorem states that $\setquotX{L_{x,x'}}$ is equivalent
to $\trunc 0 {i(x) = i(x')}$, with $i : B \uplus C \to B \sqcup_A C$ given by $(i_0, i_1)$.

%The proof of \cref{thm:pushout-1-type} then proceeds as follows.
The construction of $\omega$ and $\varphi$ is essentially the same as before,
using the version of \cref{thm:lics2019-main} for pushouts available in \cite{KrausVonRaumer_pathSpaces}.
For the relation \eqref{eq:SvK-relation}, we can show the analogous to
\cref{lem:fg-Noether-NO-confl,lem:fg-cycle-condition}.
The analogous to \eqref{eq:diagram} is
\begin{equation} \label{eq:diagram2}
\begin{tikzpicture}[x=3.7cm,y=-2.5cm,baseline=(current bounding box.center)]
  \tikzset{arrow/.style={shorten >=0.1cm,shorten <=.1cm,-latex}}
\node (FA1) at (0,0) {$i(x) = i(x')$}; 
\node (Lquot) at (0,1) {$\quotX{L_{x,x'}}$}; 
\node (Lsquot) at (0,2) {$\setquotX{L_{x,x'}}$}; 
\node (FA2) at (1,1) {$i(x) = i(x')$}; 
\node (FAT) at (2,1) {$\trunc 1 {i(x) = i(x')}$}; 

\draw[arrow] (FA1) to node [right] {$\varphi$} (Lquot);
\draw[arrow] (FA1) to node [above right] {$\tproj 1 -$} (FAT);
\draw[arrow] (Lquot) to node [right] {$\tproj 0 -$} (Lsquot);
\draw[arrow] (Lquot) to node [above] {$\omega$} (FA2);
\draw[arrow] (FA2) to node [above] {$\tproj 1 -$} (FAT);
\draw[arrow,dashed] (Lsquot) to node [below right] {} (FAT);
% \draw[arrow] (C) to node [below] {$t$} (B);
% \draw[arrow] (D) to node [left] {$p$} (C);
% \draw[arrow] (A) to node [above] {$q$} (D);
\end{tikzpicture}
\end{equation}
There is a small subtlety: Since $A$ is a set and $B$, $C$ are $1$-types,
the type of lists $L_{x,x'}$ is a set.
This is important since it allows us (as before) to use the simpler version
of \cref{thm:dirsetquot}.
The above diagram shows that $\trunc 1 {i(x) = i(x')}$ is a set.
Choosing $x$ and $x'$ to be identical, this means that $\trunc 1 {\Omega(B\sqcup_A C,i(x))}$
is a set, which is equivalent to the statement that $\trunc 0 {\Omega^2(B\sqcup_A C,i(x))}$
(the second homotopy group) is trivial.
It follows by the usual induction principle of the pushout that 
$\trunc 0 {\Omega^2(B\sqcup_A C,z)}$ for arbitrary $z : B\sqcup_A C$ is trivial.
\end{proof}

\subsection{Relation between the applications}\label{sec:special}

Let us compare the lists used in the proofs of \cref{thm:fundamental-group-trivial}
and \cref{thm:pushout-1-type}.
We can observe that the former ones are a specialisation of the latter once with
the following restrictions:
The types $B$ and $C$ are each set to be the unit type $\unit$.
$A$ in \eqref{eq:pushout} is set to be $A' \uplus \unit$ when we want to consider the
free group on $A'$.
Then, the elements $b$, $b'$, $p_i$, and $q_i$ in \eqref{eq:SvK-lists} carry
no information, and a list entry of the right summand of $A' \uplus \unit$ indicates
a switch from the ``left to right'' or vice versa in $\List(A' \uplus A')$.
Under this transformation, the relations \eqref{eq:fg-relation} and
\eqref{eq:SvK-relation} correspond to each other.

Indeed, all of the following questions can be reduced to the question, whether
the pushout $B \sqcup_A C$ of 1-types $B$ and $C$ over a set $A$ is a 1-type.
The first one is the problem which we approximated in \cref{sec:free-groups}:
\begin{enumerate}[(i)]
% \item Does the free group on a set have trivial homotopy groups?
\item \label{item:1}
Is the free higher group on a set again a set?
\item \label{item:2}
Is the suspension of a set a $1$-type (open problem recorded in \cite[Ex 8.2]{hott-book})?
\item \label{item:3}
Given a $1$-type $B$ with a base point $b_0 : B$. If we add a single loop around $b_0$, it the type still a $1$-type?
\item \label{item:4}
Given $B$ and $b_0$ as above, imagine we add $M$-many loops around $b_0$ for some given set $M$. Is the resulting type still a $1$-type?
\item \label{item:5}
If we add a path (not necessarily a loop) to a $1$-type $B$, is the result still a $1$-type?
\item \label{item:6}
If we add an $M$-indexed family of paths to a $1$-type $B$ (for some set $M$), is the resulting type still a $1$-type?
\end{enumerate}

All questions are of the form:
\begin{quote}
 ``Can a change at level $1$
   induce a change at level $2$ or higher?''
\end{quote}
Only \ref{item:1} seems to be about level $0$ and $1$, but this is simply because we have taken a loop space.
With our \cref{thm:dirsetquot}, we can show an approximation for each of these questions analogously to \cref{thm:fundamental-group-trivial}.
This means that we show:
\begin{quote}
 ``A change at level $1$ does not induce a change at level $2$ (but we don't know about higher levels).''
\end{quote}
We can obtain all of these approximations by setting in \cref{thm:pushout-1-type}, respectively:
\begin{enumerate}[(i)]
 \item $B$, $C$ to both be the unit type $\unit$ and $A$ is $A' \uplus \unit$,
  where $A'$ is the set on which we want the free higher group
  (this is the usual translation from coequalisers to pushouts); 
 \item $B$ and $C$ both to be $\unit$;
 \item $A \defeq \unit$ and $C$ be the circle $\mathsf{S}^1$;
 \item $A \defeq \unit$ and $C \defeq M \times \mathsf{S}^1$;
 \item $A$ to be the 2-element type $\bool$ and $C \defeq \unit$;
 \item $A \defeq M \times \bool$ and $C \defeq M$.
\end{enumerate}

\section{Concluding remarks}

Our work has shown that methods from higher-dimensional rewriting can be used
to tackle some of the coherence problems appearing in homotopy type theory.
One limitation of our results so far is that they only make statements about one
specific dimension of the spaces which we consider.
It may very well be possible to generalise our method to show ``higher'' versions
of the same coherences, or, in other words,
better approximations of the same open problems.
For example, one could try to relax the condition of 1-truncatedness in \cref{thm:dirsetquot}
to 2-truncatedness.
For this generalisation, we expect that the proofs of 2-dimensional coherence would have to be
coherent as well.
It remains to see whether 3-dimensional rewriting theory, as proposed
by Mimram \cite{t3rt}, could be a useful vantage point to guide proofs
about these 3-dimensional coherence theorems.

The ``fully untruncated'' version of \cref{thm:pushout-1-type} would state that the pushout of $B \leftarrow A \to C$ (with $A$ a set and $B$, $C$ groupoids) is a groupoid, with the special case being that the free higher group on a set is a set.
We do not expect that this is provable in homotopy type theory.
Even if a workable version of $\infty$-rewriting theory is formulated, this looks like one of the problems requiring an infinite tower of coherences that, akin to semisimplicial types~\cite{herbelin:sst,Kraus:theroleofSST}, are not expected to be expressible in homotopy type theory. However, we conjecture that this can be done in \emph{two-level type theory} \cite{voe:hts,alt-cap-kra:two-level,ann-cap-kra:two-level} in the style of \cite{kraus:inftyCwFs}.
At the workshop \emph{Logique et Structures Sup\'erieures} at the Centre International de Rencontres Math\'ematiques (CIRM)
in February 2022, Christian Sattler outlined an argument to generalise the statement to any externally chosen truncation level \cite{sattler:cirm-abstract}.

Another line of research about potential generalisation is the question of whether
it is possible to weaken the assumption that the polygraph is terminating.
Instead, it could be enough to assume the \emph{decreasingness} of the relation,
which Vincent van Oostrom suggests as an alternative \cite{vincent-08}.

\subsection*{Acknowledgments}
We would first of all like to thank Vincent van Oostrom for the interest he took
in reading our LICS paper \cite{krausVonRaumer:wellfounded} and for the detailed
explanation on how our approach compared to the terminology and the results about
abstract rewriting systems.
The more modular presentation of our proofs, the generalisation of the results, and the clearer connection to higher dimensional rewriting would not have happened without his input.

We would also like to thank Carlo Angiuli who, like Vincent van Oostrom, pointed
us towards the work of Craig C. Squier on rewriting systems.

Furthermore, we would like to express our gratitude to many people whose comments
helped us to improve the presentation of both our LICS and the current paper.
We thank the Theory of Computation group at the University of Birmingham, the Type Theory group in Budapest, the Functional Programming lab in Nottingham, the participants and the organiser (Chuangjie Xu) of the \emph{FAUM} meeting (Herrsching 2019) and \emph{Types in Munich} (online 2020), 
the LICS'20 organizers, as well as the participants and the organisers (Dan Christensen and Chris Kapulkin) of \emph{HoTTEST} (Homotopy Type Theory Electronic Seminar Talks).
We especially acknowledge the remarks by Steve Awodey, Ulrik Buchholtz, Thierry Coquand, Eric Finster, Mart{\'i}n H{\"o}tzel Escard{\'o}, Egbert Rijke, Anders M\"ortberg, Chuangjie Xu and, in particular, Christian Sattler.

\bibliographystyle{plain}
\bibliography{master,rewriting_refs}

\end{document}